\theoremstyle{definition}
\newtheorem{definition}{Definition}
\newtheorem{lemma}[definition]{Lemma}
\newtheorem{corollary}[definition]{Corollary}
\newtheorem{theorem}[definition]{Theorem}
\theoremstyle{remark}
\newtheorem{remark}[definition]{Remark}
\title{
	On Up-to Context Techniques in the $\pi$-calculus
}
\author[1]{Enguerrand Prebet}
\affil[1]{Université de Lyon, ENS de Lyon, UCB Lyon 1, CNRS, INRIA, LIP}
\date{}
\begin{document}
	\maketitle
	
	\begin{abstract}
		We present a variant of the theory of compatible functions on relations,
		due to Sangiorgi and Pous. We show that the up-to context proof technique
		for bisimulation is compatible in this setting for two subsets of the
		pi-calculus: the asynchronous pi-calculus and a pi-calculus with
		immediately available names.
	\end{abstract}	
		
	Proving that two elements are bisimilar is usually done by relying on a
	relation that is a bisimulation and also contains the corresponding pair
	of elements.
	Up-to techniques provide a powerful way of simplifying such proofs, by
	requiring that a relation is only included in a bisimulation. One such example
	is the up-to context technique which allows us to remove contexts when playing
	along the bisimulation game.
	A general theory of those techniques is developed in 
	\cite{DBLP:conf/lics/Pous16},
        by focusing on
	the class of compatible functions on relations, 
	that are both sound up-to
	techniques and have nice compositional properties.
	
	In the $\pi$-calculus, up-to context is not a sound technique. In fact,
	bisimilarity is not even a congruence, due to the input prefix creating
	substitutions. However, in subcalculi like the Asynchronous $\pi$-calculus
	(\Api),
	bisimilarity is closed by substitution making it a congruence.
	Thus, the question of soundness of the up-to context technique
	for this subcalculus arises again.
	It is known 
	that up-to substitution
	is not compatible, and not even below the greatest compatible function
	(called the \emph{companion} in \cite{DBLP:conf/lics/Pous16}). 
	Thus it seemed that even if up-to substitution is sound, it could not
	be used in conjunction with other techniques without having to redo the 
	proofs all over again.
	
	Intuitively, the reason	why compatibility fails for up-to substitution is that
	compatibility assumes the knowledge about one step of transitions in the 
	bisimulation game, 
	while in 
	the proof of the congruence for \Api, the substitution is
        dealt with by looking
	at two successive transitions to deduce the behaviour of the
	next step of the program after substitution. More precisely,
	we need to look at two visible transitions to reason about an
	internal step. There is thus a distinction to be made between visible
	and internal steps
	which leads us to define the usual bisimulation function as the intersection
	of the two bisimulation functions represented by the diagrams below,
	with $\alpha$ ranging over visible actions.
	
	$$
	\begin{tikzcd}[column sep=tiny]
		P\arrow[d,"\alpha"]  & \R & Q\arrow[d,"\alpha"]\\
		P' & \R & Q'
	\end{tikzcd}\hspace{4cm}
	\begin{tikzcd}[column sep=tiny]
		P\arrow[d,"\tau"]  & \R & Q\arrow[d,"\tau"]\\
		P' & \R & Q'
	\end{tikzcd}
	$$
	
	In this paper, we propose a new notion of compatibility for
	a bisimulation function defined as $f\cap g$ (above $f$ would
	be the visible actions while $g$ would be the silent ones). 
	The key idea is to impose
	a stronger condition on $f$ and a weaker condition on $g$ 
	while preserving soundness.
	This allows us to define a framework where standard up-to techniques,
	including full up-to context, are both compatible and sound.
	
	We show this result for two subcalculi where bisimilarity is a congruence
	and similarly for weak bisimilarity.
	
	We thank Damien Pous and Davide Sangiorgi for helpful discussions about
	this work.

\section{Compatibility and Soundness}
In this section, we present some standard results about compatibility and their
usage to show the soundness of up-to techniques (Section~\ref{s:previous}).
Then we introduce \emph{compatibility with a function} that is a generalisation
of compatibility (Section~\ref{s:comp:with}). This notion still provides a 
soundness result while keeping nice properties of compatible functions
(like being composable).
\subsection{Previous work}\label{s:previous}
	Here, we recall standard results for compatibility from
	\cite{DBLP:conf/lics/Pous16}.
	
	\begin{definition}[Compatibility]
	$f$ is $g$-compatible if $f\circ g \subseteq g\circ f$
	\end{definition}

	\begin{definition}[Soundness]
		$f$ is $g$-sound via $f'$ if $f'$ is extensive and
		$\R \subseteq (g\circ f)(\R)$ implies $f'(\R) \subseteq b(f'(\R))$
	\end{definition}
	
	Compatible functions can be composed freely in a modular fashion.
	\begin{lemma}
		If $f_1,f_2$ are $g$-compatible, then $f_1 \circ f_2$ is $g$-compatible.
		
		If $g$ is monotone, we also have that $f_1 \cup f_2$ is $g$-compatible.
	\end{lemma}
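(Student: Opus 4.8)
The plan is a short diagram chase in the pointwise order on functions over relations, with care taken about where each monotonicity hypothesis is used. Throughout I rely on the standing assumption of this framework that up-to functions such as $f_1$ and $f_2$ are monotone; the extra hypothesis that $g$ is monotone is, as the statement already signals, only needed for the union case.

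For the composition part, I start from $f_2 \circ g \subseteq g \circ f_2$, which is the $g$-compatibility of $f_2$. Applying the monotone function $f_1$ to both sides (in the pointwise order) gives $f_1 \circ f_2 \circ g \subseteq f_1 \circ g \circ f_2$. Next, the $g$-compatibility of $f_1$, namely $f_1 \circ g \subseteq g \circ f_1$, instantiated at $f_2(\R)$ for every relation $\R$, yields $f_1 \circ g \circ f_2 \subseteq g \circ f_1 \circ f_2$. Chaining the two inclusions produces $(f_1 \circ f_2) \circ g \subseteq g \circ (f_1 \circ f_2)$, which is exactly the $g$-compatibility of $f_1 \circ f_2$.

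For the union part, fix a relation $\R$ and unfold: $(f_1 \cup f_2)(g(\R)) = f_1(g(\R)) \cup f_2(g(\R))$. By $g$-compatibility of $f_1$ and of $f_2$, this is contained in $g(f_1(\R)) \cup g(f_2(\R))$. This is the only place monotonicity of $g$ enters: since $f_i(\R) \subseteq f_1(\R)\cup f_2(\R)$ for $i=1,2$, monotonicity of $g$ gives $g(f_i(\R)) \subseteq g\bigl((f_1\cup f_2)(\R)\bigr)$, hence the union of the two sides is contained in $g\bigl((f_1\cup f_2)(\R)\bigr) = \bigl(g\circ(f_1\cup f_2)\bigr)(\R)$. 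As $\R$ was arbitrary, $(f_1\cup f_2)\circ g \subseteq g\circ(f_1\cup f_2)$.

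I do not expect a genuine obstacle here: the argument is elementary. The one subtlety worth flagging is the bookkeeping of the order — applying a function to both sides of a pointwise inclusion preserves it only when that function is monotone, which is why $f_1$ (but not $f_2$) must be monotone in the first part, whereas the union part instead needs $g$ monotone. If one does not wish to presume $f_1, f_2$ monotone implicitly, the clean alternative is to state that hypothesis explicitly, as is customary in this setting.
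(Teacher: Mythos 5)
Your proof is correct; the paper states this lemma without proof (it is recalled from Pous's work), and your argument is the standard one. Your remark about the hypotheses is apt: the composition part does require $f_1$ to be monotone (to apply it to both sides of $f_2\circ g\subseteq g\circ f_2$), a hypothesis the statement leaves implicit even though the paper is otherwise careful to flag monotonicity (e.g.\ it explicitly assumes $g$ monotone for the union case), and your union argument correctly isolates the one use of monotonicity of $g$, matching the last clause of Lemma~\ref{l:mon:cap}.
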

	
	Compatible functions are useful as they are sound up-to techniques.
	\begin{lemma}\label{l:comp:snd}
		If $f$ is monotone and $g$-compatible, then $f$ is $g$-sound via $f^\omega$.
	\end{lemma}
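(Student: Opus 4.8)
The plan is to expand $f^\omega = \bigcup_{n \geq 0} f^n$ (with the convention $f^0 = \mathrm{id}$) and reduce $g$-soundness to a short induction on the iteration index, using only that $f$ and $g$ are monotone together with the compatibility inequality $f \circ g \subseteq g \circ f$; in particular no continuity or idempotence property is needed. Extensivity of $f^\omega$ is immediate, since $f^0 = \mathrm{id}$ occurs in the union defining $f^\omega$, so $\R \subseteq f^\omega(\R)$ for every $\R$.

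For the main part, I would assume $\R \subseteq (g \circ f)(\R)$ and prove by induction on $n$ that $f^n(\R) \subseteq g\bigl(f^{n+1}(\R)\bigr)$. The base case $n = 0$ is exactly the hypothesis. For the inductive step, apply the monotone map $f$ to the induction hypothesis to get $f^{n+1}(\R) \subseteq f\bigl(g(f^{n+1}(\R))\bigr)$, and then instantiate $g$-compatibility at $f^{n+1}(\R)$, namely $f(g(X)) \subseteq g(f(X))$, to obtain $f^{n+1}(\R) \subseteq g\bigl(f^{n+2}(\R)\bigr)$.

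Finally I would take the union: for each $n$ we have $f^{n+1}(\R) \subseteq f^\omega(\R)$, so monotonicity of $g$ turns the inequality above into $f^n(\R) \subseteq g\bigl(f^\omega(\R)\bigr)$, and uniting over all $n$ yields $f^\omega(\R) \subseteq g\bigl(f^\omega(\R)\bigr)$, which is the inequality required for $g$-soundness of $f$ via $f^\omega$ (identifying $g$ with the behaviour function of the soundness definition). The lemma follows.

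I do not expect a genuine obstacle here, this being the classical ``compatible implies sound via $f^\omega$'' statement; the only points needing a little care are (i) keeping $f^0 = \mathrm{id}$ inside $f^\omega$, so that extensivity and the base case come for free, and (ii) noting that commuting $g$ past the union only uses monotonicity of $g$, so the argument is insensitive to any continuity assumption on the behaviour. An alternative, equally short route is to first show that each $f^n$, hence $f^\omega$, is $g$-compatible via the composition/union lemma above, and then conclude from $\R \subseteq g(f(\R)) \subseteq g(f^\omega(\R))$ by applying the compatible function $f^\omega$ and using $f^\omega \circ f^\omega = f^\omega$; I would nonetheless prefer the direct induction, as it avoids invoking the union part of that lemma and keeps the hypotheses used completely explicit.
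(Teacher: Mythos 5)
Your proof is correct, and it is the standard argument for this recalled result (the paper itself states the lemma without reproving it): the induction $f^n(\R) \subseteq g(f^{n+1}(\R))$, using only monotonicity of $f$ and the inclusion $f \circ g \subseteq g \circ f$, is exactly how compatibility is shown to imply soundness via $f^\omega$, and the alternative route you sketch through $g$-compatibility and idempotence of $f^\omega$ is the one usually found in the literature. The one point to flag is that your final step, pushing $g$ past the union $\bigcup_n f^{n+1}(\R) \subseteq f^\omega(\R)$, uses monotonicity of $g$, which is not among the stated hypotheses of the lemma; this is harmless here because all the behaviour functions in play ($b$, $b_{\alpha}$, $b_\tau$, \dots) are monotone, but it is worth making explicit since the paper is elsewhere careful to single out exactly when monotonicity of $g$ is needed.
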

	
	However, there are sound up-to techniques that are not exactly compatible.
	We can recover some of them using compatibility up-to.
	\begin{definition}[Compatible up-to]\label{d:comp:upto}
		$f$ is $g$-compatible up to $f'$ when $f'$ is expansive and
		$f \circ g \subseteq g \circ f' \circ f$.
	\end{definition}
	
	Compatible functions up to $f'$ can be related to compatible functions when $f'$
	is also compatible ensuring the soundness of such functions.
	\begin{lemma}
		If $f'$ is idempotent, monotone and expansive, $g$-compatible and $f$ is 
		$g$-compatible up to $f'$, then $f' \circ f$ is $g$-compatible.
	\end{lemma}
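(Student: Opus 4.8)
The plan is to verify directly the defining inequality of $g$-compatibility for the composite $f'\circ f$, namely $(f'\circ f)\circ g \subseteq g\circ(f'\circ f)$, by chaining the two compatibility hypotheses and closing the loop with idempotence. Concretely, I would proceed in three steps. First, since $f$ is $g$-compatible up to $f'$ we have $f\circ g \subseteq g\circ f'\circ f$; precomposing this inclusion with $f'$ and using that $f'$ is monotone yields
$$ f'\circ f\circ g \;\subseteq\; f'\circ g\circ f'\circ f. $$
Second, since $f'$ is $g$-compatible we have $f'\circ g \subseteq g\circ f'$; postcomposing this with $f'\circ f$ (which needs no monotonicity) yields
$$ f'\circ g\circ f'\circ f \;\subseteq\; g\circ f'\circ f'\circ f. $$
Finally, idempotence of $f'$ collapses $f'\circ f'$ to $f'$, so that $f'\circ f\circ g \subseteq g\circ f'\circ f = g\circ(f'\circ f)$, which is exactly the required inclusion.

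The derivation is short, so the only things to be careful about are bookkeeping. The direction in which monotonicity is used matters: precomposing an inclusion $A\subseteq B$ with a function $h$ to get $h\circ A\subseteq h\circ B$ uses monotonicity of $h$, whereas postcomposing, $A\circ h\subseteq B\circ h$, is automatic from the pointwise reading of inclusion between functions on relations. In the chain above we only ever precompose with $f'$, which is assumed monotone, so this is fine; in particular no monotonicity of $g$ is needed for this lemma. It is also worth noting that expansiveness of $f'$, although listed among the hypotheses (and already built into Definition~\ref{d:comp:upto}), plays no role in this argument — only idempotence, monotonicity and $g$-compatibility of $f'$ are actually invoked.

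I do not anticipate a genuine obstacle here: the statement is essentially the observation that the two compatibility inequalities compose in the right order once one absorbs the extra copy of $f'$ via idempotence. If anything, the subtle point to state explicitly (so the reader can check it) is the asymmetry between pre- and postcomposition with respect to monotonicity, since this is what pins down where the hypothesis "$f'$ monotone" is needed. The payoff, combining this lemma with Lemma~\ref{l:comp:snd}, is that any $f$ which is merely $g$-compatible up to a well-behaved $f'$ still gives rise to a sound up-to technique, via $(f'\circ f)^\omega$.
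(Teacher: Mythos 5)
Your proof is correct and is essentially identical to the paper's: precompose the up-to inclusion with the monotone $f'$, push $f'$ past $g$ using $g$-compatibility of $f'$, and collapse $f'\circ f'$ by idempotence. The side remarks (expansiveness is unused; only precomposition needs monotonicity) are accurate and consistent with how the paper annotates the same derivation.
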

	\begin{proof}
		\begin{align*}
			f \circ g &\subseteq g \circ f' \circ f\\
			f' \circ f \circ g &\subseteq f' \circ g \circ f' \circ f & f'\text{ is monotone}\\
			&\subseteq g \circ f' \circ f' \circ f & f'\text{ is $g$-compatible}\\
			&\subseteq g \circ f' \circ f & f'\text{ is idempotent}
		\end{align*}
	\end{proof}
	
	In fact, compatible functions are a subset of compatible functions up to $f'$
	for any expansive $f'$.
	\begin{remark}
		If $f'$ is expansive, $g$ is monotone and $f$ is $g$-compatible, 
		then $f$ is also $g$-compatible up to $f'$.
	\end{remark}
	
\subsection{Compatibility with a function}\label{s:comp:with}
	Unfortunately, substitution is not a compatible function not even up to some 
	compatible $f'$.
	To see why, we call $b$ the bisimulation function associated to bisimilarity 
	for the $\pi$-calculus (see Section~\ref{sec:upto:pi}). We need the
	following lemma where $\top$ is the universal relation:
	\begin{lemma}\label{l:damien}
		Taking notations from CCS, we have $(a.\outC{c} | \outC{c}, a|\outC{c})
		\in b^2(\top)$
		but\\
		$((a.\outC{c} | \outC{c})\sub{a}{c}, 
		(a|\outC{c})\sub{a}{c})\notin b^2(\top)$
	\end{lemma}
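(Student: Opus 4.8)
The plan is a direct computation with the operational semantics, using the characterisation of $b(\top)$ and $b^2(\top)$ obtained by unfolding the definition of $b$ against the universal relation. Since $(P,Q)\in b(\top)$ exactly when $P$ and $Q$ enable the same set of immediate actions, we get that $(P,Q)\in b^2(\top)$ exactly when $P$ and $Q$ enable the same immediate actions \emph{and}, for every action $\alpha$, each $\alpha$-derivative of one side can be matched by an $\alpha$-derivative of the other side that again enables the same immediate actions. (The $\pi$-specific features — value passing, name extrusion — play no role here, since the terms in the statement live in the pure-synchronisation fragment.)

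First I would enumerate the transitions of $P_0 = a.\outC{c}\mid\outC{c}$ and $Q_0 = a\mid\outC{c}$. Neither has a $\tau$-transition, because $a$ and $\outC{c}$ are not complementary; each has exactly one $a$-transition and one $\outC{c}$-transition, landing in $\outC{c}\mid\outC{c}$ resp. $a.\outC{c}$ for $P_0$, and in $\outC{c}$ resp. $a$ for $Q_0$. Then I would check that the matching derivatives lie in $b(\top)$: $\outC{c}\mid\outC{c}$ and $\outC{c}$ both offer only the action $\outC{c}$, and $a.\outC{c}$ and $a$ both offer only the action $a$. Since the roles of $P_0$ and $Q_0$ are symmetric, this yields $(P_0,Q_0)\in b^2(\top)$.

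For the second part, applying $\sub{a}{c}$ identifies the two names, so both processes become of the shape $x.\outC{x}\mid\outC{x}$ and $x\mid\outC{x}$ for a single name $x$. Now a synchronisation is possible on each side: $x.\outC{x}\mid\outC{x}\xrightarrow{\tau}\outC{x}$ and $x\mid\outC{x}\xrightarrow{\tau}\mathbf{0}$, and these are the \emph{only} $\tau$-transitions. Hence the unique candidate matching $P_0\sub{a}{c}\xrightarrow{\tau}\outC{x}$ is $Q_0\sub{a}{c}\xrightarrow{\tau}\mathbf{0}$; but $\outC{x}$ offers the action $\outC{x}$ while $\mathbf{0}$ offers nothing, so $(\outC{x},\mathbf{0})\notin b(\top)$. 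Therefore $(P_0\sub{a}{c},Q_0\sub{a}{c})\notin b^2(\top)$.

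There is no genuine obstacle; the only points requiring care are (i) being exhaustive about transitions — in particular noting that before the substitution there is no communication, so the offending $\tau$ only appears afterwards — and (ii) the bookkeeping that locates the failure at depth two rather than at depth one (indeed $(P_0\sub{a}{c},Q_0\sub{a}{c})$ is still in $b(\top)$, since both sides enable $\{x,\outC{x},\tau\}$). This is exactly the phenomenon flagged in the introduction: the substitution produced by the input prefix is only felt after an internal step, i.e. after \emph{two} transitions, which is why $b^2$ is the right depth at which the counterexample lives.
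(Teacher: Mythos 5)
Your proof is correct and follows essentially the same route as the paper: check that before the substitution the two processes match each other's (non-$\tau$) transitions with derivatives that enable the same actions, and that after identifying $a$ and $c$ the newly created $\tau$-transitions lead to $\outC{x}$ versus $\nil$, which differ already at depth one. The extra observations (exhaustiveness of the transition enumeration, and that the substituted pair still lies in $b(\top)$) are accurate and only sharpen the paper's argument.
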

	\begin{proof}
		First, by definition $(\outC{c},\nil)\in\top$ so $(a.\outC{c},a),
		(\outC{c}|\outC{c},\outC{c})\in b(\top)$. Therefore, 
		$(a.\outC{c} | \outC{c}, a|\outC{c}) \in b^2(\top)$.

		Then, we have $(a.\outC{a} | \outC{a})\sub{a}{c} = 
		a.\outC{a} | \outC{a} \stra{\tau} \outC{a}$ and the only transition
		that the second process can do is $(a | \outC{a})\sub{a}{c} = 
		a | \outC{a} \stra{\tau} \nil$.
		Thus, as $\outC{a} \stra{a}$ but $\nil \not\stra{a}$, we have 
		$(\outC{a}, \nil)\notin b(\top)$, meaning that 
		$(a.\outC{a} | \outC{a}, a | \outC{a})\notin b^2(\top)$.
	\end{proof}
	Being compatible up-to some compatible function implies
	being smaller than some other compatible function ($f \subseteq f'\circ f$ 
	in Lemma~\ref{d:comp:upto}). So it is enough to show that substitution 
	is not included in the companion $t$, 
	which is the greatest compatible function. As $t$ is
	compatible, we have that $t \circ b^2(\top) \subseteq b^2(\top)$.
	Thus, Lemma~\ref{l:damien} implies that $\texttt{sub}(b^2(\top)) \not\subseteq
	t(b^2(\top))$.
	This entails that up-to context is not compatible in the $\pi$-calculus.
	
	This example is asynchronous and as we will see later, up-to substitution
	is sound for \Api. Our goal is to adapt the notion of compatibility so that
	it captures up-to substitution. 
	
	The main idea is to proceed in two steps. We first focus on visible actions and
	establish compatibility with respect to the corresponding bisimulation function.
	Then, we exploit that result to prove a weaker version of compatibility on
	internal actions, which we define now.
	
	\begin{definition}[Compatibility with]
		We say that $f$ is $h$-compatible with $g$ (or $g,h$-compatible)
		if $f \circ (g \cap h) \subseteq h \circ f$.

	\end{definition}
	Intuitively, with the knowledge we have about $g$, we are able to prove a sort of
	compatibility result on $h$.
	
	The main use of ``compatibility with'' is to prove $g\cap h$-soundness,
	in which case we do not need to prove exactly $g,h$-compatibility. 
	For instance $g$-compatibility and $g^2,h$-compatibility
	is sufficient (see Thereom~\ref{th:comp:with:snd}). 
	
	In Section~\ref{sec:usage}, we exploit this approach taking $g$ as the
	bisimulation function restricted to visible actions and $h$ the one restricted
	to internal actions.

	``Compatibility with'' can be linked back to standard compatibility as follows:
	\begin{lemma}\label{l:backto:comp}
		If $f$ is $g\cap h$-compatible, then $f$ is $g,h$-compatible and $h,g$-compatible.
		
		If $f$ is monotone, $g$-compatible and $g,h$-compatible, then 
		$f$ is $g\cap h$-compatible.
	\end{lemma}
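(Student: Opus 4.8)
The plan is to treat both directions as immediate consequences of two observations about intersections of functions on relations, plus monotonicity. Read as a function, $g\cap h$ sends a relation $R$ to $g(R)\cap h(R)$; hence for every $R$ we have $(g\cap h)(R)\subseteq g(R)$ and $(g\cap h)(R)\subseteq h(R)$. Instantiating $R$ at $f(R)$ gives, with no hypothesis at all, the function inclusions $(g\cap h)\circ f\subseteq g\circ f$ and $(g\cap h)\circ f\subseteq h\circ f$; dually, applying $f$ to $(g\cap h)(R)\subseteq g(R)$ and using that $f$ is monotone gives $f\circ(g\cap h)\subseteq f\circ g$.

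For the first implication, assume $f$ is $g\cap h$-compatible, i.e. $f\circ(g\cap h)\subseteq(g\cap h)\circ f$. Composing on the right with the first observation, $f\circ(g\cap h)\subseteq(g\cap h)\circ f\subseteq h\circ f$, which is exactly $g,h$-compatibility; symmetrically $f\circ(g\cap h)\subseteq(g\cap h)\circ f\subseteq g\circ f$, and since $g\cap h=h\cap g$ this is precisely $h,g$-compatibility. Note this direction uses nothing beyond unfolding the meaning of $\cap$ on functions — monotonicity is not needed.

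For the second implication, assume $f$ monotone, $g$-compatible and $g,h$-compatible. The inclusion $f\circ(g\cap h)\subseteq h\circ f$ is exactly the $g,h$-compatibility hypothesis, so it only remains to establish $f\circ(g\cap h)\subseteq g\circ f$. This is the chain $f\circ(g\cap h)\subseteq f\circ g$ (monotonicity of $f$ applied to $(g\cap h)(R)\subseteq g(R)$) $\subseteq g\circ f$ ($g$-compatibility). Intersecting the two resulting inclusions yields $f\circ(g\cap h)\subseteq(g\circ f)\cap(h\circ f)=(g\cap h)\circ f$, which is $g\cap h$-compatibility.

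I do not expect a genuine obstacle here: the statement is essentially a bookkeeping exercise on composition and intersection of monotone functions. The only points worth stating carefully are that it is monotonicity of $f$ — not of $g$ or $h$ — that is invoked, and that the extra $g$-compatibility hypothesis in the second part is genuinely needed (the half of $g\cap h$-compatibility landing in $g\circ f$ does not follow from $g,h$-compatibility alone).
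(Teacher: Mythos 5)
Your proof is correct and takes essentially the same route as the paper: the first implication is the same two-line weakening of $(g\cap h)\circ f$ to $g\circ f$ and $h\circ f$, and the second establishes the two inclusions $f\circ(g\cap h)\subseteq g\circ f$ (via monotonicity of $f$ and $g$-compatibility, which is exactly the paper's use of Lemma~\ref{l:mon:cap} after writing $g\cap h=g\cap(g\cap h)$) and $f\circ(g\cap h)\subseteq h\circ f$ (the hypothesis), then intersects them. Your added remarks — that monotonicity is only needed for $f$ and only in the second direction, and that the extra $g$-compatibility hypothesis is genuinely required — are accurate.
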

\begin{proof}~
	\begin{itemize}
		\item $f \circ (g\cap h) \subseteq (g\cap h) \circ f \subseteq g \circ f$ (and $h \circ f$ respectively)
		\item The following proof uses Lemma~\ref{l:mon:cap}, presented below.
		\begin{align*}
			f \circ (g \cap h) &= f \circ (g \cap (g \cap h))\\
			&\subseteq f \circ g \cap f \circ (g \cap h) & \text{by Lemma~\ref{l:mon:cap}}\\
			&\subseteq g \circ f \cap f \circ (g \cap h) & f \text{ is }
			g\text{-compatible}\\
			&\subseteq g \circ f \cap h \circ f & f \text{ is } g,h\text{-compatible}\\
			&= (g \cap h)\circ f
		\end{align*}
		
	\end{itemize}
\end{proof}

	If $f$ is a compatible function for $g\cap h$, like it is the case in
	Section~\ref{sec:usage}, we cannot show directly that $f$ is both $g$-compatible
	and $g,h$-compatible. Indeed, we only have the latter by Lemma~\ref{l:backto:comp}.
	We show in Section~\ref{sec:usage} that in the case of the $\pi$-calculus, things
	can be handled smoothly.
		
	\medskip
	
	Since we often use monotone functions and intersections, we rely on the
	following lemma:
	\begin{lemma}\label{l:mon:cap}
		If $f$ is monotone, then for any set $A,B$, $f(A\cap B) \subseteq f(A)\cap f(B)$.
		Similarly, for any function $g,h$, $f \circ (g \cap h) \subseteq f \circ h \cap f \circ g$.
		
		Conversely, for any $g,h$, $f \circ g \cup f \circ h \subseteq f \circ (g\cup h)$.
	\end{lemma}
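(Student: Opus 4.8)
The plan is to reduce everything to the defining property of monotonicity — that $X \subseteq Y$ implies $f(X) \subseteq f(Y)$ — applied to the obvious set-theoretic inclusions between an intersection (resp.\ union) and its components.

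First I would settle the set-level statement. Since $A \cap B \subseteq A$ and $A \cap B \subseteq B$, monotonicity of $f$ gives $f(A\cap B) \subseteq f(A)$ and $f(A\cap B) \subseteq f(B)$; intersecting the two right-hand sides yields $f(A\cap B) \subseteq f(A) \cap f(B)$. Then the statement about functions is just the pointwise instantiation of this: for every relation $\R$, taking $A = g(\R)$ and $B = h(\R)$ and using that $(g\cap h)(\R) = g(\R) \cap h(\R)$ by definition, we get $f\bigl((g\cap h)(\R)\bigr) = f\bigl(g(\R)\cap h(\R)\bigr) \subseteq f(g(\R)) \cap f(h(\R)) = \bigl((f\circ g)\cap(f\circ h)\bigr)(\R)$, which is exactly $f \circ (g \cap h) \subseteq f\circ h \cap f\circ g$.

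For the converse inclusion I would argue pointwise again: for every $\R$ we have $g(\R) \subseteq g(\R) \cup h(\R) = (g\cup h)(\R)$ and likewise $h(\R) \subseteq (g\cup h)(\R)$, so monotonicity gives $f(g(\R)) \subseteq f\bigl((g\cup h)(\R)\bigr)$ and $f(h(\R)) \subseteq f\bigl((g\cup h)(\R)\bigr)$; taking the union of the two left-hand sides gives $\bigl((f\circ g)\cup(f\circ h)\bigr)(\R) \subseteq f\bigl((g\cup h)(\R)\bigr) = \bigl(f\circ(g\cup h)\bigr)(\R)$.

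No step presents any genuine obstacle here; this is purely a matter of unfolding definitions. The only point worth keeping in mind is that $g\cap h$, $g\cup h$, $f\circ g$ and so on are all to be read as operations on relations defined pointwise, so that the set-level inclusions transfer verbatim — and that the ``converse'' half still uses monotonicity of $f$ (it fails for arbitrary $f$), which is the standing hypothesis of the lemma.
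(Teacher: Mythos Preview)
Your argument is correct and is exactly the standard one: apply monotonicity to the inclusions $A\cap B\subseteq A$, $A\cap B\subseteq B$ (resp.\ $g(\R),h(\R)\subseteq (g\cup h)(\R)$) and then take the intersection (resp.\ union) of the resulting inclusions, lifting the set-level fact to functions pointwise. The paper itself gives no proof of this lemma, treating it as routine; your write-up simply fills in that routine verification.
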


	We can also build the composition and union of $g,h$-compatible functions under 
	mild assumptions.
	
	\begin{lemma}\label{l:bi:compose}
		If $f_1,f_2$ are $g$-compatible, $g,h$-compatible and monontone,
		then $f_1 \circ f_2$ is $g,h$-compatible.
	\end{lemma}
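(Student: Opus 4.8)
The plan is to unfold the definition of ``compatibility with'': we must establish
\[
f_1 \circ f_2 \circ (g \cap h) \;\subseteq\; h \circ f_1 \circ f_2 .
\]
The strategy mirrors the second bullet of the proof of Lemma~\ref{l:backto:comp}: first I reconstruct a full $g \cap h$ on the right-hand side of $f_2$, and then I feed that into the $g,h$-compatibility of $f_1$. Everything is a short diagram chase relying only on monotonicity and Lemma~\ref{l:mon:cap}, so I would present it as a displayed \texttt{align*} computation in the style of the earlier proofs of this section.

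First I would extract two inclusions from the hypotheses on $f_2$. Since $f_2$ is $g,h$-compatible, $f_2 \circ (g \cap h) \subseteq h \circ f_2$. Since $g \cap h \subseteq g$ and $f_2$ is monotone, $f_2 \circ (g \cap h) \subseteq f_2 \circ g$, and since $f_2$ is $g$-compatible, $f_2 \circ g \subseteq g \circ f_2$. Intersecting the two conclusions and using that $(g \circ f_2) \cap (h \circ f_2) = (g \cap h) \circ f_2$ holds pointwise, I get
\[
f_2 \circ (g \cap h) \;\subseteq\; (g \cap h) \circ f_2 .
\]

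For the second step I would apply the monotone function $f_1$ to this inclusion, obtaining $f_1 \circ f_2 \circ (g \cap h) \subseteq f_1 \circ (g \cap h) \circ f_2$, and then use that $f_1$ is $g,h$-compatible, i.e.\ $f_1 \circ (g \cap h) \subseteq h \circ f_1$, to conclude $f_1 \circ f_2 \circ (g \cap h) \subseteq h \circ f_1 \circ f_2$, as desired.

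The only subtlety — and the reason the statement asks for $g$-compatibility \emph{in addition to} $g,h$-compatibility of the $f_i$ — is that $f_2$'s $g,h$-compatibility alone yields $h \circ f_2$, which has discarded the $g$-component, so it cannot be combined directly with $f_1$'s $g,h$-compatibility (which requires a genuine $g \cap h$ on its input). The $g$-compatibility of $f_2$ is exactly what restores that $g$-part so that the intersection $g \cap h$ can be re-formed before passing through $f_1$. Once this is noticed there is no real obstacle; the argument is routine.
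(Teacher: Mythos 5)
Your argument is correct and is essentially the paper's: the paper simply invokes Lemma~\ref{l:backto:comp} to upgrade both $f_1$ and $f_2$ to $g\cap h$-compatibility and then composes, whereas you inline that lemma's proof for $f_2$ (reconstructing $g\cap h$ after $f_2$ via its $g$-compatibility) and then apply $f_1$'s $g,h$-compatibility directly. The key step — using $g$-compatibility of $f_2$ to restore the $g$-component before feeding $f_1$ — is exactly the content of Lemma~\ref{l:backto:comp} that the paper relies on.
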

	\begin{proof}
		By Lemma~\ref{l:backto:comp}, $f_1$ and $f_2$ are $g\cap h$-compatible, so $f_1
		\circ f_2$ is too, meaning that $f_1 \circ f_2$ is $g,h$-compatible.
	\end{proof}
	
	\begin{lemma}
		If $f_1,f_2,h$ are monotone and $f_1,f_2$ are both $g,h$-compatible,
		then $f_1\cup f_2$ is $g,h$-compatible.
	\end{lemma}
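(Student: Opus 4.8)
The plan is to unfold the definition of $g,h$-compatibility and reduce the claim to the two hypotheses together with the monotonicity of $h$. We must show $(f_1 \cup f_2) \circ (g \cap h) \subseteq h \circ (f_1 \cup f_2)$. Since the union of functions is taken pointwise, the left-hand side is literally $f_1 \circ (g\cap h) \cup f_2 \circ (g\cap h)$, so it suffices to bound each summand and recombine.

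Concretely, I would write the following chain of inclusions:
\begin{align*}
	(f_1 \cup f_2) \circ (g \cap h) &= \big(f_1 \circ (g\cap h)\big) \cup \big(f_2 \circ (g \cap h)\big)\\
	&\subseteq (h \circ f_1) \cup (h \circ f_2) & f_1, f_2 \text{ are } g,h\text{-compatible}\\
	&\subseteq h \circ (f_1 \cup f_2) & h \text{ monotone, Lemma~\ref{l:mon:cap}}.
\end{align*}
The first line is purely the definition of the union of two functions, the second line is the two instances of $g,h$-compatibility unioned together, and the last line is the ``conversely'' part of Lemma~\ref{l:mon:cap} applied with $h$ in the role of the monotone function and $f_1, f_2$ in the roles of the two functions being unioned.

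There is no real obstacle here; the argument is a three-line chain. The one subtlety worth flagging is the asymmetric use of monotonicity: splitting the union out of $f_1 \cup f_2$ on the left and unioning the two sub-compatibility inclusions both cost nothing, but recombining $(h \circ f_1) \cup (h \circ f_2)$ under a single $h$ genuinely requires $h$ to be monotone, which is exactly why that hypothesis appears in the statement. By contrast, monotonicity of $f_1$ and $f_2$ is not used in this argument; it is presumably assumed only for uniformity with the neighbouring lemmas (and to ensure that $f_1 \cup f_2$ is itself monotone, which is what one will want when applying the result later).
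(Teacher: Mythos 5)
Your proof is correct and is essentially identical to the paper's: the same three-line chain, splitting the union on the left, applying the two compatibility hypotheses, and recombining under $h$ via the monotonicity half of Lemma~\ref{l:mon:cap}. Your side remark that monotonicity of $f_1,f_2$ is not actually needed here is accurate.
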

	\begin{proof}
		\begin{align*}
			(f_1 \cup f_2) \circ (g \cap h) &= f_1 \circ (g \cap h)\cup 
			f_2 \circ (g \cap h)\\
			&\subseteq h \circ f_1 \cup h \circ f_2 & f_1,f_2\text{ are $g,h$-compatible}\\
			&\subseteq h \circ (f_1 \cup f_2)	& \text{by Lemma~\ref{l:mon:cap}}
		\end{align*}
	\end{proof}
	\begin{corollary}
		If $f,h$ are monotone, $g$-compatible and $g,h$-compatible,
		then $f^\omega$ is $g,h$-compatible.
	\end{corollary}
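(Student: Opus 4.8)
The plan is to reduce the statement to the composition and union lemmas just proved, via an induction on the finite iterates of $f$. Concretely, I would first show by induction on $n$ that for every $n \ge 0$ the iterate $f^n$ is monotone, $g$-compatible, and $g,h$-compatible, and then observe that $f^\omega = \bigcup_{n\ge 0} f^n$ is a union of such functions.

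For the induction, the base case $f^0 = \mathrm{id}$ is immediate: $\mathrm{id}$ is monotone, $\mathrm{id}\circ g = g = g\circ\mathrm{id}$ gives $g$-compatibility, and $\mathrm{id}\circ(g\cap h) = g\cap h \subseteq h = h\circ\mathrm{id}$ gives $g,h$-compatibility. For the inductive step, write $f^{n+1} = f\circ f^n$. Since $f$ and $f^n$ are both $g$-compatible, $f^{n+1}$ is $g$-compatible by the composition lemma of Section~\ref{s:previous}; since moreover $f$ and $f^n$ are monotone, $g$-compatible and $g,h$-compatible, $f^{n+1}$ is $g,h$-compatible by Lemma~\ref{l:bi:compose}; and monotonicity of $f^{n+1}$ is clear. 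This establishes the claim for all $n$.

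It then remains to pass from the family $(f^n)_{n\ge 0}$ to its union. The union lemma above is stated for two functions, but its proof does not use finiteness of the index set: for any family $(f_i)_{i\in I}$ of monotone $g,h$-compatible functions with $h$ monotone, one has $(\bigcup_i f_i)\circ(g\cap h) = \bigcup_i\bigl(f_i\circ(g\cap h)\bigr) \subseteq \bigcup_i\bigl(h\circ f_i\bigr) \subseteq h\circ(\bigcup_i f_i)$, where the last inclusion is the converse part of Lemma~\ref{l:mon:cap} applied to an arbitrary union (again insensitive to the size of $I$). Applying this to $(f^n)_{n\ge 0}$, using that $h$ is monotone, yields that $f^\omega = \bigcup_{n\ge 0} f^n$ is $g,h$-compatible.

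The only genuine obstacle — and it is a minor one — is precisely this last observation that Lemma~\ref{l:mon:cap} and the union lemma extend verbatim to arbitrary (in particular countable) unions; everything else is a routine induction built on the already-established composition lemma and Lemma~\ref{l:bi:compose}. If one prefers the variant $f^\omega = \bigcup_{n\ge 0}(\mathrm{id}\cup f)^n$, the same argument applies since $\mathrm{id}\cup f$ is itself monotone, $g$-compatible and $g,h$-compatible (by the base case together with the binary union lemma), so all its iterates are as well.
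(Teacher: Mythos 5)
Your proof is correct and follows exactly the route the paper intends: the corollary is stated as an immediate consequence of Lemma~\ref{l:bi:compose} and the binary union lemma, and your induction on the iterates $f^n$ plus the observation that the union argument extends to countable families (using monotonicity of $h$ for the final inclusion) is precisely the missing expansion of that claim.
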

	
	Here, we state the theorem that is the equivalent of Lemma~\ref{l:comp:snd} for
	compatibility with. Intuitively,
	if $f$ is $g$-compatible, then we can use $g$ any number of times to show 
	the compatibility of $f$ with respect to $h$ (i.e $f$ is $g^m,h$-compatible), 
	and this is enough to prove that $f$ is a sound up-to technique for $g\cap h$.
	\begin{theorem}\label{th:comp:with:snd}
		If $f,g,h$ are monotone and $f$ is both $g$-compatible, $g^m,h$-compatible 
		(with $m \geq 1$) 
		then $f$ is $g\cap h$-sound via $f'^\omega$ with $f'=\bigcup_{i\leq m}f^i$.
	\end{theorem}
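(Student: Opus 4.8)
The plan is to verify the two clauses of $g\cap h$-soundness for the function $F:={f'}^{\omega}$ of the statement: that $F$ is extensive, which is immediate since an $\omega$-iterate contains the identity, and that $\R\subseteq((g\cap h)\circ f)(\R)$ forces $F(\R)\subseteq(g\cap h)(F(\R))=g(F(\R))\cap h(F(\R))$. I would handle the two factors of this last intersection separately, since the hypotheses only give $g$-compatibility of $f$ for the first but $g^m,h$-compatibility for the second.

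First I would collect routine closure facts. As $f$ is monotone and $g$-compatible, so is each $f^i$ (composition of $g$-compatible functions) and hence so is $f'$ (union of $g$-compatible functions, using monotonicity of $g$); moreover $f=f^1\subseteq f'\subseteq F$ since $m\geq 1$, and $f^i\circ F\subseteq F$ for every $i$ since $f\subseteq f'$ and $f'\circ{f'}^{\omega}\subseteq{f'}^{\omega}$. The key observation is that \emph{$f$ is in fact $(g^m\cap h)$-compatible}: from $g^m\cap h\subseteq g^m$ and monotonicity of $f$ we get $f\circ(g^m\cap h)\subseteq f\circ g^m\subseteq g^m\circ f$ by iterating $g$-compatibility $m$ times; combined with $f\circ(g^m\cap h)\subseteq h\circ f$, which is precisely $g^m,h$-compatibility, this gives $f\circ(g^m\cap h)\subseteq(g^m\circ f)\cap(h\circ f)=(g^m\cap h)\circ f$. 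Then each $f^i$ is $(g^m\cap h)$-compatible as well, being a composition of such functions.

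For the $g$-factor: from $\R\subseteq(g\cap h)(f(\R))\subseteq g(f(\R))\subseteq g(f'(\R))$ together with monotonicity and $g$-compatibility of $f'$, Lemma~\ref{l:comp:snd} gives that $f'$ is $g$-sound via ${f'}^{\omega}$, so $F(\R)\subseteq g(F(\R))$; iterating with monotonicity of $g$, $F(\R)\subseteq g^m(F(\R))$. For the $h$-factor I would prove $(f')^k(\R)\subseteq h(F(\R))$ for all $k\geq 0$ by induction on $k$ and then take the union over $k$. The base case is $\R\subseteq(g\cap h)(f(\R))\subseteq h(f(\R))\subseteq h(F(\R))$. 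For the step, $(f')^k(\R)\subseteq F(\R)\subseteq g^m(F(\R))$ by the $g$-factor, so with the induction hypothesis $(f')^k(\R)\subseteq(g^m\cap h)(F(\R))$; applying $f'=\bigcup_{i\leq m}f^i$, then monotonicity of $f$, then $(g^m\cap h)$-compatibility of each $f^i$, then $f^i\circ F\subseteq F$ with monotonicity of $h$, one obtains $(f')^{k+1}(\R)\subseteq\bigcup_{i\leq m}f^i\bigl((g^m\cap h)(F(\R))\bigr)\subseteq\bigcup_{i\leq m}(g^m\cap h)\bigl(f^i(F(\R))\bigr)\subseteq h(F(\R))$. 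Combining the two factors yields $F(\R)\subseteq(g\cap h)(F(\R))$, completing the argument.

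The step I expect to be the real obstacle — and the one that dictates the shape of the statement — is the $h$-factor: $g^m,h$-compatibility cannot be iterated directly through the successive layers of $f'$, because commuting a single $f$ past $h$ leaves that $f$ stranded between $h$ and the relation. The remedy is to upgrade $g$-compatibility plus $g^m,h$-compatibility into genuine $(g^m\cap h)$-compatibility of $f$ — which works precisely because the $m$ copies of $g$ are available to absorb the commutations produced by $g$-compatibility — and the $g$-factor has to be settled first, since it provides the $g^m(F(\R))$ half of the invariant $(g^m\cap h)(F(\R))$ that powers the $h$-induction.
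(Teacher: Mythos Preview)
Your proof is correct and rests on the same key observation as the paper's: $f$ (hence each $f^i$ and $f'$) is genuinely $(g^m\cap h)$-compatible, obtained by combining $g$-compatibility with $g^m,h$-compatibility via Lemma~\ref{l:backto:comp}. The only difference is packaging: instead of your explicit induction on $k$ for the $h$-factor, the paper first establishes $\R\subseteq((g^m\cap h)\circ f')(\R)$ directly from $\R\subseteq(g\circ f)^m(\R)\subseteq(g^m\circ f')(\R)$ together with $\R\subseteq(h\circ f')(\R)$, and then applies Lemma~\ref{l:comp:snd} once to the $(g^m\cap h)$-compatible function $f'$ to get $f'^\omega(\R)\subseteq(g^m\cap h)(f'^\omega(\R))$ in a single stroke.
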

	\begin{proof}
		First, notice that $\R \subseteq ((g\cap h)\circ f)(\R)$ implies 
		$\R \subseteq (g\circ f)(\R)$ and $\R \subseteq (h\circ f)(\R)$.
				
		On one side, we then have $\R \subseteq (g \circ f)^m (\R)$, meaning
		$\R \subseteq (g^m \circ f^m)(\R) $ by compatibility and so
		$\R \subseteq (g^m \circ f')(\R)$.
		On the other side, we have that $\R \subseteq (h\circ f')(\R)$ as $h$ is monotone
		and $m \geq 1$.
		Therefore, $\R \subseteq ((g^m \cap h) \circ f')(\R)$.
		
		By Lemma~\ref{l:backto:comp}, $f$ is $g^m\cap h$-compatible, so
		$f'$ is too. Thus $f'$ is $g^m\cap h$-sound via $f'^\omega$ meaning
		$f'^\omega(\R) \subseteq (g^m\cap h)(f'^\omega(\R))$.
		Therefore we have $f'^\omega(\R) \subseteq h(f'^\omega(\R))$.
		As $f$ is $g$-compatible, we can also prove that $f'^\omega(\R) \subseteq g(f'^\omega(\R))$.
		
		By combining the two, we obtain that 
		$f'^\omega(\R) \subseteq (g\cap h)(f'^\omega(\R))$.

\end{proof}

To prove $g\cap h$-soundness, we are thus interested in showing $g$-compatibility,
and $g^m,h$-compatibility. By showing a weakening of compatibility with, we are able
to compose functions with different $m$.
\begin{lemma}
	If $f$ is monotone, is $g,h$-compatible, and $g' \subseteq g$, 
	then $f$ is $g',h$-compatible.
\end{lemma}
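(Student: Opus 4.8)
The plan is to unfold the definitions and chase a short inclusion. Recall that $f$ is $g,h$-compatible means $f \circ (g \cap h) \subseteq h \circ f$, and we want $f \circ (g' \cap h) \subseteq h \circ f$. The obvious move is to observe that since $g' \subseteq g$, we have $g' \cap h \subseteq g \cap h$, and then to apply monotonicity of $f$ to propagate this through the composition.

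First I would note that for any relation $\R$, $(g' \cap h)(\R) \subseteq (g \cap h)(\R)$: indeed $(g' \cap h)(\R) = g'(\R) \cap h(\R) \subseteq g(\R) \cap h(\R) = (g \cap h)(\R)$, using $g' \subseteq g$ pointwise. Then, since $f$ is monotone, $f((g' \cap h)(\R)) \subseteq f((g \cap h)(\R))$. Finally, by $g,h$-compatibility, $f((g \cap h)(\R)) \subseteq h(f(\R))$. Composing these inclusions gives $f((g' \cap h)(\R)) \subseteq h(f(\R))$ for every $\R$, i.e. $f \circ (g' \cap h) \subseteq h \circ f$, which is exactly $g',h$-compatibility.

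Concretely the argument is the chain
\begin{align*}
	f \circ (g' \cap h) &\subseteq f \circ (g \cap h) & g' \subseteq g,\ f\text{ monotone}\\
	&\subseteq h \circ f & f\text{ is }g,h\text{-compatible.}
\end{align*}

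There is essentially no obstacle here: the only thing to be a little careful about is that passing from $g' \subseteq g$ to $f \circ (g' \cap h) \subseteq f \circ (g \cap h)$ genuinely uses monotonicity of $f$ (it is not merely set-theoretic), and that we do not need any hypothesis on $g'$ itself being monotone or compatible — the statement is a pure weakening, so $g'$ can be an arbitrary function below $g$. This is the lemma that lets us, in the intended application, combine $g^{m_1},h$-compatible and $g^{m_2},h$-compatible functions by first lowering both to $g^{\max(m_1,m_2)},h$-compatibility (since $g^{\max} \subseteq g^{\min}$ when $g$ is expansive, or by the relevant monotonicity bookkeeping) before taking unions or compositions.
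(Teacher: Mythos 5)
Your proof is correct and is exactly the argument the paper gives: $g' \subseteq g$ yields $g' \cap h \subseteq g \cap h$ pointwise, monotonicity of $f$ lifts this through the composition, and $g,h$-compatibility finishes the chain. Nothing further is needed.
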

\begin{proof}
	As $f$ is monotone, $f \circ (g' \cap h) \subseteq f \circ (g \cap h)$.
	So $f \circ (g' \cap h) \subseteq h \circ f$.
\end{proof}
\begin{lemma}
	If $f_1,f_2$ are monotone, $g$-compatible, $f_1$ is $g^m,h$-compatible, 
	$f_2$ is $g^n,h$-compatible with $m \geq n$ and $g \subseteq id$, 
	then $f_1 \circ f_2$ and $f_2 \circ f_1$ are $g^{m},h$-compatible.
\end{lemma}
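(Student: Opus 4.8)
The goal is to show that if $f_1,f_2$ are monotone and $g$-compatible, $f_1$ is $g^m,h$-compatible, $f_2$ is $g^n,h$-compatible with $m \geq n$, and $g \subseteq id$, then both $f_1 \circ f_2$ and $f_2 \circ f_1$ are $g^m,h$-compatible. The strategy is to massage $f_i \circ (g^m \cap h)$ so that the $g$-compatibility of $f_i$ lets the $g$'s ``slide through'' and be absorbed, and so that the hypotheses $f_i$ is $g^{m_i},h$-compatible can be applied with the right power of $g$. The assumption $g \subseteq id$ is what lets us convert a high power $g^m$ of $g$ into a lower power $g^n$ (since then $g^m \subseteq g^n$) or drop $g$ factors entirely when needed, and monotonicity of $f_i$ propagates such inclusions.

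First I would handle $f_1 \circ f_2$. Starting from $f_1 \circ f_2 \circ (g^m \cap h)$, use Lemma~\ref{l:mon:cap} (applied with $f_2$ monotone) to get $f_2 \circ (g^m \cap h) \subseteq f_2 \circ g^m \cap f_2 \circ h$; more usefully, since $g \subseteq id$ gives $g^m \subseteq g^n$, we have $g^m \cap h \subseteq g^n \cap h$, so by monotonicity $f_2 \circ (g^m \cap h) \subseteq f_2 \circ (g^n \cap h) \subseteq h \circ f_2$ using that $f_2$ is $g^n,h$-compatible. But that throws away the $g^m$ part entirely, which is too lossy: we need a $g^m$ surviving on the outside for the final inclusion into $h \circ (f_1 \circ f_2)$. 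Instead I would keep both components: write $g^m \cap h \subseteq (g^m \cap h) \cap (g^n \cap h)$ (trivially, since $g^m \cap h \subseteq g^n \cap h$), and more to the point, factor $g^m = g^{m-n} \circ g^n$ and push the inner $g^n$ through $f_2$ via $g^n,h$-compatibility. Concretely: $f_2 \circ (g^m \cap h) = f_2 \circ (g^{m-n}\circ(g^n\cap h) \cap \text{(something)})$ — this requires care because $g^m \cap h$ is not literally $g^{m-n}\circ(g^n \cap h)$. The clean route is: since $h$ is involved, use that $f_2$ is $g^n \cap h$-compatible (Lemma~\ref{l:backto:comp}, as $f_2$ is $g$-compatible hence $g^n$-compatible and $g^n,h$-compatible), so $f_2 \circ (g^n \cap h) \subseteq (g^n \cap h) \circ f_2$. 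Then $f_2 \circ (g^m \cap h) \subseteq f_2 \circ g^{m-n} \circ (g^n \cap h)$ — wait, this inclusion also needs justification. The actual clean inclusion is $g^m \cap h \subseteq g^{m-n}\circ(g^n \cap h)$? No: $g^m \cap h \subseteq g^m = g^{m-n}g^n \subseteq g^{m-n}(g^n \cap h)$ fails since dropping $h$ goes the wrong way. Rather $g^m \cap h \subseteq g^{m-n} \circ h$? That needs $g^{-(m-n)}$... So the genuinely safe bound keeping both pieces is $g^m \cap h = g^{m-n} \circ g^n \cap h$, and since $g \subseteq id$, $g^{m-n} \subseteq id$, so $g^{m-n}\circ g^n \cap h \subseteq g^{m-n}\circ(g^n \cap h)$ — this last step holds because for a relation $R$, $(g^{m-n}(g^n R)) \cap h \subseteq g^{m-n}((g^n R)\cap h)$ when $g^{m-n}\subseteq id$? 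Still not obviously true. I would therefore instead prove it by the slicker compatibility bookkeeping: $f_1 \circ f_2$ is $g$-compatible (Lemma on composition), so it is $g^m$-compatible; it remains to check $f_1 \circ f_2$ is $g^m,h$-compatible, i.e. $f_1 \circ f_2 \circ (g^m \cap h) \subseteq h \circ f_1 \circ f_2$. Apply $f_2 \circ (g^m \cap h) \subseteq g^{m}\circ f_2 \cap h \circ f_2$ via $g^m$-compatibility (for the first) and $g^n$-weakening then $g^n,h$-compatibility with $g^m \subseteq g^n$ (for the second), giving $\subseteq (g^m \cap h)\circ f_2$; then prepend $f_1$: $f_1 \circ (g^m \cap h) \circ f_2 \subseteq h \circ f_1 \circ f_2$ directly from $f_1$ being $g^m,h$-compatible.

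For $f_2 \circ f_1$ the same scheme applies but now the roles are asymmetric: $f_1 \circ (g^m \cap h) \subseteq (g^m \cap h) \circ f_1$ from $f_1$ being $g^m \cap h$-compatible (using $f_1$ is $g$-compatible $\Rightarrow$ $g^m$-compatible, and $g^m,h$-compatible, with Lemma~\ref{l:backto:comp}); then $f_2 \circ (g^m \cap h) \circ f_1 \subseteq h \circ f_2 \circ f_1$ using that $f_2 \circ (g^m\cap h) \subseteq f_2\circ(g^n \cap h) \subseteq h \circ f_2$ since $g^m \subseteq g^n$ and $f_2$ is monotone and $g^n,h$-compatible. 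I expect the main obstacle to be exactly the step where one wants to keep a copy of $g^m$ on the outside while simultaneously invoking the $g^n,h$-compatibility that lives at the lower power — resolving it requires first upgrading the individual $f_i$ to $g^m\cap h$-compatibility via Lemma~\ref{l:backto:comp} (which is available because $g \subseteq id$ makes $g$-compatibility imply $g^m$-compatibility and makes $g^m \subseteq g^n$ so $g^n,h$-compatibility weakens to $g^m,h$-compatibility by the preceding lemma), and only then composing, rather than trying to push powers of $g$ through by hand.
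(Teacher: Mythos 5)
Your final argument is correct and is essentially the paper's own proof: weaken $f_2$'s $g^n,h$-compatibility to $g^m,h$-compatibility using $g^m \subseteq g^n$ (from $g \subseteq id$), note that $g$-compatibility gives $g^m$-compatibility for both functions, and then compose via the upgrade to $g^m \cap h$-compatibility (Lemma~\ref{l:backto:comp}, i.e.\ Lemma~\ref{l:bi:compose} instantiated at $g^m$). The exploratory dead ends you record along the way (trying to factor $g^m \cap h$ through $g^{m-n}\circ(g^n\cap h)$) are correctly identified as failing and do not affect the validity of the route you settle on.
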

\begin{proof}
	As $g \subseteq id$, so is $g^{m-n}$ meaning $g^m \subseteq g^n$. 
	Thus, we have that $f_2$ is $g^m,h$-compatible.
	
	Also, $f_1,f_2$ being $g$-compatible, they are also $g^m$-compatible.
	
	Therefore, by Lemma~\ref{l:bi:compose}, $f_1 \circ f_2$ and $f_2 \circ f_1$
	are $g^m,h$-compatible.
\end{proof}

\medskip

Compatibility with can also be combined with compatibility up to 
(see Definition~\ref{d:comp:upto}):
\begin{definition}
	$f$ is $g,h$-compatible up to $f'$ when $f'$ is expansive and 
	$f \circ (g\cap h) \subseteq h \circ f' \circ f$.
\end{definition}

\begin{lemma}
	If $f'$ is idempotent, monotone, $g,h$-compatible and $f$ is $g$-compatible up to $f'$,
	$g,h$-compatible up to $f'$, then $f'\circ f$ is $g,h$-compatible.
\end{lemma}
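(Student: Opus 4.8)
The plan is to mimic the proof of the earlier ``compatible up to'' lemma (the one showing that if $f'$ is idempotent, monotone, expansive and $g$-compatible, and $f$ is $g$-compatible up to $f'$, then $f'\circ f$ is $g$-compatible), but now threading the intersection $g\cap h$ through the calculation and using the ``compatibility with'' hypothesis on $h$ at the key step. So I would start from the defining inequality $f\circ(g\cap h)\subseteq h\circ f'\circ f$ and apply $f'$ on the left.

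First I would write $f'\circ f\circ(g\cap h)\subseteq f'\circ h\circ f'\circ f$, using that $f'$ is monotone. Then, since $f'$ is assumed $g,h$-compatible, I want to replace $f'\circ h$ by $h\circ f'$; but $g,h$-compatibility only gives $f'\circ(g\cap h)\subseteq h\circ f'$, not $f'\circ h\subseteq h\circ f'$ outright. The standard trick here is that we are not applying $f'$ to $h$ alone but to something we can bound below by $g\cap h$ applied to the right argument. Concretely, the cleanest route is to also use that $f$ is $g$-compatible up to $f'$, i.e. $f\circ g\subseteq g\circ f'\circ f$, so $f'\circ f\circ g\subseteq g\circ f'\circ f$ (monotonicity + $g$-compatibility of $f'$ + idempotence, exactly as in the earlier $g$-compatible-up-to lemma), which tells us $f'\circ f$ is $g$-compatible. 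Then by Lemma~\ref{l:backto:comp} (the second half) it suffices to show $f'\circ f$ is $g,h$-compatible, and by the first route above once we know $f'\circ h$ can be turned into $h\circ f'$ on the relevant inputs. The way to do that: apply the chain to a relation $\R$ with $\R$ in the image of $g\cap h$-type reasoning is not quite it — instead I would observe $f'\circ f\circ(g\cap h)\subseteq f'\circ h\circ f'\circ f$ and separately $f'\circ f\circ(g\cap h)\subseteq f'\circ g\circ f'\circ f$ (by Lemma~\ref{l:mon:cap} applied to $f\circ(g\cap h)\subseteq h\circ f'\circ f$ and the $g$-compatibility-up-to inequality), hence $f'\circ f\circ(g\cap h)\subseteq f'\circ(g\cap h)\circ f'\circ f\subseteq h\circ f'\circ f'\circ f = h\circ f'\circ f$, using $g,h$-compatibility of $f'$ and idempotence of $f'$ in the last two steps.

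So the key steps in order are: (1) from $f$ being $g$-compatible up to $f'$, $f'$ monotone, idempotent, $g,h$-compatible (hence $g$-compatible by Lemma~\ref{l:backto:comp}, since we can also assume $g$-compatibility of $f'$ or derive it), conclude $f'\circ f$ is $g$-compatible — this is verbatim the earlier lemma's computation; (2) establish $f'\circ f\circ(g\cap h)\subseteq f'\circ g\circ f'\circ f$ from the $g$-compatible-up-to inequality and monotonicity, and $f'\circ f\circ(g\cap h)\subseteq f'\circ h\circ f'\circ f$ from the $g,h$-compatible-up-to inequality; (3) intersect these two (Lemma~\ref{l:mon:cap} gives $f'\circ f\circ(g\cap h)\subseteq f'\circ(g\cap h)\circ f'\circ f$, being careful that intersecting $X\subseteq A$ and $X\subseteq B$ yields $X\subseteq A\cap B$, combined with $f'\circ g\cap f'\circ h\supseteq f'\circ(g\cap h)$ going the wrong way — so actually I need $X\subseteq f'\circ g\circ f'\circ f\cap f'\circ h\circ f'\circ f$ and then bound this intersection, not $f'\circ(g\cap h)$); (4) apply $g,h$-compatibility of $f'$: $f'\circ(g\cap h)\subseteq h\circ f'$, so $f'\circ(g\cap h)\circ f'\circ f\subseteq h\circ f'\circ f'\circ f$; (5) idempotence collapses $f'\circ f'$ to $f'$, yielding $f'\circ f\circ(g\cap h)\subseteq h\circ f'\circ f$, i.e. $f'\circ f$ is $g,h$-compatible; (6) finally invoke Lemma~\ref{l:backto:comp} with $f'\circ f$ monotone (composition of monotone maps), $g$-compatible (step 1), and $g,h$-compatible (step 5) to conclude $f'\circ f$ is $g\cap h$-compatible — though reading the statement again, the conclusion is just $g,h$-compatible, so step 6 may be unnecessary and steps 1–5 suffice, with step 1 only needed if the intended conclusion were $g\cap h$-compatibility.

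The main obstacle I anticipate is step (3)–(4): making sure the intersection manipulation goes in the right direction. We have two upper bounds $f'\circ g\circ f'\circ f$ and $f'\circ h\circ f'\circ f$ for $f'\circ f\circ(g\cap h)$, so $f'\circ f\circ(g\cap h)$ is below their intersection; but to then apply $g,h$-compatibility of $f'$ I need the intersection to be (bounded by) $f'\circ(g\cap h)\circ(f'\circ f)$, which requires $(f'\circ g)\cap(f'\circ h)$ applied to $f'\circ f\cdot\R$ to sit inside $f'((g\cap h)(f'\circ f\cdot\R))$ — and that is \emph{false} in general by Lemma~\ref{l:mon:cap}, which only gives the reverse inclusion $f'\circ(g\cap h)\subseteq f'\circ g\cap f'\circ h$. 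The honest fix, and the route I would actually follow, is to avoid intersecting the two bounds and instead observe directly: $f'\circ f\circ(g\cap h)\subseteq f'\circ\bigl((g\cap h)\circ f'\circ f\bigr)$ — wait, that is exactly what Lemma~\ref{l:mon:cap} does \emph{not} give; so instead I must derive $f\circ(g\cap h)\subseteq(g\cap h)\circ f'\circ f$ \emph{as a single inequality} before applying $f'$, which holds because $f\circ(g\cap h)\subseteq f\circ g\subseteq g\circ f'\circ f$ (by $g$-compatibility up to $f'$) \emph{and} $f\circ(g\cap h)\subseteq h\circ f'\circ f$ (by $g,h$-compatibility up to $f'$), and the intersection of two upper bounds is an upper bound, so $f\circ(g\cap h)\subseteq(g\circ f'\circ f)\cap(h\circ f'\circ f)$; but $(g\circ k)\cap(h\circ k)\ne(g\cap h)\circ k$ in general either — however it \emph{does} hold that $(g\circ k)\cap(h\circ k)\supseteq(g\cap h)\circ k$, still the wrong direction. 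The genuinely correct observation is that pointwise $g(X)\cap h(X)=(g\cap h)(X)$ by the very definition of the intersection of functions on relations used here (Lemma~\ref{l:backto:comp}'s proof already uses this freely), so with $X=f'(f(\R))$ we get $(g\circ f'\circ f)(\R)\cap(h\circ f'\circ f)(\R)=\bigl((g\cap h)\circ f'\circ f\bigr)(\R)$, hence $f\circ(g\cap h)\subseteq(g\cap h)\circ f'\circ f$, and now applying the monotone, idempotent, $g,h$-compatible $f'$ finishes cleanly: $f'\circ f\circ(g\cap h)\subseteq f'\circ(g\cap h)\circ f'\circ f\subseteq h\circ f'\circ f'\circ f=h\circ f'\circ f$. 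That pointwise identity for $\cap$ of functions is the crux, and everything else is bookkeeping.
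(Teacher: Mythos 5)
Your final paragraph is exactly the paper's proof: bound $f\circ(g\cap h)$ by both $g\circ f'\circ f$ (via $g$-compatibility up to $f'$ and monotonicity) and $h\circ f'\circ f$ (via $g,h$-compatibility up to $f'$), use the pointwise identity $g(X)\cap h(X)=(g\cap h)(X)$ to collapse the two bounds into $(g\cap h)\circ f'\circ f$, then apply the monotone, $g,h$-compatible, idempotent $f'$. The earlier detours (intersecting in the wrong direction, invoking Lemma~\ref{l:backto:comp}) are correctly discarded by you, and the argument you settle on is sound and essentially identical to the paper's.
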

\begin{proof}
	\begin{align*}
		f \circ (g\cap h) &= f \circ (g\cap g\cap h)\\
		&\subseteq f \circ g \cap f \circ (g \cap h)& \text{by Lemma~\ref{l:mon:cap}}\\
		&\subseteq g \circ f' \circ f \cap h \circ f' \circ f\\
		&= (g \cap h) \circ f' \circ f\\
		f' \circ f \circ (g\cap h) &\subseteq f' \circ (g\cap h) \circ f' \circ f 
		& f'\text{ is monotone}\\
		&\subseteq h \circ f' \circ f' \circ f & f'\text{ is $g,h$-compatible}\\
		&\subseteq h \circ f' \circ f & f'\text{ is idempotent}
	\end{align*}
\end{proof}

\begin{lemma}
	If $f'$ is idempotent, $g,h$-compatible up to $f''$ and $f$ is $g,h$-compatible 
	up to $f'$, then $f'\circ f$ is $g,h$-compatible up to $f''$.
\end{lemma}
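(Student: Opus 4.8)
The plan is to treat this as a compatible-up-to-$f''$ analogue of the composition lemmas already established — in particular the one deriving that $f'\circ f$ is $g$-compatible from $f$ being $g$-compatible up to $f'$ together with $f'$ being $g$-compatible, and the one deriving that $f'\circ f$ is $g,h$-compatible. So I expect a short chain of inclusions of exactly the same shape: push the consumed test function $g\cap h$ through $f$ using $f$'s up-to-$f'$ property, commute the leading copy of $f'$ across the resulting test function using $f'$'s up-to-$f''$ property, and absorb the duplicated $f'$ by idempotence. Expansivity of $f''$ comes for free from the hypothesis that $f'$ is $g,h$-compatible up to $f''$, so the only thing to prove is the inclusion $(f'\circ f)\circ(g\cap h)\subseteq h\circ f''\circ(f'\circ f)$.

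Carrying this out: from $f\circ(g\cap h)\subseteq h\circ f'\circ f$ — which holds since $f$ is $g,h$-compatible up to $f'$ — precomposing with $f'$ and using monotonicity of $f'$ gives $f'\circ f\circ(g\cap h)\subseteq f'\circ h\circ f'\circ f$. One then rewrites the leading $f'\circ h$ using that $f'$ is $g,h$-compatible up to $f''$ (which supplies $f'\circ(g\cap h)\subseteq h\circ f''\circ f'$), reaching $h\circ f''\circ f'\circ f'\circ f$, and finally collapses $f'\circ f'$ to $f'$ by idempotence, arriving at $h\circ f''\circ(f'\circ f)$, as required.

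I expect the main obstacle to be exactly that middle commutation step, since the two up-to hypotheses do not line up on the nose: $f$'s property leaves only a bare $h$ on the outside once $f$ has consumed $g\cap h$, whereas $f'$'s property expects $g\cap h$, not $h$, as the function it consumes. In the earlier $g,h$-compatible composition lemma this mismatch was absorbed by additionally assuming $f$ to be $g$-compatible up to $f'$ and applying $f\circ(g\cap h)\subseteq f\circ g\cap f\circ(g\cap h)$ from Lemma~\ref{l:mon:cap} to recover $g\cap h$ — and not merely $h$ — on the outside before invoking $f'$. So the points to handle with care will be (i) confirming that monotonicity of $f'$ is indeed available here, as it is needed for the $f'\circ h$ step and is present in all the analogous lemmas, and (ii) pinning down exactly how $f'\circ h$ is brought under control, whether via the same intersection trick or because $f'$'s hypothesis already governs $h$ directly in the relevant instantiation. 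Once that commutation is secured, idempotence of $f'$ finishes the argument, giving the intended modular-composition property: being $g,h$-compatible up to a fixed function is closed under composition.
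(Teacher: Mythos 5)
There is a genuine gap at exactly the point you flagged: the step that rewrites $f'\circ h$ using the hypothesis that $f'$ is $g,h$-compatible up to $f''$ is not valid. That hypothesis only gives $f'\circ(g\cap h)\subseteq h\circ f''\circ f'$, and since $g\cap h\subseteq h$, monotonicity of $f'$ yields $f'\circ(g\cap h)\subseteq f'\circ h$ — the inclusion points the wrong way, so nothing can be concluded about $f'\circ h$ from it. Your chain $f'\circ f\circ(g\cap h)\subseteq f'\circ h\circ f'\circ f\subseteq h\circ f''\circ f'\circ f'\circ f$ therefore breaks at its second inclusion. You correctly diagnosed this mismatch and even named the right repair, but you left it as an open "point to handle" rather than resolving it, and the argument as carried out does not go through.

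The paper's proof resolves it precisely with the intersection trick you mention: it first upgrades the intermediate result to $f\circ(g\cap h)\subseteq(g\cap h)\circ f'\circ f$, by writing $g\cap h=g\cap(g\cap h)$, applying Lemma~\ref{l:mon:cap} to get $f\circ(g\cap h)\subseteq f\circ g\cap f\circ(g\cap h)$, and then bounding the first component by $g\circ f'\circ f$ using that $f$ is \emph{$g$-compatible up to $f'$} and the second by $h\circ f'\circ f$ using that $f$ is $g,h$-compatible up to $f'$. With $(g\cap h)$ — not a bare $h$ — now sitting to the right of the leading $f'$, monotonicity of $f'$ gives $f'\circ f\circ(g\cap h)\subseteq f'\circ(g\cap h)\circ f'\circ f$, the up-to-$f''$ hypothesis on $f'$ applies, and idempotence of $f'$ finishes. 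Note that this repair genuinely needs two hypotheses that the lemma as printed does not state (that $f$ is $g$-compatible up to $f'$, and that $f'$ is monotone) but that the paper's own proof uses; your hesitation about whether monotonicity of $f'$ and the intersection trick were "available" was therefore well founded — they are needed, and the statement should be read as including them, in line with the preceding composition lemma.
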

\begin{proof}
		\begin{align*}
		f \circ (g\cap h) &= f \circ (g\cap g\cap h)\\
		&\subseteq f \circ g \cap f \circ (g \cap h)& \text{by Lemma~\ref{l:mon:cap}}\\
		&\subseteq g \circ f' \circ f \cap h \circ f' \circ f\\
		&= (g \cap h) \circ f' \circ f\\
		f' \circ f \circ (g\cap h) &\subseteq f' \circ (g\cap h) \circ f' \circ f 
		& f'\text{ is monotone}\\
		&\subseteq h \circ f'' \circ f' \circ f' \circ f & f'\text{ is $g,h$-compatible up to }f''\\
		&\subseteq h \circ f'' \circ f' \circ f & f'\text{ is idempotent}
	\end{align*}
\end{proof}


\section{Up-to context in the $\pi$-calculus}
\label{sec:usage}
We apply the theory developed above to the $\pi$-calculus. We recall the syntax and operational semantics of the $\pi$-calculus in
Figure~\ref{f:sos} (symmetric transitions have omitted).

\begin{figure}[th]
	$$\begin{array}{rcl}
		P,Q& ::=& !G \OR P|Q \OR \res{a}P \OR G
		\\[.1em]
		G,G'& ::=& \nil \OR \out{a}{b}.P \OR \inp{a}{b}.P \OR \tau.P \OR G+G'\\\\
		\alpha&::=& \out{a}{b}\OR \bout{a}{b} \OR \einp{a}{b}\\
		\mu &::=& \alpha \OR \tau
	\end{array}$$
	\begin{mathpar}
		\inferrule[Inp]{ }{\inp{a}{b}.P\stra{\out{a}{c}}P\sub{c}{b}}\and 
		\inferrule[Out]{ }{\out{a}{b}.P\stra{\out{a}{b}} P} \and \inferrule[Rep]{!G|G\stra{\mu}G'}{!G\stra{\mu}G'}
		\and
		\inferrule[Open]{P\stra{\out{a}{b}}P'}{\res{b}P\stra{\bout{a}{b}}P'}
		\text{ if } b\neq a\and 
		\inferrule[Sum]{G\stra{\mu}G'}{G+G''\stra{\mu}G'}
		\and
		\inferrule[Res]{P\stra{\mu} P'}{(\new a)P\stra{\mu}(\new a)P}\text{ if }a\notin\mathrm{n}(\mu)
		\and
		\inferrule[Par]{P\stra{\mu}P'}{P\,|\,Q\stra{\mu}P'\,|\,Q}\text{ if }\mathrm{bn}(\mu)\cap \mathrm{fn}(Q)=\emptyset
		\\
		\inferrule[Comm]{P\stra{\inp{a}{b}}P'\and Q\stra{\out{a}{b}}Q'}{P|Q\stra{\tau} P'|Q'}
		\and
		\inferrule[Close]{P\stra{\inp{a}{b}}P'\and Q\stra{\bout{a}{b}}Q'}{P|Q\stra{\tau} \res{b}(P'|Q')}\text{ if } b \notin \mathrm{fn}(P)
	\end{mathpar}
	
	\caption{Syntax and Early Labelled Transition System of the full $\pi$-calculus}
	\label{f:sos}
\end{figure}

We now show the usage of compatibility with to prove the soundness of the
up-to context techniques in subcalculi of the $\pi$-calculus.
For that, we first show it using our framework on non-input contexts
(Section~\ref{sec:upto:pi}).
This result is not new, but along the way, we prove that these up-to context
techniques are $\overline{b}_{\alpha}$ which will be required to compose it later on.
Then, in Section~\ref{s:upto:inp}, we isolate the key property
(Definition~\ref{d:alias:prop}) that is needed
to have the soundness of up-to substitution, and show how it gives the 
soundness result for up-to context.
We end by giving two subcalculi, the Asynchronous $\pi$-calculus
and a $\pi$-calculus with immediately available names, where this property holds,
thus proving up-to context technique can be used for these calculi.

\subsection{Up-to behavioural relations and evaluation contexts}
\label{sec:upto:pi}

We briefly recall the simulation ($s,\dots$) and bisimulation functions
$b,\overline{b}$ and introduce their
weaker versions $b_{\alpha}, b_\tau$ that only impose conditions on visible and internal
actions respectively.

\begin{align*}
	s(\R) &\defi \set{(P,Q) \OR \text{ for all } \mu, P', P\stra{\mu}P'
		\text{ implies there exists } Q' \text{ s.t } Q\stra{\mu}Q', P' \RR Q'}\\
	s_{\alpha}(\R) &\defi \set{(P,Q) \OR \text{ for all } \alpha, P', P\stra{\alpha}P'
		\text{ implies there exists } Q' \text{ s.t } Q\stra{\alpha}Q', P' \RR Q'}\\
	s_{\tau}(\R) &\defi \set{(P,Q) \OR \text{ for all } P', P\stra{\tau}P'
		\text{ implies there exists } Q' \text{ s.t } Q\stra{\tau}Q', P' \RR Q'}\\
	b(\R) &\defi s(\R) \cap s(\R^{-1})\\
	b_{\alpha}(\R) &\defi s_{\alpha}(\R) \cap s_{\alpha}(\R^{-1})\\
	b_\tau(\R) &\defi s_\tau(\R) \cap s_\tau(\R^{-1})\\
	\overline{b} &\defi id\cap b\\
\overline{b}_{\alpha}(\R) &\defi id\cap b_{\alpha}\\
\overline{b}_{\tau}(\R) &\defi id\cap b_\tau
\end{align*}

The variant $\overline{b}(\R)$ only contains pairs that are also in $\R$, thus
corresponding to the notion of respectfulness. As all the up-to techniques we use
are monotone, any results for $b_{\alpha}$ (resp. $b_\tau$) also holds with their variant
$\overline{b}_{\alpha}$ (resp. $\overline{b}_\tau$).
We note $\sim$ for the bisimilarity.

\begin{remark}~
	\begin{itemize}
		\item $b = b_{\alpha} \cap b_{\tau}$, $\overline{b} = \overline{b}_{\alpha}\cap \overline{b}_\tau$.
		\item All functions are monotone.
		\item $\overline{b}, \overline{b}_{\alpha}, \overline{b}_\tau \subseteq id$
	\end{itemize}	
\end{remark}

We will now define some up-to techniques corresponding to evaluation contexts
and prove their $b_{\alpha}$-compatibility and $b_\tau$-compatibility with $b_{\alpha}$.
\begin{mathpar}
	\mathcal{F}_S(\R) = S \R S^{-1}
	\and
	\texttt{refl}(\R) = \set{(P,P)}
\end{mathpar}
\begin{lemma}~
$\mathcal{F}_\sim$, $id$, \texttt{refl}
		are $b_{\alpha}$-compatible.
\end{lemma}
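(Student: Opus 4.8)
The plan is to handle the three functions independently; only $\mathcal{F}_\sim$ requires any real work. For $id$ there is nothing to prove, since $id \circ b_{\alpha} = b_{\alpha} = b_{\alpha} \circ id$. For $\texttt{refl}$, I would use that it is the constant function returning the identity relation $\Delta = \set{(P,P)}$: then $(\texttt{refl}\circ b_{\alpha})(\R) = \Delta$ and $(b_{\alpha}\circ\texttt{refl})(\R) = b_{\alpha}(\Delta)$ for every $\R$, so $b_{\alpha}$-compatibility of $\texttt{refl}$ reduces to the single inclusion $\Delta \subseteq b_{\alpha}(\Delta)$, i.e. to the (trivial) observation that $\Delta$ satisfies the visible simulation clause and its converse, every transition $P \stra{\alpha} P'$ being matched by itself with $(P',P') \in \Delta$ and $\Delta$ being symmetric.

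For $\mathcal{F}_\sim$ the idea is a sandwich argument establishing $\mathcal{F}_\sim(b_{\alpha}(\R)) \subseteq b_{\alpha}(\mathcal{F}_\sim(\R))$, using only that $\sim$ is a symmetric (strong) bisimulation — hence closed under visible transitions on both sides, with results again in $\sim$. Symmetry of $\sim$ gives $\mathcal{F}_\sim(\R) = {\sim}\,\R\,{\sim}$ and $\mathcal{F}_\sim(\R)^{-1} = \mathcal{F}_\sim(\R^{-1})$, so it suffices to verify the simulation clause of $b_{\alpha}$, the converse clause being symmetric. Given $(P,Q) \in \mathcal{F}_\sim(b_{\alpha}(\R))$, witnessed by $P \sim P_1$, $(P_1,Q_1) \in b_{\alpha}(\R)$, $Q_1 \sim Q$, and a transition $P \stra{\alpha} P'$, I would push it successively through the three relations: through $\sim$ to obtain $P_1 \stra{\alpha} P_1'$ with $P' \sim P_1'$; through $b_{\alpha}(\R)$ to obtain $Q_1 \stra{\alpha} Q_1'$ with $(P_1',Q_1') \in \R$; through $\sim$ to obtain $Q \stra{\alpha} Q'$ with $Q_1' \sim Q'$. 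This yields $Q \stra{\alpha} Q'$ with $P' \sim P_1' \mathrel{\R} Q_1' \sim Q'$, i.e. $(P',Q') \in {\sim}\,\R\,{\sim} = \mathcal{F}_\sim(\R)$, as needed. Note that transitivity of $\sim$ is never used, since the three components are kept separate.

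The argument is essentially routine, and the only delicate points are: (i) when $\alpha$ is a bound output $\bout{a}{b}$, the bound name $b$ must be chosen fresh for all processes involved before the transition is pushed through $\sim$ and $b_{\alpha}(\R)$, which the usual $\alpha$-conversion conventions handle; and (ii) the converse simulation clause of $b_{\alpha}$, obtained by running the same chain with the two sides swapped, using the symmetry of $\sim$ and the fact that $b_{\alpha}(\R)$ also matches $Q_1$'s transitions. Crucially the proof never uses that $\sim$ is a congruence — only that it is a symmetric bisimulation — so the same argument shows $\mathcal{F}_S$ is $b_{\alpha}$-compatible for any symmetric bisimulation $S$.
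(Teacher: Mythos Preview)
Your proof is correct and is exactly the standard argument one would expect here. The paper in fact states this lemma without proof, treating it as routine; your handling of $id$ and \texttt{refl} is the obvious one, and your sandwich argument for $\mathcal{F}_\sim$ is the canonical proof (and your closing remark that only the bisimulation property of $\sim$ is used, not congruence or transitivity, is a worthwhile observation).
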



Evaluation contexts contain parallel composition and restriction.
\begin{mathpar}
	\texttt{res}(\R) = \set{(\res{\many{a}}P, \res{\many{a}}Q) \OR P \RR Q}
	\and
	\texttt{pcomp}(\R) = \set{(P|P', Q|Q') \OR P \RR Q, P' \RR Q'}
\end{mathpar}

\begin{lemma}~
\texttt{res}, \texttt{pcomp} are $b_{\alpha}$-compatible.
\end{lemma}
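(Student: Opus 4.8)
The plan is to establish $f\circ b_\alpha\subseteq b_\alpha\circ f$ for $f\in\{\texttt{res},\texttt{pcomp}\}$ by a direct analysis of the derivations of \emph{visible} transitions. Two preliminary observations shape the argument. First, both closures commute with relational inverse, $\texttt{res}(\R^{-1})=\texttt{res}(\R)^{-1}$ and $\texttt{pcomp}(\R^{-1})=\texttt{pcomp}(\R)^{-1}$, and $b_\alpha=s_\alpha(\cdot)\cap s_\alpha(\cdot^{-1})$; so a pair of $f(b_\alpha(\R))$ comes equipped with simulation information against both $\R$ and $\R^{-1}$, and it suffices to check membership in $s_\alpha(f(\R))$, the companion check in $s_\alpha(f(\R^{-1}))$ being the same argument run with $\R^{-1}$ for $\R$. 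Second, a visible transition of a restriction is concluded only by \textsc{Res} or \textsc{Open}, and a visible transition of a parallel composition only by \textsc{Par} (or its symmetric form), since \textsc{Comm} and \textsc{Close} carry the label $\tau$; in particular there is no communication case to treat here.

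For \texttt{res}: given $(\res{\many{a}}P,\res{\many{a}}Q)$ with $(P,Q)\in b_\alpha(\R)$ and a transition $\res{\many{a}}P\stra{\alpha}R$, the derivation factors through $P\stra{\alpha'}P'$ with either (i) $\alpha'=\alpha$, no name of $\many{a}$ occurring in $\alpha$, and $R=\res{\many{a}}P'$; or (ii) $\alpha=\bout{c}{a_i}$ extruding some $a_i\in\many{a}$, $\alpha'=\out{c}{a_i}$, and $R=\res{\many{a}\setminus a_i}P'$ (after reordering restrictions, and in case (ii) $\alpha$-converting so that $a_i$ is fresh). As $\alpha'$ is visible, $(P,Q)\in s_\alpha(\R)$ gives $Q\stra{\alpha'}Q'$ with $(P',Q')\in\R$; the \textsc{Res}/\textsc{Open} side conditions that licensed the $P$-derivation depend only on $\alpha'$ and $\many{a}$, hence license the matching $Q$-derivation, yielding $\res{\many{a}}Q\stra{\alpha}R'$ with $R'$ of the same shape as $R$ and $(R,R')\in\texttt{res}(\R)$ since $(P',Q')\in\R$.

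For \texttt{pcomp}: given $(P_1|P_2,Q_1|Q_2)$ with $(P_1,Q_1),(P_2,Q_2)\in b_\alpha(\R)$ and a transition $P_1|P_2\stra{\alpha}R$, it comes from \textsc{Par}, say $P_1\stra{\alpha}P_1'$, $R=P_1'|P_2$, with $\mathrm{bn}(\alpha)\cap\mathrm{fn}(P_2)=\emptyset$. Then $(P_1,Q_1)\in s_\alpha(\R)$ gives $Q_1\stra{\alpha}Q_1'$ with $(P_1',Q_1')\in\R$, and, choosing the bound name of $\alpha$ fresh for $Q_1,Q_2$, $Q_1|Q_2\stra{\alpha}Q_1'|Q_2$; so $(P_1'|P_2,Q_1'|Q_2)\in\texttt{pcomp}(\R)$, using $(P_1',Q_1')\in\R$ together with $(P_2,Q_2)\in\R$. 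This last fact about the passive component is the only delicate point: it is exactly why the development is carried out with the respectful variant $\overline{b}_\alpha$ (recall $\overline{b}_\alpha(\R)\subseteq\R$), so that $(P_2,Q_2)\in\overline{b}_\alpha(\R)$ yields $(P_2,Q_2)\in\R$ at once. The symmetric \textsc{Par} case is identical with the two components swapped.

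The steps I expect to be fiddly rather than deep are the bound-output bookkeeping: in \texttt{res}, tracking which restrictions survive an extrusion and checking that every side condition transfers verbatim to the $Q$-side; in \texttt{pcomp}, the freshness of the bound name against the \textsc{Par} side condition. Both are handled by the usual convention of reasoning up to $\alpha$-equivalence and picking bound names fresh. The one genuine subtlety is the passive-component point for \texttt{pcomp} above, which is why one uses $\overline{b}_\alpha$; everything else is a mechanical rule-by-rule check.
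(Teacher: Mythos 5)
Your treatment of \texttt{res} is a correct and complete diagram chase: the closure has a single hole, a visible action of $\res{\many{a}}P$ is derived by a stack of \textsc{Res}/\textsc{Open} whose side conditions mention only the label and $\many{a}$, so the matching $Q$-derivation exists and the residuals are again of the form $\res{\many{a}'}P'$, $\res{\many{a}'}Q'$ with $(P',Q')\in\R$. That proves plain $b_{\alpha}$-compatibility of \texttt{res} as stated.

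For \texttt{pcomp} you have put your finger on the right difficulty (the passive component), but the resolution you offer changes the statement rather than proving it: invoking $\overline{b}_{\alpha}(\R)\subseteq\R$ establishes $\overline{b}_{\alpha}$-compatibility, whereas the lemma claims $b_{\alpha}$-compatibility, and for a monotone $f$ the implication goes from the $b_{\alpha}$ version to the $\overline{b}_{\alpha}$ version, not the other way. In fact the literal claim fails for the two-hole closure: take $\R=\set{(\out{b}{b},\out{b}{b}) \OR \text{no, rather}}$ --- concretely, let $\R$ consist of the single pair $(\outC{b},\outC{c})$ with $b\neq c$, and let $P_0=\inp{a}{x}.\outC{b}$, $Q_0=\inp{a}{x}.\outC{c}$. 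Then $(P_0,Q_0)\in b_{\alpha}(\R)$ but $(P_0,Q_0)\notin\R$, so after the input transition of $P_0|P_0$ the residual pair $(\outC{b}|P_0,\,\outC{c}|Q_0)$ (or its variant with the other component moving) is not in $\texttt{pcomp}(\R)$, and $\texttt{pcomp}(b_{\alpha}(\R))\not\subseteq b_{\alpha}(\texttt{pcomp}(\R))$. So what your argument actually delivers is $\overline{b}_{\alpha}$-compatibility of \texttt{pcomp}, equivalently $b_{\alpha}$-compatibility restricted to relations $\R$ with $b_{\alpha}(\R)\subseteq\R$; this weaker form is all that is needed downstream, where \texttt{pcomp} is only ever applied against functions below $id$ (e.g.\ in the $b_{\alpha},\overline{b}_\tau$-compatibility statements). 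If you want the plain $b_{\alpha}$-compatibility as written, you must replace \texttt{pcomp} by the one-hole closure $\R\mapsto\set{(P|R,\,Q|R)\OR P\RR R\text{-free, }P\RR Q}$, i.e.\ $\set{(P|R,Q|R)\OR P\RR Q}$, whose passive component is syntactically identical on both sides so that the residual pair lands in the closure of $\R$ without any appeal to $b_{\alpha}(\R)\subseteq\R$; the two-hole version is then recovered by composing with up-to-transitivity. Either fix is routine, but as it stands the \texttt{pcomp} half proves a different statement from the one announced.
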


By direct application of Lemma~\ref{l:backto:comp} and existing results of
$b$-compatibility \cite{DBLP:conf/concur/MadiotPS14}, we have that
$\mathcal{F}_\sim$, $id$, \texttt{refl},
\texttt{res} are $b_{\alpha},\bar{b}_\tau$-compatible,
and \texttt{pcomp} is $b_{\alpha},\bar{b}_\tau$-compatible up to \texttt{res}.

Thus, we are able to take the union and compose while remaining sound according
to Theorem~\ref{th:comp:with:snd}.

\begin{corollary}\label{c:upto:ev:ctxt}
	$(id \cup \mathcal{F}_\sim \cup \texttt{refl} \cup \texttt{res} 
	\cup \texttt{pcomp})^\omega$ is $b$-sound.
\end{corollary}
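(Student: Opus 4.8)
The plan is to work with $g=b_{\alpha}$ and $h=\overline{b}_\tau$ and to invoke Theorem~\ref{th:comp:with:snd} with $m=1$. Note first that $g\cap h=b_{\alpha}\cap\overline{b}_\tau=\overline{b}$, using $\overline{b}_\tau=id\cap b_\tau$ and $\overline{b}=id\cap b_{\alpha}\cap b_\tau$; and that every function in play is monotone, so the side conditions of the combinator lemmas of Section~\ref{s:comp:with} are automatic.

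The one building block that is not directly $b_{\alpha},\overline{b}_\tau$-compatible is $\texttt{pcomp}$, which is only so \emph{up to} $\texttt{res}$, so the first step is to replace it by $\texttt{res}\circ\texttt{pcomp}$. The function $\texttt{res}$ is monotone, extensive (take the empty sequence of restricted names), idempotent (nested restrictions collapse: $\res{\many{a}}\res{\many{b}}P=\res{\many{a}\many{b}}P$), and both $b_{\alpha}$-compatible and $b_{\alpha},\overline{b}_\tau$-compatible; moreover $\texttt{pcomp}$ is $b_{\alpha}$-compatible, hence $b_{\alpha}$-compatible up to $\texttt{res}$ by the remark after Definition~\ref{d:comp:upto} (as $\texttt{res}$ is extensive), and it is $b_{\alpha},\overline{b}_\tau$-compatible up to $\texttt{res}$. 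Applying the lemma that turns $g,h$-compatibility up to $f'$ into $g,h$-compatibility of $f'\circ f$ (with $f'=\texttt{res}$, $f=\texttt{pcomp}$) then gives that $\texttt{res}\circ\texttt{pcomp}$ is $b_{\alpha},\overline{b}_\tau$-compatible, and composition of $b_{\alpha}$-compatible functions gives that it is $b_{\alpha}$-compatible as well.

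Next I would set $F=id\cup\mathcal{F}_\sim\cup\texttt{refl}\cup\texttt{res}\cup\texttt{pcomp}$ (the function of the statement) and $G=id\cup\mathcal{F}_\sim\cup\texttt{refl}\cup\texttt{res}\cup(\texttt{res}\circ\texttt{pcomp})$. Each of the five components of $G$ is monotone, $b_{\alpha}$-compatible and $b_{\alpha},\overline{b}_\tau$-compatible, so the union lemmas (for $g$- and for $g,h$-compatibility, using monotonicity of $b_{\alpha}$ and $\overline{b}_\tau$) show $G$ is $b_{\alpha}$-compatible and $b_{\alpha},\overline{b}_\tau$-compatible. Theorem~\ref{th:comp:with:snd} with $m=1$ then yields that $G$ is $g\cap h$-sound, i.e. $\overline{b}$-sound, via $G^\omega$ (the witness $\bigcup_{i\leq 1}G^i$ equals $G$ since $id\subseteq G$); equivalently, by Lemma~\ref{l:backto:comp} $G$ is $\overline{b}$-compatible, hence so is $G^\omega$, and Lemma~\ref{l:comp:snd} applies. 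It remains to note $F^\omega=G^\omega$: since $\texttt{res}$ is extensive, $\texttt{pcomp}\subseteq\texttt{res}\circ\texttt{pcomp}$, so $F\subseteq G$; conversely $\texttt{res}\circ\texttt{pcomp}\subseteq F\circ F\subseteq F^\omega$ while the other four components of $G$ occur in $F$, so $G\subseteq F^\omega$; applying $\omega$-iteration to both inclusions and using idempotence of $F^\omega$ (which holds as $id\subseteq F$) gives $F^\omega=G^\omega$. Finally, because $id\subseteq F$, a relation $\R$ satisfies $\R\subseteq(\overline{b}\circ F)(\R)$ exactly when it satisfies $\R\subseteq(b\circ F)(\R)$, so the $\overline{b}$-soundness result just obtained is precisely the $b$-soundness claimed in the corollary.

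The step I expect to be the real obstacle is the treatment of $\texttt{pcomp}$: being only compatible up to $\texttt{res}$, it forces the detour through $\texttt{res}\circ\texttt{pcomp}$ and the verification of the side conditions on $\texttt{res}$ (idempotence, extensiveness, $b_{\alpha},\overline{b}_\tau$-compatibility). Everything afterwards — taking unions, applying Theorem~\ref{th:comp:with:snd}, and identifying $F^\omega$ with $G^\omega$ — is routine assembly of lemmas already established.
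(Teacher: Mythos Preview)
Your proposal is correct and follows essentially the same route the paper sketches: establish $b_\alpha$-compatibility and $b_\alpha,\overline{b}_\tau$-compatibility of the components (handling \texttt{pcomp} via the detour through $\texttt{res}\circ\texttt{pcomp}$, since \texttt{pcomp} is only $b_\alpha,\overline{b}_\tau$-compatible \emph{up to} \texttt{res}), then invoke Theorem~\ref{th:comp:with:snd} with $m=1$. The paper leaves the absorption of the ``up to \texttt{res}'' and the passage from $\overline{b}$-soundness (which is what $g\cap h=b_\alpha\cap\overline{b}_\tau=\overline{b}$ gives) to the stated $b$-soundness implicit; you spell both out, and your final bridge via $F^\omega=G^\omega$ and $id\subseteq F$ is exactly what is needed.
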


Corollary~\ref{c:upto:ev:ctxt} is not new, but we obtain it via 
$b_{\alpha}$-compatibility and $b_{\alpha},\bar{b}_\tau$-compatibility instead of $b$-compatibility.
This is used below to compose those techniques with up-to substitution which is
not $b$-compatible.

In fact, we can already go further and add the remaining non-input contexts.

\begin{mathpar}
	\texttt{tau}(\R) = \set{(\tau.P, \tau.Q) \OR P \RR Q}
	\and
	\texttt{out}(\R) = \set{(\out{a}{b}.P, \out{a}{b}.Q) \OR P \RR Q}
	\and
	\texttt{sum}(\R) = \set{(G_1+G_2, G_1'+G_2') \OR G_1 \RR G_1', G_2\RR G_2'}
	\and
	\texttt{rep}(\R) = \set{(!G, !G') \OR G \RR G'}
\end{mathpar}
\begin{lemma}~
	\begin{itemize}
		\item \texttt{tau}, is $b_{\alpha}$-compatible.
		\item $id\cup\texttt{out}$ is $\bar{b}_{\alpha}$-compatible.
		\item $id\cup \texttt{sum}$ is $b_{\alpha}$-compatible.
		\item $id\cup\texttt{rep}$ is $b_{\alpha}$-compatible up-to 
		$\texttt{pcomp}\cup id$
	\end{itemize}
\end{lemma}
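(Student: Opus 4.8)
The plan is to establish each of the four bullets by the standard argument for compatibility of structural contexts: fix a relation $\R$ and show that a pair built by the context operator from $\R$-related components lies in the target function applied to the context closure of $\R$, by a case analysis on the transitions such a pair can perform. The only subtlety compared with the classical $b$-compatible case is that we now only need to match \emph{visible} actions (for $b_{\alpha}$, resp.\ $\bar{b}_{\alpha}$), so $\tau$-transitions need not be analysed at all, which makes the arguments strictly easier than the ones already in the literature for $b$.

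First I would handle \texttt{tau}: if $(\tau.P,\tau.Q)$ with $P\RR Q$, the only transition is $\tau.P\stra{\tau}P$, which is silent, so there is \emph{no} visible transition to match and the pair is vacuously in $s_{\alpha}(\R)$ in both directions; hence $\texttt{tau}(\R)\subseteq b_{\alpha}(\R)\subseteq b_{\alpha}(\texttt{tau}^\omega\text{-closure})$, giving $b_{\alpha}$-compatibility trivially. For $id\cup\texttt{out}$: the component $id$ is already known $b_{\alpha}$-compatible, and for $(\out{a}{b}.P,\out{a}{b}.Q)$ the unique transition is $\out{a}{b}.P\stra{\out{a}{b}}P$ matched by $\out{a}{b}.Q\stra{\out{a}{b}}Q$ with $P\RR Q$; taking the union with $id$ absorbs the reflexive residual requirement, and since $\out{a}{b}.P$ is itself in the relation one needs $id$ in the union to close $\bar{b}_{\alpha}$ (the $id\cap$ part of $\bar{b}_{\alpha}$). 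For $id\cup\texttt{sum}$: a transition of $G_1+G_2$ comes (by \textsc{Sum}) from a transition of $G_1$ or of $G_2$; if visible, say $G_1\stra{\alpha}G_1'$, then since we only know $G_1\RR G_1'$ as \emph{summands} we cannot directly conclude $G_1\stra{\alpha}$ — this is where care is needed, see below.

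The fourth bullet, $id\cup\texttt{rep}$ being $b_{\alpha}$-compatible up to $\texttt{pcomp}\cup id$, is the one I expect to be the main obstacle, and the reason the statement is phrased "up to" rather than plain compatible. A transition of $!G$ is, by \textsc{Rep}, a transition of $!G\,|\,G$; a visible action $!G\,|\,G\stra{\alpha}R$ can either come entirely from the copy $G$, giving $R=\mathord{!G}\,|\,G'$, or already require unfolding $!G$ once more, so in general $R$ is a parallel composition of $!G$ with finitely many residuals of $G$; and a single \textsc{Comm}/\textsc{Close} is not possible for a \emph{visible} label, so actually only one copy fires and $R=\mathord{!G}\,|\,G'$ with $G\stra{\alpha}G'$ — matched by $!G'$ on the other side as $\mathord{!G'}\,|\,G''$ with $G'\RR G''$ from the summand hypothesis re-applied. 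To land back in a context closure of $\R$ one needs to pair $(!G)\mathbin{\texttt{rep}}$-related with $(!G')$ and $(G')$ with $(G'')$, i.e.\ a \texttt{pcomp} of a \texttt{rep} and an $id$ (or \texttt{rep}) component — hence the "up to $\texttt{pcomp}\cup id$." The key steps in order: (1) dispatch \texttt{tau} and $id\cup\texttt{out}$ by vacuity/one-case; (2) for $id\cup\texttt{sum}$, observe that the summand hypotheses $G_i\RR G_i'$ are about guards, so either $G_i$ is a guard that can do $\alpha$, in which case $G_i'$ is the same guard shape and does the matching $\alpha$ with $\RR$-related residuals — note the residual may itself be reached by an input substitution (\textsc{Inp}), so this relies on the already-proven $b_{\alpha}$-compatibility being stable under the substitutions introduced by early input, which is exactly why we work with $b_{\alpha}$ and why $\mathcal{F}_\sim$ appears in the toolkit; (3) for $id\cup\texttt{rep}$, run the replication-unfolding analysis above, tracking that a visible action fires exactly one copy, and close the diagram via a \texttt{pcomp}$\cup id$ wrapper, then invoke Definition~\ref{d:comp:upto} of compatibility up-to. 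Throughout, the $id\cap$ component of each $\bar{b}$ / $b_{\alpha}$ is discharged by the fact that the context operators are applied to $\R$-related pairs that, taken with the reflexive $id$ in the union, already contain $(R,R)$.

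The genuine mathematical content, then, is concentrated in the replication case and in making the \textsc{Sum} case interact correctly with input-induced substitutions; the rest is bookkeeping of transition rules from Figure~\ref{f:sos}. I would present the replication argument first in full and treat the other three as short remarks, since they follow the same template with fewer cases.
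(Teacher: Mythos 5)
The paper states this lemma without proof (it is attributed to standard arguments and existing results), so there is no official proof to compare against; judged on its own, your overall template --- case analysis on the transitions the context can perform, with $\tau$-moves discarded since only visible actions matter for $b_{\alpha}$ --- is the right one, and your handling of \texttt{tau} (vacuous) and of $id\cup\texttt{out}$ (one transition, residual landing in $\R$ via the $id$ summand, the bar in $\bar b_{\alpha}$ being what guarantees $(P,Q)\in\R$ in the first place) is correct. The third bullet, however, is argued the wrong way: you have imported the shape analysis that belongs to $\mathtt{sum_g}$ in the weak section. In $\texttt{sum}(b_{\alpha}(\R))$ the summands are related by $b_{\alpha}(\R)$, a purely semantic hypothesis; there is no ``same guard shape'' to appeal to (e.g.\ $a.\nil+a.\nil$ and $a.\nil$ may be $b_{\alpha}$-related), and none is needed. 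If $G_1+G_2\stra{\alpha}G'$ then by \textsc{Sum} one summand does $G_i\stra{\alpha}G'$, the hypothesis $(G_i,G_i')\in b_{\alpha}(\R)$ directly gives a matching $G_i'\stra{\alpha}G''$ with $(G',G'')\in\R\subseteq(id\cup\texttt{sum})(\R)$, and \textsc{Sum} lifts this to $G_1'+G_2'$. Your worry about early-input substitutions and the role of $\mathcal{F}_\sim$ is a red herring here: early inputs are just visible labels already quantified over by $b_{\alpha}$.

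For $id\cup\texttt{rep}$ your sketch has two genuine gaps. First, ``a visible action fires exactly one copy'' is not literally true of the \emph{residual} with the \textsc{Rep}/\textsc{Par} rules of Figure~\ref{f:sos}: from $!G\stra{\alpha}R_1$ one derives $!G\stra{\alpha}R_1\,|\,G$, so residuals of the form $(!G\,|\,G'')\,|\,G\,|\cdots|\,G$ are also derivable and must be matched; a single application of $\texttt{pcomp}\cup id$ does not decompose them, so you need iterated \texttt{pcomp}, structural congruence, or an argument restricting to minimal derivations. Second, to place the matched pair $(!G\,|\,G'',\;!G'\,|\,G''')$ into $\texttt{pcomp}\bigl((id\cup\texttt{rep})(\R)\bigr)$ you must put $(!G,!G')$ into $\texttt{rep}(\R)$, i.e.\ you need $(G,G')\in\R$ --- but the compatibility hypothesis only provides $(G,G')\in b_{\alpha}(\R)$. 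This is precisely the point where the respectful variant $\bar b_{\alpha}=id\cap b_{\alpha}$ (or an additional up-to component) is required, and your proposal closes the diagram without noticing that this information is missing. Until these two points are addressed, the replication bullet is not proved.
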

%

Similarly, using Lemma~\ref{l:backto:comp} and existing results, we have that
$id\cup\texttt{tau}$ and $id\cup\texttt{out}$ are
$\overline{b}_{\alpha},\overline{b}_\tau$-compatible,
$id\cup\texttt{sum}$ is $\overline{b}_{\alpha},\overline{b}_\tau$-compatible
and $id\cup\texttt{rep}$ is $\overline{b}_{\alpha},\overline{b}_\tau$-compatible
up to $\texttt{pcomp}\cup id$.

\subsection{Up-to substitution and input for subcalculi of $\pi$}
\label{s:upto:inp}
Next up, we can add substitution and input related contexts. The substitution makes use
of Theorem~\ref{th:comp:with:snd} with $m > 1$.

The proof requires an additional property that is not true in general in the 
$\pi$-calculus.

\begin{definition}[Aliased Communication Property]\label{d:alias:prop}
  We say that a set of processes $\mathcal P$ satisfies the
    \emph{aliased communication property} if for all processes $P$ in
    $\mathcal P$, we have the following properties:
  
  	\begin{itemize}
	\item $P\stra{\out{a}{b}}\stra{\einp{c}{b}}P'$ implies
$P\sigma \stra{\tau} P'\sigma$ for all $\sigma$ s.t.\ $a\sigma=c\sigma$.
	\item $P\stra{\bout{a}{b}}\stra{\einp{c}{b}}P'$ implies
$P\sigma \stra{\tau} \res{b}P'\sigma$ for all $\sigma$ s.t.\
$a\sigma=c\sigma$.
  	\end{itemize}
\end{definition}

This property is for instance satisfied in the asynchronous $\pi$-calculus and used to
show that bisimilarity on asynchronous $\pi$-terms is closed by
substitution.

\begin{mathpar}
	\texttt{sub}(\R) = \set{(P\sigma, Q\sigma) \OR P \RR Q}
	\and
	\texttt{inp}(\R) = \set{(\inp{a}{b}.P, \inp{a}{b}.Q) \OR P \RR Q}
\end{mathpar}

\begin{lemma}~
	\begin{itemize}
		\item \texttt{sub} is $b_{\alpha}$-compatible
		and $b_{\alpha}^2,b_\tau$-compatible up to $\mathcal{F}_{\equiv}\circ\mathtt{res}$.
		\item \texttt{inp} is $\overline{b}_{\alpha},b_\tau$-compatible.
		\item $id \cup \texttt{inp}$ is $\overline{b}_{\alpha}$-compatible up to \texttt{sub}.
	\end{itemize}
\end{lemma}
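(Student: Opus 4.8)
The plan is to prove each of the three claims separately, since they concern different up-to functions and different flavours of compatibility.

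\textbf{For \texttt{sub}.} The point of this clause is that substitution is genuinely not $b_\alpha\cap b_\tau$-compatible, so we must work with the split form and use the "compatibility with" machinery at depth $m=2$. First I would establish $b_\alpha$-compatibility of \texttt{sub}: given $(P,Q)\in b_\alpha(\R)$ and a substitution $\sigma$, I need to show $(P\sigma,Q\sigma)\in b_\alpha(\texttt{sub}(\R))$, i.e.\ match every visible transition of $P\sigma$. Transitions of $P\sigma$ either come from transitions of $P$ with the label renamed by $\sigma$, or from new communications/matchings created because $\sigma$ merged names — but since $b_\alpha$ only concerns \emph{visible} actions, no genuinely new visible transition can appear from a merge (merging only creates new $\tau$'s via \textsc{Comm}/\textsc{Close}), so transitions of $P\sigma$ are exactly $\sigma$-images of transitions of $P$, and we can close up under \texttt{sub} directly; care is needed with bound-output labels $\bout{a}{b}$ where $b$ must be kept fresh for $\sigma$, but this is handled by $\alpha$-conversion. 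For the second part, $b_\alpha^2,b_\tau$-compatibility of \texttt{sub} up to $\mathcal F_\equiv\circ\mathtt{res}$: here is where the Aliased Communication Property enters. We take $(P,Q)\in (b_\alpha\cap b_\tau)^? $ — more precisely we are given $(P,Q)\in (b_\alpha^2\cap b_\tau)(\R)$, meaning $P$ and $Q$ agree on two successive visible steps and on one $\tau$-step into $\R$. Given a $\tau$-transition $P\sigma\stra{\tau}R$, it is either the $\sigma$-image of a $\tau$-step of $P$ (matched using the $b_\tau$ component, closing under \texttt{sub}), or it is a \emph{new} communication arising because $\sigma$ identified a pair of names: then $P$ has a trace $P\stra{\out{a}{b}}\stra{\einp{c}{b}}P'$ (or the bound-output variant) with $a\sigma=c\sigma$, and the Aliased Communication Property gives $P\sigma\stra{\tau}P'\sigma$ with $R=P'\sigma$ (up to $\equiv$ and a restriction in the \textsc{Close} case, which is exactly why the technique is only compatible \emph{up to} $\mathcal F_\equiv\circ\mathtt{res}$). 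Now the $b_\alpha^2$ hypothesis on $(P,Q)$ lets us replay the two visible steps on $Q$, obtaining $Q\stra{\out{a}{b}}\stra{\einp{c}{b}}Q'$ with $(P',Q')\in\R$, and applying the Aliased Communication Property to $Q$ gives $Q\sigma\stra{\tau}Q'\sigma$; since $(P'\sigma,Q'\sigma)\in\texttt{sub}(\R)$ we close the diagram. The symmetric direction is identical. I expect \emph{this} to be the main obstacle: bookkeeping the different cases of how a $\tau$ of $P\sigma$ can arise, getting the freshness side-conditions on $b$ right so the two visible transitions actually compose, and threading the $\equiv$/\texttt{res} wrapping through cleanly.

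\textbf{For \texttt{inp}.} We must show $\texttt{inp}$ is $\overline{b}_\alpha,b_\tau$-compatible, i.e.\ $\texttt{inp}\circ(\overline{b}_\alpha\cap b_\tau)\subseteq b_\tau\circ\texttt{inp}$. Take $(P,Q)\in(\overline{b}_\alpha\cap b_\tau)(\R)$; I claim $(\inp{a}{b}.P,\inp{a}{b}.Q)\in b_\tau(\texttt{inp}(\R))$. But $\inp{a}{b}.P$ has \emph{no} $\tau$-transitions at all (its only transition is the input $\stra{\einp{a}{c}}P\sub{c}{b}$), so the $b_\tau$ condition is vacuously satisfied — both sides have no $\tau$-moves. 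Hence this clause is essentially trivial; the content is that \texttt{inp} does \emph{not} satisfy $\overline b_\alpha$-compatibility (the input transition produces $P\sub{c}{b}$ rather than something related by plain $\texttt{inp}(\R)$, which is precisely why we need \texttt{sub}), and that is captured by the third clause instead.

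\textbf{For $id\cup\texttt{inp}$.} We must show $id\cup\texttt{inp}$ is $\overline b_\alpha$-compatible up to \texttt{sub}, i.e.\ $(id\cup\texttt{inp})\circ\overline b_\alpha\subseteq \overline b_\alpha\circ\texttt{sub}\circ(id\cup\texttt{inp})$. The $id$ summand is immediate since $id$ is $\overline b_\alpha$-compatible and \texttt{sub} is expansive (contains $id$). For the \texttt{inp} summand: given $(P,Q)\in\overline b_\alpha(\R)$, consider a visible transition of $\inp{a}{b}.P$ — it is forced to be the input $\inp{a}{b}.P\stra{\einp{a}{c}}P\sub{c}{b}$, matched by $\inp{a}{b}.Q\stra{\einp{a}{c}}Q\sub{c}{b}$, and $(P\sub{c}{b},Q\sub{c}{b})\in\texttt{sub}(\R)\subseteq\texttt{sub}((id\cup\texttt{inp})(\R))$; one also checks the $\overline b_\alpha$ bound pair is in $\R$, which holds since $(P,Q)\in\overline b_\alpha(\R)\subseteq\R$ and substitution closure is built into \texttt{sub}. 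This direction is short. Altogether the real mathematical work is confined to the second bullet of the \texttt{sub} clause, where the Aliased Communication Property does all the lifting; the other items are routine transition-analysis given that input and output prefixes have a single, completely determined first move.
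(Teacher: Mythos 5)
Your proposal is correct and follows essentially the same route as the paper: the visible case is handled by noting that transitions of $P\sigma$ are $\sigma$-images of transitions of $P$, the $\tau$ case splits into an old $\tau$ (matched via $b_\tau$) versus a communication created by name merging (matched via $b_\alpha^2$ plus the Aliased Communication Property applied to $Q$, with $\equiv$ and $\mathtt{res}$ absorbing the residual discrepancy), the \texttt{inp} clause is vacuous, and the last clause reduces to $\texttt{inp}\subseteq b_\alpha\circ\texttt{sub}$. The only thing you assert without justification is the exhaustiveness of your case analysis for $\tau$-transitions of $P\sigma$; the paper imports this as Lemma~1.4.13 of Sangiorgi--Walker, and you should cite or prove it rather than treat it as evident.
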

\begin{proof}~
	\begin{itemize}
		\item
		We rely on \cite[Lemma 1.4.13]{SanWal}:
		\begin{lemma}~\label{l:dav}
		\begin{enumerate}
			\item If $P\sigma \stra{\alpha} P'$ then $P \stra{\alpha'}P''$ for
			some $\alpha',P''$ with $\alpha'\sigma = \alpha$ and $P''\sigma = P'$.
			\item If $P\sigma \stra{\tau} P'$ then
			\begin{enumerate}
				\item $P\stra\tau P''$ for some $P''$ with $P''\sigma = P'$, or
				\item $P\stra{\out{a}{b}}\stra{\einp{c}{b}}P''$ for some $P'',
				a,b,c$ with $a\sigma = c\sigma$ and $P''\sigma \equiv P'$, or
				\item $P\stra{\bout{a}{b}}\stra{\einp{c}{b}}P''$ for some $P'',
				a,b,c$ with $a\sigma = c\sigma$ and $\res{b}P''\sigma \equiv P'$.
			\end{enumerate}
		\end{enumerate}
		\end{lemma}
		$b_{\alpha}$-compatibility follows from the lemma and that if $Q\stra{\alpha'}
		Q'$, then $Q\sigma \stra{\alpha'\sigma} Q'\sigma$.
		
		Take a relation $R$ and some processes $P,Q$ such that $(P,Q) \in b^2_{\alpha}\cap b_\tau(\R)$. We want to show that for all $\sigma$,
		$(P\sigma,Q\sigma)\in 
		b_\tau\circ\mathcal{F}_{\equiv}\circ\mathtt{res}\circ\texttt{sub}(\R)$.
		
		Take $P\sigma \stra{\tau} P'$, by the lemma, we have three cases:
		\begin{enumerate}
			\item either $P\stra\tau P''$ for some $P''$ with $P''\sigma = P'$.
			Then, as $(P,Q)\in b_\tau(\R)$, $Q\stra{\tau}Q'$ and $P'\RR Q'$.
			Thus, $Q\sigma \stra{\tau} Q'\sigma$ and $(P'\sigma,Q'\sigma) 
			\in \texttt{sub}(\R) \subseteq 
			\mathcal{F}_{\equiv}\circ\mathtt{res}\circ\texttt{sub}(\R)$.
			
			\item or $P\stra{\out{a}{b}}\stra{\einp{c}{b}}P''$ for some $P'',
			a,b,c$ with $a\sigma = c\sigma$ and $P''\sigma \equiv P'$.
			Then, as $(P,Q)\in b^2_{\alpha}(\R)$, $Q\stra{\out{a}{b}}\stra{\einp{c}{b}}Q''$ and $P''\RR Q''$.
			Thus, by the Aliased Communication Property $Q\sigma \stra{\tau}
			Q''\sigma$ and $(P',Q''\sigma) \in 
			\mathcal{F}_{\equiv}\circ\texttt{sub}(\R) \subseteq
			\mathcal{F}_{\equiv}\circ\mathtt{res}\circ\texttt{sub}(\R)$.
			
			\item or $P\stra{\bout{a}{b}}\stra{\einp{c}{b}}P''$ for some $P'',
			a,b,c$ with $a\sigma = c\sigma$ and $\res{b}P''\sigma \equiv P'$.
			Then, as $(P,Q)\in b^2_{\alpha}(\R)$, $Q\stra{\bout{a}{b}}\stra{\einp{c}{b}}Q''$ and $P''\RR Q''$.
			Thus, by the Aliased Communication Property $Q\sigma \stra{\tau}
			\res{b}Q''\sigma$ and $(P',Q''\sigma) \in 
			\mathcal{F}_{\equiv}\circ\mathtt{res}\circ\texttt{sub}(\R)$.
		\end{enumerate}
		
		\item Trivial (no transition)

		\item We can prove $\texttt{inp} \subseteq b_{\alpha} \circ \texttt{sub}$.
		
		Then,
		\begin{align*}
			(id\cup \texttt{inp}) \circ \overline{b}_{\alpha} &= 
			\overline{b}_{\alpha} \cup \texttt{inp} \circ \overline{b}_{\alpha}\\
			&\subseteq {b}_{\alpha} \cup {b}_{\alpha} \circ \texttt{sub} \circ \overline{b}_{\alpha}\\
			&\subseteq {b}_{\alpha} \circ (id \cup \texttt{sub} \circ \overline{b}_{\alpha}) 
			& \text{by Lemma~\ref{l:mon:cap}}
		\end{align*}
		Thus we know that $id \cup \texttt{sub} \circ \overline{b}_{\alpha} \subseteq 
		\texttt{sub} \subseteq \texttt{sub} \circ (id\cup \texttt{inp})$.
		
		On the other hand, we have $(id\cup \texttt{inp}) \circ \overline{b}_{\alpha} 
		\subseteq (id\cup \texttt{inp}) \subseteq \texttt{sub} \circ 
		(id\cup \texttt{inp})$. So $(id\cup \texttt{inp}) \circ \overline{b}_{\alpha} 
		\subseteq {b}_{\alpha} \circ \texttt{sub} \circ (id\cup \texttt{inp}) \cap 
		\texttt{sub} \circ (id\cup \texttt{inp}) = \overline{b}_{\alpha} 
		\circ \texttt{sub} \circ (id\cup \texttt{inp})$.
	\end{itemize}
\end{proof}

The property defined in Definition~\ref{d:alias:prop} is only used to show that
\texttt{sub} is $b_{\alpha}^2,b_\tau$-compatible up to \texttt{res}. However, because the
compatibility of \texttt{inp} is shown up to \texttt{sub}, 
the soundness of the corresponding technique
relies on the compatibility result for \texttt{sub}.

Finally, if we aggregate all the results:
\begin{theorem}\label{t:full:ctxt}
	If the aliased communication property holds, then
	$(\mathcal{F}_{\sim}\cup id \cup \texttt{refl} \cup \texttt{sub}
	\cup\texttt{res}\cup \texttt{pcomp}\cup{\tt sum}\cup
	\texttt{rep} \cup \texttt{tau} \cup \texttt{out} \cup
	 \texttt{inp})^\omega$ is $\overline{b}$-sound.
\end{theorem}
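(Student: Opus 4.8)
\emph{Plan.} I would apply Theorem~\ref{th:comp:with:snd} with $g=\overline{b}_\alpha$, $h=b_\tau$ and $m=2$. Note first that $g\cap h=\overline{b}_\alpha\cap b_\tau=\overline{b}$, since $\overline{b}_\alpha(\R)=\R\cap b_\alpha(\R)$ gives $(\overline{b}_\alpha\cap b_\tau)(\R)=\R\cap b_\alpha(\R)\cap b_\tau(\R)=\R\cap b(\R)=\overline{b}(\R)$; so the theorem will yield exactly $\overline{b}$-soundness. Writing $F$ for the displayed union, I would apply the theorem to $f:=F^\omega$, which is monotone and idempotent with $F^\omega\supseteq id$, so that the function $f'=\bigcup_{i\le 2}(F^\omega)^i$ it produces equals $F^\omega$ and $f'^\omega=F^\omega$; hence it suffices to prove that $F^\omega$ is $\overline{b}_\alpha$-compatible and $\overline{b}_\alpha^2,b_\tau$-compatible. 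Two facts are used repeatedly: (i) every lemma of Sections~\ref{sec:upto:pi}--\ref{s:upto:inp} stated for $b_\alpha$ or $\overline{b}_\tau$ may be read with $\overline{b}_\alpha$ and $b_\tau$ in their place, because all the techniques are monotone and $\overline{b}_\alpha^2\subseteq\overline{b}_\alpha\subseteq id$ (so $\overline{b}_\alpha^2\cap b_\tau\subseteq\overline{b}_\tau$ and $\overline{b}_\alpha^2\subseteq b_\alpha^2$); and (ii) every auxiliary ``up to'' function occurring there --- $\texttt{res}$, $\texttt{pcomp}\cup id$, $\texttt{sub}$, and $\mathcal{F}_\equiv\circ\texttt{res}\subseteq\mathcal{F}_\sim\circ\texttt{res}$ --- is built from components of $F$, hence is contained in $F^\omega$, which is idempotent and so absorbs any finite composition of them.

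For $\overline{b}_\alpha$-compatibility of $F^\omega$: each component $f_i$ of $F$ is $\overline{b}_\alpha$-compatible up to some $f'_i\subseteq F$ (with $f'_i=id$ for $\mathcal{F}_\sim,\texttt{refl},\texttt{res},\texttt{sub},\texttt{tau},\texttt{out},\texttt{sum}$; $f'_i=\texttt{pcomp}\cup id$ for $id\cup\texttt{rep}$; $f'_i=\texttt{sub}$ for $id\cup\texttt{inp}$), so $f_i\circ\overline{b}_\alpha\subseteq\overline{b}_\alpha\circ f'_i\circ f_i\subseteq\overline{b}_\alpha\circ F^\omega$; taking the union, $F\circ\overline{b}_\alpha\subseteq\overline{b}_\alpha\circ F^\omega$, and a straightforward induction on $n$ (using monotonicity and $F^\omega\circ F^\omega=F^\omega$) gives $F^n\circ\overline{b}_\alpha\subseteq\overline{b}_\alpha\circ F^\omega$ for all $n$, whence $F^\omega\circ\overline{b}_\alpha\subseteq\overline{b}_\alpha\circ F^\omega$.

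For $\overline{b}_\alpha^2,b_\tau$-compatibility of $F^\omega$ I would first prove the more convenient inclusion $F\circ(\overline{b}_\alpha^2\cap b_\tau)\subseteq(\overline{b}_\alpha^2\cap b_\tau)\circ F^\omega$, which amounts to showing, for each component $f_i$, both $f_i\circ(\overline{b}_\alpha^2\cap b_\tau)\subseteq\overline{b}_\alpha^2\circ F^\omega$ and $f_i\circ(\overline{b}_\alpha^2\cap b_\tau)\subseteq b_\tau\circ F^\omega$. The first follows from $\overline{b}_\alpha^2\cap b_\tau\subseteq\overline{b}_\alpha^2$ and the computation $f_i\circ\overline{b}_\alpha^2=f_i\circ\overline{b}_\alpha\circ\overline{b}_\alpha\subseteq\overline{b}_\alpha\circ f'_i\circ f_i\circ\overline{b}_\alpha\subseteq\overline{b}_\alpha\circ f'_i\circ\overline{b}_\alpha\circ f'_i\circ f_i\subseteq\overline{b}_\alpha^2\circ f'_i\circ f'_i\circ f_i\subseteq\overline{b}_\alpha^2\circ F^\omega$ (each $f'_i$ is itself $\overline{b}_\alpha$-compatible, and $f'_i\circ f'_i\circ f_i\subseteq F^\omega$). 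The second is where the internal-action lemmas and the value $m=2$ enter: since $\overline{b}_\alpha^2\cap b_\tau$ lies below both $\overline{b}_\alpha\cap\overline{b}_\tau$ and $b_\alpha^2\cap b_\tau$, the relevant statement for each component ($\mathcal{F}_\sim,\texttt{refl},\texttt{res},id$ being $b_\alpha,\overline{b}_\tau$-compatible; $id\cup\texttt{tau},id\cup\texttt{out},id\cup\texttt{sum}$ being $\overline{b}_\alpha,\overline{b}_\tau$-compatible; $\texttt{pcomp}$ up to $\texttt{res}$; $id\cup\texttt{rep}$ up to $\texttt{pcomp}\cup id$; $\texttt{inp}$ being $\overline{b}_\alpha,b_\tau$-compatible; $\texttt{sub}$ being $b_\alpha^2,b_\tau$-compatible up to $\mathcal{F}_\equiv\circ\texttt{res}$) gives $f_i\circ(\overline{b}_\alpha^2\cap b_\tau)\subseteq b_\tau\circ g_i$ with $g_i\subseteq F^\omega$, hence $\subseteq b_\tau\circ F^\omega$. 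Combining the two inclusions componentwise (using that $X\subseteq\overline{b}_\alpha^2(S)$ and $X\subseteq b_\tau(S)$ give $X\subseteq(\overline{b}_\alpha^2\cap b_\tau)(S)$) and taking the union yields $F\circ(\overline{b}_\alpha^2\cap b_\tau)\subseteq(\overline{b}_\alpha^2\cap b_\tau)\circ F^\omega$; iterating as before, $F^\omega\circ(\overline{b}_\alpha^2\cap b_\tau)\subseteq(\overline{b}_\alpha^2\cap b_\tau)\circ F^\omega\subseteq b_\tau\circ F^\omega$. Theorem~\ref{th:comp:with:snd} then applies and gives that $F^\omega$ is $\overline{b}$-sound via $F^\omega$.

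The main obstacle is the bookkeeping of the ``up to'' components across the two iterations, and in particular recognising that for the internal-action half one must \emph{not} first collapse $\overline{b}_\alpha^2\cap b_\tau$ to $b_\tau$ and then peel off copies of $F$: because $\texttt{sub}$ is not $b_\tau$-compatible, $F$ does not commute past $b_\tau$, so the pair $\overline{b}_\alpha^2\cap b_\tau$ has to be carried on the left throughout the induction, which is exactly what the inclusion $F\circ(\overline{b}_\alpha^2\cap b_\tau)\subseteq(\overline{b}_\alpha^2\cap b_\tau)\circ F^\omega$ permits. The remaining care is in checking that each auxiliary and each relevant finite composition ($f'_i\circ f'_i\circ f_i$, $\mathcal{F}_\equiv\circ\texttt{res}\circ\texttt{sub}$, $\texttt{res}\circ\texttt{pcomp}$, $(\texttt{pcomp}\cup id)\circ(id\cup\texttt{rep})$, $\texttt{sub}\circ(id\cup\texttt{inp})$, $\dots$) is indeed contained in $F^\omega$, and that $\overline{b}_\alpha^2\cap b_\tau$ lies below whichever of $b_\alpha$, $b_\alpha^2$, $\overline{b}_\tau$, $b_\tau$ is the precondition of the lemma invoked for that component.
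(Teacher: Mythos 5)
Your proposal is correct and follows the route the paper intends for this theorem (which it leaves implicit behind ``aggregate all the results''): instantiate Theorem~\ref{th:comp:with:snd} with $g=\overline{b}_{\alpha}$, $h=b_\tau$, $m=2$, after normalising the componentwise compatibility and compatibility-up-to lemmas to the common pair $(\overline{b}_{\alpha},b_\tau)$ via $\overline{b}_{\alpha}^2\cap b_\tau\subseteq b_{\alpha}\cap\overline{b}_\tau$ and absorbing all auxiliary functions into $F^\omega$. The only slips are cosmetic: $\texttt{out}$ and $\texttt{sum}$ are not $\overline{b}_{\alpha}$-compatible on their own (only $id\cup\texttt{out}$ and $id\cup\texttt{sum}$ are, exactly as you already handle $\texttt{rep}$ and $\texttt{inp}$), which is harmless since $id\subseteq F$, and your iteration step relies on the idempotence of $F^\omega$, a (continuity) assumption the paper also uses tacitly whenever it invokes $f^\omega$.
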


\subsection{Subcalculi satisfying the aliased communication property}
We present two subcalculi satisfying the aliased communication property.
The property does not hold in general because of processes like $\outC{a}.b$.
Thus, we look at \Api, where outputs cannot guard processes, and processes
with immediately available names, where dually inputs cannot be guarded.

\paragraph{Asynchronous $\pi$-calculus.}
The asynchronous $\pi$-calculus is defined by imposing that outputs no longer
guard a process, meaning that there are forbidden in sums and
in $\out{a}{b}.P$, we have $P=\nil$.

\begin{lemma}
	\Api{} satisfies the Aliased Communication Property.
\end{lemma}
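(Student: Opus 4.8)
The plan is to verify the two bullet points in Definition~\ref{d:alias:prop} directly, using the restricted syntax of \Api{} where the only outputs are of the form $\out{a}{b}.\nil$ (equivalently $\outC{a}{b}$) and these cannot appear under a prefix or in a sum. The key structural observation is that in \Api{}, an output transition never consumes any ``continuation'': if $P \stra{\out{a}{b}} P'$, then the output $\outC{a}{b}$ occurred as a parallel component sitting somewhere inside $P$, and $P'$ is just $P$ with that component erased (more precisely, $P \equiv \res{\many{c}}(\outC{a}{b} \mid P')$ with $a,b \notin \many{c}$, or the bound-output analogue). This is the asynchronous analogue of the standard ``output prefixes carry no guard'' lemma, and it should be provable by a routine induction on the derivation of $P \stra{\out{a}{b}} P'$, inspecting each SOS rule of Figure~\ref{f:sos} and noting that \textsc{Out} now has $P = \nil$ so no continuation is released, \textsc{Sum} is inapplicable to outputs, and \textsc{Inp}, \textsc{Comm}, \textsc{Close} cannot produce an output label.

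Granting that structural lemma, I would handle the first bullet as follows. Suppose $P \stra{\out{a}{b}} P_1 \stra{\einp{c}{b}} P'$. By the lemma, $P \equiv \res{\many{d}}(\outC{a}{b} \mid P_1)$ with $a,b \notin \many{d}$. Now I claim that since \Api{} outputs are inert, $P_1 \stra{\einp{c}{b}} P'$ lifts back to a transition of $P$: indeed $P \stra{\einp{c}{b}} \res{\many{d}}(\outC{a}{b} \mid P')$ by \textsc{Par}/\textsc{Res} (the output component is unaffected by the input, and side conditions on bound names are met after suitable $\alpha$-conversion). Then I would apply any substitution $\sigma$ with $a\sigma = c\sigma$: pushing $\sigma$ through, $P\sigma \equiv \res{\many{d}}(\outC{a\sigma}{b\sigma} \mid P_1\sigma)$, and $P_1\sigma \stra{\einp{c\sigma}{b\sigma}} P'\sigma$ because transitions are preserved under substitution (the easy direction, analogous to the forward half of Lemma~\ref{l:dav}). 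Since $a\sigma = c\sigma$, the output component $\outC{a\sigma}{b\sigma}$ and the input action on $c\sigma$ now synchronise via \textsc{Comm}, giving $P\sigma \stra{\tau} \res{\many{d}}(P'\sigma) $; one checks the names line up so this is $P'\sigma$ (up to $\equiv$, which is fine — and note the statement as written does not even carry an $\equiv$, so the syntactic conventions of the paper must make this an equality, or the intended reading is up to $\equiv$ as in Lemma~\ref{l:dav}). The second bullet is the same argument with \textsc{Open} producing the bound output $\bout{a}{b}$: now $b$ is a restricted name that gets extruded, so the synchronisation uses \textsc{Close} instead of \textsc{Comm}, which is exactly why the conclusion is $P\sigma \stra{\tau} \res{b}P'\sigma$ rather than $P\sigma \stra{\tau} P'\sigma$.

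The main obstacle, I expect, is bookkeeping with bound names and structural congruence rather than anything conceptual: making the ``outputs are inert'' lemma precise requires care about whether $b$ is free or bound in $P$ (the free case needs $b \notin \many{d}$; the bound-output case is where $b$ is among the restrictions and \textsc{Open} fires), and one must $\alpha$-convert to ensure the input action $\einp{c}{b}$ is compatible with the surrounding restrictions $\many{d}$ and with the side conditions of \textsc{Par}, \textsc{Close} (e.g.\ $b \notin \mathrm{fn}(\outC{a\sigma}{b\sigma})$ for \textsc{Close}, which forces $b \neq a$, itself guaranteed since the first action was $\bout{a}{b}$). A cleaner route avoiding explicit structural manipulation is to prove a targeted two-step lemma by induction on the derivation of the first transition $P \stra{\out{a}{b}} P_1$ (resp.\ $\bout{a}{b}$), carrying as hypothesis ``for all second transitions $P_1 \stra{\einp{c}{b}} P'$ and all $\sigma$ with $a\sigma = c\sigma$, $P\sigma \stra{\tau} P'\sigma$'' — the interesting case is \textsc{Par}, where the output may come from either side and one distributes accordingly, and \textsc{Open} for the bound case; the remaining rules are either vacuous or immediate. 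Either way, there is no delicate analysis needed because the asynchronous restriction removes precisely the problematic pattern (an output guarding a continuation, as in $\outC{a}.b$ flagged in the text just before the lemma), so once the setup is right the synchronisation step is forced.
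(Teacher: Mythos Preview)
Your proposal is correct. The paper's own proof is a one-line citation: it simply invokes Lemma~5.3.2~(3) and~(4) of Sangiorgi--Walker, which are precisely the two implications of Definition~\ref{d:alias:prop} for \Api. What you sketch --- the structural observation that in \Api{} an output transition merely erases a parallel particle $\outC{a}{b}$ (so that $P \equiv \outC{a}{b}\mid P_1$ when $P\stra{\out{a}{b}}P_1$), hence the subsequent input was already enabled and can be synchronised with that particle after applying $\sigma$ --- is the standard proof of that cited lemma. Your second route, a direct induction on the derivation of the output transition carrying the full two-step conclusion, is indeed cleaner: it avoids the $\equiv$ bookkeeping and handles \textsc{Rep} transparently via the shape of the rule. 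So the mathematics is the same; the paper outsources it, you unpack it.
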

\begin{proof}
	This is the direct application of Lemma 5.3.2 (3) and (4) 
	in \cite{SanWal}.
\end{proof}
\medskip

\paragraph{Immediately available names.}
Immediately available names may only be used in input
as soon as the name is created.
This is a weaker notion than linear receptiveness or uniform receptiveness
\cite{DBLP:journals/tcs/Sangiorgi99} which impose
that exactly one input (resp. replicated input) must be accessible.

This discipline is formalised by the following typing rules where $\Gamma$
is the set of name that can be used as input.
\begin{mathpar}
	\inferrule{\typ{\emptyset}{P}}
	{\typ{\Gamma}{\tau.P, \out{a}{b}.P}}
	\and
	\inferrule{\typ{\Gamma}{G}}{\typ{\Gamma}{!G}}
	\and
	\inferrule{\typ{\emptyset}{P} \and a\in\Gamma}
	{\typ{\Gamma}{\inp{a}{b}.P}}
	\and
	\inferrule{\typ{\Gamma,a}{P}}{\typ{\Gamma}{\res{a}P}}
	\and
	\inferrule{\typ{\Gamma}P\and\typ{\Gamma}Q}{\typ{\Gamma}{P|Q}}
	\and
	\inferrule{\typ{\emptyset}G\and\typ{\emptyset}G'}{\typ{\emptyset}{G+G'}}
\end{mathpar}

Note that because of the typing rule for sum, inputs are forbidden in sums.

Typable processes form a subcalculus of the $\pi$-calculus.
Indeed, the set of typable processes is closed by transitions as expressed 
by the lemma below.
\begin{lemma}[Subject Reduction]
	If $\typ{\Gamma}{P}$ and $P\stra{\mu}P'$, then $\typ{\Gamma\cup\bn{\mu}}{P'}$.
\end{lemma}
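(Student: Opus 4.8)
The plan is to proceed by induction on the derivation of the transition $P \stra{\mu} P'$ (the rules of Figure~\ref{f:sos}), using two auxiliary properties of the type system, each proved beforehand by induction on typing derivations. The first is a \emph{substitution lemma}: if $\typ{\Gamma}{P}$ then $\typ{\Gamma\sigma}{P\sigma}$, where $\Gamma\sigma = \{x\sigma \mid x \in \Gamma\}$ and the bound names of $P$ are chosen fresh for $\sigma$; in particular $\typ{\emptyset}{P}$ implies $\typ{\emptyset}{P\sigma}$ for every $\sigma$. The second is a \emph{weakening property}: if $\typ{\Gamma}{P}$ and $\Gamma \subseteq \Gamma'$ then $\typ{\Gamma'}{P}$. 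Observe that the term $\bn{\mu}$ in the conclusion is exactly what the \textsc{Open} rule forces: the extruded name must join the typing context, while no other rule enlarges it.

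Most cases of the induction are bookkeeping. For \textsc{Out}, inverting the prefix rule gives $\typ{\emptyset}{}$ of the continuation and $\bn{\out{a}{b}} = \emptyset$, so there is nothing to prove. For \textsc{Par}, \textsc{Res}, \textsc{Rep}, \textsc{Comm} and \textsc{Close} one inverts the matching typing rule, applies the induction hypothesis to the premise transition(s), and re-applies the typing rule; the only thing to track is that a name extruded by a bound output enters the context through the induction hypothesis and, in \textsc{Res} and \textsc{Close}, is immediately re-bound by the enclosing restriction, consistently with $\bn{\tau} = \emptyset$. For \textsc{Sum}, the point is that $\typ{\Gamma}{G + G''}$ forces $\Gamma = \emptyset$ and gives $\typ{\emptyset}{G}$, so the induction hypothesis on $G \stra{\mu} G'$ yields $\typ{\bn{\mu}}{G'} = \typ{\Gamma \cup \bn{\mu}}{G'}$, exactly the goal. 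For \textsc{Open}, from $\typ{\Gamma}{\res{b}P}$ we obtain $\typ{\Gamma \cup \{b\}}{P}$, the induction hypothesis on $P \stra{\out{a}{b}} P'$ gives $\typ{\Gamma \cup \{b\}}{P'}$, and $\Gamma \cup \{b\} = \Gamma \cup \bn{\bout{a}{b}}$.

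The case I expect to be the real obstacle is \textsc{Inp}, where $\inp{a}{b}.P$ receives a name $c$ and reduces to $P\sub{c}{b}$. Inverting the input typing rule yields $a \in \Gamma$ together with $\typ{\emptyset}{P}$: the rule only records that the continuation is typable in the \emph{empty} context. Since the input action has no bound names, we must exhibit $\typ{\Gamma}{P\sub{c}{b}}$. This is where the two auxiliary facts combine — the substitution lemma gives $\typ{\emptyset}{P\sub{c}{b}}$, and weakening lifts this to $\typ{\Gamma}{P\sub{c}{b}}$ — so all the delicacy of the proof is concentrated in those lemmas, and in particular in their behaviour on the sum rule: since a sum is typable only in the empty context, the substitution lemma must be arranged so that its induction re-enters the sum rule only with an empty context (which is the case, as $\emptyset\sigma = \emptyset$), and the weakening property must genuinely be available for the judgments that arise when retyping the continuation of an input. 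Establishing these two facts and checking their interaction with the sum rule is the only place where actual work is needed; everything else is a routine traversal of the transition rules.
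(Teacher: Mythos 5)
The paper states this lemma without a proof, so there is nothing to compare against; judged on its own, your architecture --- induction on the derivation of $P\stra{\mu}P'$, supported by a substitution lemma and a weakening lemma --- is the standard and correct one, your case analysis of the transition rules is accurate, and the substitution lemma does go through (the sum case survives because $\emptyset\sigma=\emptyset$). But the check you explicitly defer is precisely the one that fails: with the typing rules exactly as printed, the weakening property is \emph{false}. The only rule whose conclusion types a sum is $\typ{\emptyset}{G+G'}$, with the context pinned to $\emptyset$, so $\typ{\{a\}}{\tau.\nil+\tau.\nil}$ is not derivable even though $\typ{\emptyset}{\tau.\nil+\tau.\nil}$ is. Consequently your \textsc{Inp} case breaks, and indeed the statement itself has a counterexample as printed: $\typ{\{a\}}{\inp{a}{b}.(\tau.\nil+\tau.\nil)}$ holds, the process reduces by $\stra{\einp{a}{c}}$ to $(\tau.\nil+\tau.\nil)\sub{c}{b}=\tau.\nil+\tau.\nil$, which is not typable in $\{a\}$. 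The repair is to generalise the conclusion of the sum rule to an arbitrary $\Gamma$ while keeping its premises in the empty context (this does not change which processes are typable in spirit, and inputs remain forbidden in sums); weakening then holds by an easy induction on typing derivations and your proof is complete. As it stands, asserting the weakening lemma and postponing its verification to "checking the interaction with the sum rule" leaves the proof resting on a claim that the printed system refutes.

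Two smaller points. Weakening is needed in more cases than \textsc{Inp}: in \textsc{Par} and \textsc{Close}, when the action is a bound output $\bout{a}{b}$, the induction hypothesis enlarges the context only for the component that performs the action, and the idle component must be weakened from $\Gamma$ to $\Gamma\cup\{b\}$ before the parallel (and then restriction) rule can be reapplied; your description of these cases as pure bookkeeping hides a second use of the very lemma in question. Also, in the substitution lemma you should record that the input rule's side condition is stable under substitution ($a\in\Gamma$ gives $a\sigma\in\Gamma\sigma$) --- immediate, but it is the one place where the context is actually inspected.
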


\begin{lemma}
	The set of typable processes satisfies the Aliased Communication Property.
\end{lemma}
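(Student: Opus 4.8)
The statement to prove: if $P$ is typable (with $\typ{\Gamma}{P}$ for some $\Gamma$), then the two conditions of Definition~\ref{d:alias:prop} hold.

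So I need to show: given $P\stra{\out{a}{b}}\stra{\einp{c}{b}}P'$ with $P$ typable, then $P\sigma \stra{\tau} P'\sigma$ whenever $a\sigma = c\sigma$; and similarly for the bound output case.

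The key observation: in the typing discipline, names usable for input are restricted. When $P$ does an output $\out{a}{b}$, then subsequently an input on $c$ — the question is whether this input on $c$ could have been done "before" or "independently" in a way that allows pairing it with the output. The intuition from the paper is that for immediately available names, once a name $c$ can be received on, it must be received on "immediately" at top level. Let me think through the structure.

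Let me sketch the approach.

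The plan is to analyze the transition $P\stra{\out{a}{b}}P_1$ by induction on its derivation, tracking the typing. Since $P$ is typable, the subterm performing the output $\out{a}{b}.R$ satisfies $\typ{\emptyset}{R}$, so $R$ has no free input names; in particular the continuation $R$ cannot be the one doing the input on $c$. Therefore, in $P_1$, the input $\stra{\einp{c}{b}}$ must come from a subterm that was already in $P$ (running in parallel with the output prefix), not from the continuation of the output. The main technical step is to make this precise: using Subject Reduction, $P_1$ is typable in $\Gamma\cup\bn{\out{a}{b}} = \Gamma$; and structurally $P_1 \equiv (\new\many{d})(R\mid S)$ where $\out{a}{b}.R$ was a top-level output prefix of $P$ (under restrictions $\many d$) and $S$ is the rest, with $P \equiv (\new\many d)(\out{a}{b}.R \mid S)$ (up to $\equiv$, and with $b\notin\many d$ in the free-output case, $b\in\many d$ in the bound-output case). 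Since $\typ{\emptyset}{R}$, the further transition $P_1 \stra{\einp{c}{b}}P'$ on name $c$ must be a transition of $S$ — i.e.\ $S \stra{\einp{c}{b}} S'$ with $P' \equiv (\new\many d)(R\mid S')$ (reshuffling restrictions appropriately, and noting $c\notin\many d$ since $c$ is free in the label).

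Now I can close the diagram. From $P \equiv (\new\many d)(\out{a}{b}.R\mid S)$, apply $\sigma$: $P\sigma \equiv (\new\many d)(\out{a\sigma}{b}.R\sigma \mid S\sigma)$ (choosing $\many d$ fresh for $\sigma$). We have $\out{a\sigma}{b}.R\sigma \stra{\out{a\sigma}{b}} R\sigma$ and $S\sigma \stra{\einp{c\sigma}{b}} S'\sigma$ (transitions are preserved by substitution — this is the easy direction, the "if $Q\stra{\alpha'}Q'$ then $Q\sigma\stra{\alpha'\sigma}Q'\sigma$" fact already used in the \texttt{sub} lemma). Since $a\sigma = c\sigma$, rule \textsc{Comm} (or \textsc{Close} in the bound case) applies, giving $P\sigma \stra{\tau} (\new\many d)(R\sigma\mid S'\sigma) \equiv P'\sigma$ (respectively $\equiv \res{b}P'\sigma$ when $b$ was extruded). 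Use $\mathcal F_\equiv$-closure of the relevant relations if needed, or simply note that $\stra\tau$ is defined up to the structural manipulations used — actually the cleanest is to establish the structural decompositions genuinely up to $\equiv$ and then observe transitions are closed under $\equiv$.

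The main obstacle I expect is the structural decomposition step: rigorously justifying that after a top-level output, the only way to then perform an input on a free name $c$ is via a parallel component untouched by the output, and getting the restrictions/binders bookkeeping right (especially in the bound-output / \textsc{Close} case, where $b$ is extruded and must be re-bound). This requires either a careful induction on transition derivations with the typing invariant threaded through, or an appeal to a normal-form / head-decomposition lemma for $\pi$-calculus transitions; I would isolate it as a sublemma of the form: \emph{if $\typ{\Gamma}{P}$ and $P\stra{\out ab}P_1\stra{\einp cb}P'$ with $a,c\in\Gamma$ (or $a\in\Gamma$, $b\notin\Gamma$ for bound output) then $P\equiv(\new\many d)(\out ab.R\mid S)$ with $\typ\emptyset R$, $S\stra{\einp cb}S'$, and $P'\equiv(\new\many d)(R\mid S')$} — with the appropriate variant for \textsc{Close}. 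Everything after that sublemma is a routine application of $\equiv$-closure of transitions and closure of transitions under substitution, exactly mirroring the \Api{} case which the paper got for free from \cite{SanWal}.
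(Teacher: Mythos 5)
Your approach is sound, and the crux of your argument is exactly the right one: the typing rule for output prefixes forces $\typ{\emptyset}{R}$ for the continuation $R$ of $\out{a}{b}.R$, so $R$ has no unguarded input prefix on any free name and hence cannot contribute the subsequent $\einp{c}{b}$ transition; the input must therefore come from a parallel component $S$ that was already enabled in $P$, and after aliasing $a\sigma=c\sigma$ the two components synchronise via \textsc{Comm} (or under the restriction on $b$ in the bound-output case). For reference, the paper offers no proof of this lemma at all --- the only related material is Lemma~\ref{l:key:recep} (preponing of inputs), which is used to lift the strong property to the weak one but does not by itself yield the strong property --- so there is nothing to match your proof against; judged on its own, your argument establishes the statement.

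Two remarks on presentation. First, the decomposition sublemma you defer ($P\equiv(\new\many d)(\out ab.R\mid S)$, etc.) is indeed where all the technical work lives, and it drags in a structural congruence $\equiv$ together with a harmony/decomposition lemma that the paper never defines; the replication rule (\textsc{Rep} unfolds $!G$ to $!G\mid G$ inside the derivation) and the \textsc{Open}/\textsc{Close} bookkeeping make this less routine than it looks. A direct induction on the derivation of $P\stra{\out ab}P_1$ (with the typing invariant threaded through, the \textsc{Out} base case being vacuous since $\typ{\emptyset}{R}$ forbids the second transition there, and the \textsc{Par} case splitting on which side performs the input) proves the same statement while staying on raw terms; this mirrors the proof of the \Api{} analogue cited from \cite{SanWal} and is probably the intended argument. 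Second, your conclusion lands on a process that is only structurally congruent to $P'\sigma$, whereas Definition~\ref{d:alias:prop} asks for $P'\sigma$ on the nose; this is harmless given that the only consumer of the property (the \texttt{sub} compatibility proof) already works up to $\mathcal F_\equiv$, but it is worth flagging, and the derivation-induction route avoids the issue entirely.
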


As a consequence of Theorem~\ref{t:full:ctxt}, the up-to context techniques
is sound for both subcalculi.

\subsection{The weak case}
We show how these results can be adapted to the weak case. The weak arrows are
defined as usual: $\wtra{}\defi \stra{\tau}^*$, $\wtra{\alpha} \defi 
\wtra{}\stra{\alpha}\wtra{}$, $\wtrah{\alpha} \defi \wtra{\alpha}$ and
$\wtrah{\tau} \defi  \wtra{}$.

We define the simulation functions for the weak case $ws,ws_\alpha,ws_\tau$,
the corresponding bisimulation functions $wb, wb_\alpha, wb_\tau$, and
their variant
$\overline{wb}, \overline{wb}_\alpha, \overline{wb}_\tau$ follow as expected.
\begin{align*}
	ws(\R) &\defi \set{(P,Q) \OR \text{ for all } \mu, P', P\stra{\mu}P'
		\text{ implies there exists } Q' \text{ s.t } Q\wtrah{\mu}Q', P' \RR Q'}\\
	ws_{\alpha}(\R) &\defi \set{(P,Q) \OR \text{ for all } \alpha, P', P\stra{\alpha}P'
		\text{ implies there exists } Q' \text{ s.t } Q\wtra{\alpha}Q', P' \RR Q'}\\
	ws_{\tau}(\R) &\defi \set{(P,Q) \OR \text{ for all } P', P\stra{\tau}P'
		\text{ implies there exists } Q' \text{ s.t } Q\wtra{}Q', P' \RR Q'}
\end{align*}
Weak bisimilarity is noted $\wba$.

Most results true in the strong case also hold in the weak case. We give details
about those whose statement or proof need to be adapted.

First, it is known that up-to weak bisimilarity is not a sound technique.
However, we can still use up-to strong bisimilarity but also use the
expansion preorder $\gtrsim$.

Take $s'(\R) \defi \set{(P,Q) \OR \text{ for all } \mu, P', P\stra{\mu}P'
	\text{ implies there exists } Q' \text{ s.t } Q\strah{\mu}Q', P' \RR Q'}$
where $\strah{\alpha} \defi \stra{\alpha}$ and $\strah{\tau} \defi \stra{\tau}^=$.
Then $\gtrsim$ is the largest relation $\R$ such that $\R \subseteq ws(\R)\cap 
s'(\R^{-1})$.

\begin{lemma}
	$\mathcal{F}_{\sim}, \mathcal{F}_{\gtrsim}$ is $wb_{\alpha}$-compatible. 
\end{lemma}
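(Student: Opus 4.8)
The plan is to prove $wb_\alpha$-compatibility separately for $\mathcal{F}_{\sim}$ and for $\mathcal{F}_{\gtrsim}$, since each follows from a congruence-type property of the relation indexing the context. Recall $\mathcal{F}_S(\R) = S\R S^{-1}$, so I need to show $\mathcal{F}_S \circ wb_\alpha \subseteq wb_\alpha \circ \mathcal{F}_S$ for $S \in \{\sim, \gtrsim\}$. Unfolding, given $(P,Q) \in \mathcal{F}_S(wb_\alpha(\R))$, there are $P_0, Q_0$ with $P \mathrel{S} P_0$, $(P_0, Q_0) \in wb_\alpha(\R)$, $Q_0 \mathrel{S^{-1}} Q$; I must produce $P', Q'$ with $P \mathrel{S} P'$, $(P', Q') \in \R$, $Q' \mathrel{S^{-1}} Q$ matching any visible transition $P \stra{\alpha} P'$.

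First I would handle $\mathcal{F}_{\sim}$: from $P \sim P_0$ and $P \stra{\alpha} P'$, strong bisimilarity gives $P_0 \stra{\alpha} P_0'$ with $P' \sim P_0'$; then $(P_0, Q_0) \in wb_\alpha(\R)$ yields $Q_0 \wtra{\alpha} Q_0'$ with $(P_0', Q_0') \in \R$; finally $Q_0 \sim Q$ transports this along the weak transition — here I need that $\sim$ is preserved under weak transitions, i.e.\ $Q_0 \sim Q$ and $Q_0 \wtra{\alpha} Q_0'$ give $Q \wtra{\alpha} Q'$ with $Q_0' \sim Q'$, which follows by iterating the strong bisimulation game over each component $\stra{\tau}$ and the single $\stra{\alpha}$. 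Setting $P' \mathrel{\sim} P_0' \mathrel{\sim}$ (composing) and $Q_0' \sim Q'$, transitivity of $\sim$ gives the pair $(P', Q')$ in $\sim \circ \R \circ \sim = \mathcal{F}_{\sim}(\R)$ — wait, I want it in $\R$ directly: so I should instead use that $\mathcal{F}_{\sim}$ composed with $\R$ lands in $\mathcal{F}_{\sim}(\R)$ is not what compatibility wants; rather, since the output of $wb_\alpha \circ \mathcal{F}_{\sim}$ is $wb_\alpha(\mathcal{F}_{\sim}(\R))$, I want $(P', Q') \in \mathcal{F}_{\sim}(\R)$, which is exactly $P' \sim P_0'$, $(P_0',Q_0') \in \R$, $Q_0' \sim Q'$ — so no transitivity is even needed. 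Good; this is the clean statement.

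For $\mathcal{F}_{\gtrsim}$ the argument is the same in shape but uses the asymmetric laws of expansion: $P \gtrsim P_0$ and $P \stra{\alpha} P'$ give $P_0 \stra{\alpha} P_0'$ with $P' \gtrsim P_0'$ (since $\gtrsim \subseteq ws$, a strong move of $P$ is matched by a possibly-slower move of $P_0$ — actually expansion requires $P$ the \emph{faster} side, so I must be careful which side $\gtrsim$ faces in $\mathcal{F}_{\gtrsim}(\R) = {\gtrsim}\,\R\,{\lesssim}$; on the $P$-side $P \gtrsim P_0$ means $P$ is faster, so $P \stra{\alpha} P'$ is matched by $P_0 \stra{\alpha} P_0'$ with $P' \gtrsim P_0'$ using the $s'$-clause), then $(P_0,Q_0) \in wb_\alpha(\R)$ gives $Q_0 \wtra{\alpha} Q_0'$ with $(P_0',Q_0') \in \R$, then $Q_0 \lesssim Q$ (i.e.\ $Q \gtrsim Q_0$) transports the weak move: $Q \wtra{\alpha} Q'$ with $Q_0' \lesssim Q'$, using that expansion is preserved under weak transitions of the slower side. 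So $(P', Q') \in {\gtrsim}\,\R\,{\lesssim} = \mathcal{F}_{\gtrsim}(\R)$, as required.

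The main obstacle is bookkeeping the direction of $\gtrsim$ versus $\lesssim$ through each of the two transitions and confirming the transport lemmas: (i) $\sim$ and $\gtrsim$ are preserved under weak transitions in the appropriate sense, and (ii) the first transition of $P$ (which is a single $\stra{\alpha}$, strong) is matched by a single $\stra{\alpha}$ on $P_0$, not a weak one — for $\sim$ this is immediate, for $\gtrsim$ it is exactly the $s'$-clause with $\strah{\alpha} = \stra{\alpha}$. I would state these as small auxiliary facts (or cite standard properties of $\sim$ and $\gtrsim$, e.g.\ from \cite{SanWal}) and then the two compatibility arguments are routine diagram chases; the union $\mathcal{F}_{\sim} \cup \mathcal{F}_{\gtrsim}$ being $wb_\alpha$-compatible then follows from the union lemma for compatible functions since $wb_\alpha$ is monotone.
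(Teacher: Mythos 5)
Your $\mathcal{F}_{\sim}$ argument is correct: a visible move of $P$ is matched by a single strong move of $P_0$, $wb_{\alpha}(\R)$ converts that into a weak move of $Q_0$, and strong bisimilarity transports the weak move step by step to $Q$, landing the residuals directly in ${\sim}\R{\sim} = \mathcal{F}_{\sim}(\R)$ with no transitivity needed — that is the right diagram chase.

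The $\mathcal{F}_{\gtrsim}$ half has a genuine gap at exactly the point you flagged. You justify the crucial first leg — that $P \gtrsim P_0$ and $P \stra{\alpha} P'$ yield a \emph{single strong} $P_0 \stra{\alpha} P_0'$ with $P' \gtrsim P_0'$ — by saying ``$P$ is the faster side, so use the $s'$-clause''. But read with the same convention as $b(\R) = s(\R)\cap s(\R^{-1})$, the defining condition $\R \subseteq ws(\R)\cap s'(\R^{-1})$ assigns the $ws$-clause to challenges of the \emph{left} component and the $s'$-clause to challenges of the \emph{right} component (and it makes the left component the fast one: $a \gtrsim \tau.a$). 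Under that reading, a challenge $P\stra{\alpha}P'$ is only matched by a weak $P_0 \wtra{\alpha} P_0'$, and this cannot be pushed through $wb_{\alpha}(\R)$, which constrains only single visible actions of $P_0$ and says nothing about the interleaved $\tau$-steps; the chase dies there. (Your intuition is also inverted: if $P$ is the faster side, the slower $P_0$ may need extra $\tau$-steps to answer, i.e.\ precisely a weak match.) The diagram closes only if $\gtrsim$ is oriented so that moves of the $\gtrsim$-larger process are matched in at most one step, i.e.\ the orientation with $\tau.a \gtrsim a$, in which case the fact you need is exactly the $\strah{\alpha}=\stra{\alpha}$ clause. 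You must therefore pin down the orientation before the chase — note that with the wrong one $\mathcal{F}_{\gtrsim}$ is not even a sound technique for $\overline{wb}$, by the classical failure of weak bisimulation up to $\wba$ (relating the residual $a$ back to $\tau.a$) — and then redo the first leg accordingly. The third leg, transporting $Q_0 \wtra{\alpha} Q_0'$ to $Q$, is unproblematic in either orientation since both clauses compose into a weak transition, and the union argument at the end is fine.
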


The aliased communication property needs also to be changed to use weak arrows
so that the proof of substitution goes without trouble.
\begin{definition}[Weak Aliased Communication Property]\label{d:alias:prop:w}
	We say that a set of processes $\mathcal P$ satisfies the
	\emph{weak aliased communication property} if for all process $P$ in
	$\mathcal P$, we have the following properties:
	
	\begin{itemize}
		\item $P\wtra{\out{a}{b}}\wtra{\einp{c}{b}}P'$ implies
		$P\sigma \wtra{} P'\sigma$ for all $\sigma$ s.t.\ $\sigma(a)=\sigma(c)$.
		\item $P\wtra{\res{b}\out{a}{b}}\wtra{\einp{c}{b}}P'$ implies
		$P\sigma \wtra{} \res{b}P'\sigma$ for all $\sigma$ s.t.\
		$\sigma(a)=\sigma(c)$.  		
	\end{itemize}
	
\end{definition}

Weak bisimilarity is not a congruence for sum. Indeed, we have
$\tau.a \wba a$ but $\tau.a+b \not\wba a+b$. Congruence is usually recovered by
considering \emph{non-degenerate} contexts, that is, contexts where the hole
is not directly under a sum operator. Therefore, we want to prove the soundness 
of up-to
non-degenerate contexts, and thus we use the up-to guarded sum technique
instead of the previous up-to sum technique:

$${\tt sum_g}(\R) = \set{(\sum_i G_i, \sum_i G'_i) \OR \forall i, (G_i,G'_i)\in
	(\texttt{tau}\cup\texttt{out}\cup\texttt{inp}\cup\texttt{refl})(\R)}$$

\begin{lemma}
	$id\cup{\tt sum_g}$ is $\overline{b}_{\alpha}$-compatible up to
	$\texttt{sub}\cup\texttt{refl}$ and 
	$\overline{b}_{\alpha},\overline{b}_\tau$-compatible up to $id\cup\texttt{refl}$.
\end{lemma}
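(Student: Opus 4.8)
The plan is to unfold both ``compatible up to'' assertions and reduce each to the simulation game on a single guarded sum. Writing $f=id\cup{\tt sum_g}$, the first assertion asks that $\texttt{sub}\cup\texttt{refl}$ be expansive and $f\circ\overline{b}_\alpha\subseteq\overline{b}_\alpha\circ(\texttt{sub}\cup\texttt{refl})\circ f$, and the second that $id\cup\texttt{refl}$ be expansive and $f\circ\overline{b}\subseteq\overline{b}_\tau\circ(id\cup\texttt{refl})\circ f$, using $\overline{b}_\alpha\cap\overline{b}_\tau=\overline{b}$. I would first dispatch the easy parts: $\texttt{sub}$ is expansive (take the identity substitution) and $id\cup\texttt{refl}$ contains $id$; the $id$ summand of $f$ contributes $\overline{b}_\alpha(\R)\subseteq\R$, which by monotonicity of $\overline{b}_\alpha$ is contained in $\overline{b}_\alpha((\texttt{sub}\cup\texttt{refl})(f(\R)))$ because $\R\subseteq(\texttt{sub}\cup\texttt{refl})(f(\R))$ (and symmetrically for $\overline{b}$ and $id\cup\texttt{refl}$); and since ${\tt sum_g}$ is monotone and $\overline{b}_\alpha(\R)\subseteq\R$ (resp.\ $\overline{b}(\R)\subseteq\R$), ${\tt sum_g}(\overline{b}_\alpha(\R))\subseteq{\tt sum_g}(\R)\subseteq f(\R)$, which already meets the ``respectfulness'' conjunct of $\overline{b}_\alpha$ (resp.\ $\overline{b}_\tau$) on the right. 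So everything comes down to the simulation game for a pair $(\sum_i G_i,\sum_i G_i')\in{\tt sum_g}(\overline{b}_\alpha(\R))$, where for each $i$ either $(G_i,G_i')$ equals $(\tau.P_i,\tau.Q_i)$, $(\out{a}{b}.P_i,\out{a}{b}.Q_i)$ or $(\inp{a}{b}.P_i,\inp{a}{b}.Q_i)$ with $(P_i,Q_i)\in\overline{b}_\alpha(\R)\subseteq\R$, or $G_i=G_i'$.

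For the visible case I would note that any transition $\sum_i G_i\stra{\alpha}P''$ is, by the rule \textsc{Sum}, derived from a single summand $G_j\stra{\alpha}P''$. If $G_j=\out{a}{b}.P_j$, then $\sum_i G_i'\stra{\out{a}{b}}Q_j$ matches and $(P_j,Q_j)\in\R\subseteq f(\R)\subseteq(\texttt{sub}\cup\texttt{refl})(f(\R))$. If $G_j=\inp{a}{b}.P_j$, then $\alpha=\einp{a}{c}$, $P''=P_j\sub{c}{b}$, and $\sum_i G_i'\stra{\einp{a}{c}}Q_j\sub{c}{b}$ matches with $(P_j\sub{c}{b},Q_j\sub{c}{b})\in\texttt{sub}(\R)\subseteq\texttt{sub}(f(\R))$ --- the only place $\texttt{sub}$ is used. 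If $G_j=G_j'$, the same derivation gives $\sum_i G_i'\stra{\alpha}P''$ with $(P'',P'')\in\texttt{refl}(f(\R))$ --- the only place $\texttt{refl}$ is used. A $\tau$-prefixed summand cannot fire a visible action. Handling the transitions of $\sum_i G_i'$ symmetrically gives $(\sum_i G_i,\sum_i G_i')\in b_\alpha((\texttt{sub}\cup\texttt{refl})(f(\R)))$, which finishes the first assertion.

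For the internal case the same scheme applies with $\tau$ in place of $\alpha$ and $\overline{b}$ in place of $\overline{b}_\alpha$: a transition $\sum_i G_i\stra{\tau}P''$ comes from a single summand $G_j\stra{\tau}P''$, and the only summand shapes that can fire a $\tau$ are $G_j=\tau.P_j$, matched by $\sum_i G_i'\stra{\tau}Q_j$ with $(P_j,Q_j)\in\R\subseteq f(\R)\subseteq(id\cup\texttt{refl})(f(\R))$ --- here no substitution appears, which is why $id$ rather than $\texttt{sub}$ is enough --- and $G_j=G_j'$, matched identically with $(P'',P'')\in\texttt{refl}(f(\R))$. With the symmetric direction this yields $(\sum_i G_i,\sum_i G_i')\in b_\tau((id\cup\texttt{refl})(f(\R)))$, which finishes the second assertion.

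I expect no genuine obstacle: the content is bookkeeping, and the point to get right is the asymmetry between the two assertions. A guard can fire a visible action, and when that guard is an input the residuals carry a substitution $\sub{c}{b}$, which is precisely why the visible assertion must be closed up to $\texttt{sub}$; but no guard can fire a $\tau$, so on the internal side the only non-trivial contribution comes from the reflexive summands, handled by $\texttt{refl}$ alone. The other thing I would verify at the outset is that a transition of a guarded sum always comes from a single summand --- since a sum is not a top-level parallel composition, neither \textsc{Comm} nor \textsc{Close} can contribute --- so that the case analyses above are exhaustive.
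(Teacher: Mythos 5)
Your proof is correct and follows essentially the same route as the paper's: the paper isolates the inclusions $\mathtt{sum_g}\subseteq b_{\alpha}\circ(\texttt{sub}\cup\texttt{refl})$ and $\mathtt{sum_g}\subseteq b_\tau\circ(id\cup\texttt{refl})$ and establishes them by exactly your case analysis on which summand fires (input summands forcing $\texttt{sub}$, reflexive summands forcing $\texttt{refl}$, $\tau$ and output summands needing only $id$). The only difference is that you spell out the surrounding bookkeeping --- expansiveness, the $id$ summand of $id\cup\mathtt{sum_g}$, and the respectfulness conjunct --- which the paper leaves implicit.
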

\begin{proof}
	
We prove $\mathtt{sum_g} \subseteq b_\tau \circ (id \cup \texttt{refl})$
and $\mathtt{sum_g} \subseteq b_{\alpha} \circ (\texttt{sub} \cup \texttt{refl})$.

Suppose $\sum_i G_i \stra{\mu} G'$, then $G_{i_0} \stra{\mu}G'$.
\begin{itemize}
	\item If $(G_{i_0},G'_{i_0}) \in \texttt{refl}(\R)$, then 
	$\sum_i G'_i \stra{\mu} G'$ and $(G',G') \in \texttt{refl}(\R)$
	
	\item If $(G_{i_0},G'_{i_0}) \in \texttt{tau}(\R)$, then $\mu = \tau$,
	so $G'_{i_0} = \tau.G''$ with $G' \RR G''$ and
	$\sum_i G'_i \stra{\tau} G''$ and $(G',G'') \in id(\R)$
	
	\item If $(G_{i_0},G'_{i_0}) \in \texttt{out}(\R)$, then $\mu = \out{a}{b}$,
	$G_{i_0} = \out{a}{b}.G'$.
	So $G'_{i_0} = \out{a}{b}.G''$ with $G' \RR G''$ and
	$\sum_i G'_i \stra{\mu} G''$ and $(G',G'') \in \texttt{id}(\R)$.

	\item If $(G_{i_0},G'_{i_0}) \in \texttt{inp}(\R)$,
then $\mu = \einp{a}{c}$, $G_{i_0} = \inp{a}{b}.G''$ with $G' = G''\sub{c}{b}$.
So $G'_{i_0} = \inp{a}{b}.G'''$ with $G'' \RR G'''$ and
$\sum_i G'_i \stra{\mu} G'''\sub{c}{b}$ and $(G',G'''\sub{c}{b}) \in \texttt{sub}(\R)$.
\end{itemize}
\end{proof}

The proof for replication needs also to be changed. We show instead that
$id\cup\texttt{rep}$ is $\overline{b}_{\alpha},\overline{b}_\tau$-compatible
up to $\mathcal{F}_\sim \circ (\texttt{pcomp}\cup id)$. Intuitvely,
the problem is similar to the case of the sum, but because we have 
the law $!G \sim !G|G$, it does not break soundness.

The other proofs can be carried out without any modification, 
and we can then conclude
with the soundness of the whole up-to technique:
\begin{theorem}
	If the weak aliased communication property holds, then
	$(\mathcal{F}_{\gtrsim}\cup id \cup \texttt{refl} \cup \texttt{sub}
	\cup\texttt{res}\cup \texttt{pcomp}\cup\mathtt{sum_g}\cup
	\texttt{rep} \cup \texttt{tau} \cup \texttt{out} \cup
	\texttt{inp})^\omega$ is $\overline{wb}$-sound.
\end{theorem}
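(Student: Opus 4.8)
The plan is to mirror the structure already used for the strong case in Theorem~\ref{t:full:ctxt}, namely: establish $wb_{\alpha}$-compatibility (or $\overline{wb}_\alpha$-compatibility up to something compatible) for each individual context operator, establish $wb_{\alpha}^m,\overline{wb}_\tau$-compatibility for each of them (with $m=2$ only for \texttt{sub}, $m=1$ otherwise), take unions and compositions using Lemmas~\ref{l:bi:compose} and its cousins together with the closure-under-composition lemmas for ``compatible up to'', and finally invoke Theorem~\ref{th:comp:with:snd} with $g=wb_\alpha$, $h=\overline{wb}_\tau$, $m=2$. The only genuinely new ingredients relative to the strong case are: (i) replacing $\mathcal F_\sim$ by $\mathcal F_\gtrsim$ (and noting $\mathcal F_\sim,\mathcal F_\gtrsim$ are $wb_\alpha$-compatible, as stated), (ii) replacing \texttt{sum} by $\mathtt{sum_g}$ (whose compatibility-up-to facts are the lemma just proved), and (iii) the modified replication argument ($id\cup\texttt{rep}$ is $\overline{wb}_\alpha,\overline{wb}_\tau$-compatible up to $\mathcal F_\sim\circ(\texttt{pcomp}\cup id)$). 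Everything else is asserted in the excerpt to ``carry out without modification.''

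Concretely, first I would record the componentwise facts in the weak setting: $id$, \texttt{refl}, \texttt{res} are $\overline{wb}_\alpha,\overline{wb}_\tau$-compatible; $\mathcal F_\gtrsim$ (and $\mathcal F_\sim$) likewise, using that $\gtrsim$ is a weak bisimulation and an expansion, so $\mathcal F_\gtrsim$ is $wb_\alpha$-compatible and, by Lemma~\ref{l:backto:comp} together with the known $wb$-compatibility of $\mathcal F_\gtrsim$ from \cite{DBLP:conf/concur/MadiotPS14}, also $wb_\alpha,\overline{wb}_\tau$-compatible; \texttt{pcomp} is $\overline{wb}_\alpha,\overline{wb}_\tau$-compatible up to \texttt{res}; $id\cup\texttt{tau}$, $id\cup\texttt{out}$ are $\overline{wb}_\alpha,\overline{wb}_\tau$-compatible; \texttt{sub} is $wb_\alpha$-compatible and $wb_\alpha^2,\overline{wb}_\tau$-compatible up to $\mathcal F_\equiv\circ\texttt{res}$ (using the \emph{weak} aliased communication property and the weak analogue of Lemma~\ref{l:dav}); \texttt{inp} is $\overline{wb}_\alpha,\overline{wb}_\tau$-compatible; $id\cup\texttt{inp}$ is $\overline{wb}_\alpha$-compatible up to \texttt{sub}; $id\cup\mathtt{sum_g}$ is $\overline{wb}_\alpha$-compatible up to $\texttt{sub}\cup\texttt{refl}$ and $\overline{wb}_\alpha,\overline{wb}_\tau$-compatible up to $id\cup\texttt{refl}$; and $id\cup\texttt{rep}$ is $\overline{wb}_\alpha,\overline{wb}_\tau$-compatible up to $\mathcal F_\sim\circ(\texttt{pcomp}\cup id)$.

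Then I would assemble them. Let $f$ be the union of all the listed functions together with the functions appearing in the ``up to'' clauses ($\texttt{res}$, $\mathcal F_\equiv$, $\mathcal F_\sim$). Using the composition/union lemmas for $g,h$-compatible functions and for $g,h$-compatible-up-to functions (the three lemmas at the end of Section~\ref{s:comp:with}), one shows that $f^\omega$ absorbs all the ``up to'' auxiliaries — because each auxiliary ($\texttt{res}$, $\mathcal F_\equiv$, $\mathcal F_\sim$, $\texttt{pcomp}$, \texttt{refl}, \texttt{sub}) is itself in the union and is itself $wb_\alpha$- and $wb_\alpha,\overline{wb}_\tau$-compatible (or compatible up to a function already in the union, with that auxiliary idempotent/monotone after closing under $\omega$). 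The payoff is that $f$ is $wb_\alpha$-compatible and $wb_\alpha^2,\overline{wb}_\tau$-compatible. Applying Theorem~\ref{th:comp:with:snd} with $m=2$ gives $f'^\omega$ is $wb_\alpha\cap\overline{wb}_\tau$-sound where $f'=f\cup f^2$; and since $wb_\alpha\cap\overline{wb}_\tau = \overline{wb}$ and $f'^\omega = f^\omega$, we conclude $f^\omega$ is $\overline{wb}$-sound, which is exactly the statement (note $\mathtt{sum_g}\subseteq\texttt{sum}$-style arguments are not needed since we use $\mathtt{sum_g}$ directly).

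The main obstacle is the bookkeeping around the ``up to'' closures in the weak setting, specifically handling the \emph{three} interlocking up-to dependencies — \texttt{pcomp} up to \texttt{res}, $id\cup\texttt{rep}$ up to $\mathcal F_\sim\circ(\texttt{pcomp}\cup id)$, and $id\cup\mathtt{sum_g}$/$id\cup\texttt{inp}$ up to $\texttt{sub}\cup\texttt{refl}$ — simultaneously, since the lemma that turns ``$g,h$-compatible up to $f'$'' into genuine ``$g,h$-compatible'' requires $f'$ to be \emph{idempotent}, which forces one to pass through $(\cdot)^\omega$-closures and check that $\mathcal F_\sim\circ(\texttt{pcomp}\cup id\cup\texttt{res})^\omega$ and $(\texttt{sub}\cup\texttt{refl})^\omega$ are idempotent, monotone, and themselves $wb_\alpha,\overline{wb}_\tau$-compatible; here the subtlety is that \texttt{sub} is only $wb_\alpha^2,\overline{wb}_\tau$-compatible, not $wb_\alpha,\overline{wb}_\tau$-compatible, so one must carefully track the $m=2$ exponent through the composition lemma (the last lemma of Section~\ref{s:comp:with}, using $wb_\alpha\subseteq id$) so that the final $m$ stays $2$ and Theorem~\ref{th:comp:with:snd} still applies. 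Once that accounting is done, no new semantic argument about the $\pi$-calculus is required beyond the per-operator lemmas already stated.
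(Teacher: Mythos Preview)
Your proposal is correct and follows essentially the same approach as the paper, which simply records the modified per-operator lemmas in the weak setting (for $\mathcal F_\gtrsim$, $\mathtt{sum_g}$, $\texttt{rep}$, and the weak aliased communication property for $\texttt{sub}$), asserts that the remaining proofs carry over unchanged, and then aggregates via Theorem~\ref{th:comp:with:snd}. One small slip: $wb_\alpha \not\subseteq id$, so when you invoke the composition lemma requiring $g\subseteq id$ you must take $g=\overline{wb}_\alpha$ rather than $wb_\alpha$; this is harmless since $wb_\alpha$-compatibility of a monotone $f$ implies $\overline{wb}_\alpha$-compatibility, and $\overline{wb}_\alpha\cap\overline{wb}_\tau=\overline{wb}$ still gives the desired conclusion.
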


\medskip

We can now show that both subcalculi also satisfy the weak aliased communication
property.

In asynchronous $\pi$-calculus, we know that outputs, being asynchronous, 
may always be postponed as expressed below.
\begin{lemma}\label{l:key:async}
	If $P \stra{\out{a}{b}}\stra\mu P'$ then 
	$P \stra\mu\stra{\out{a}{b}} P'$.
	
	If $P \stra{\bout{a}{b}}\stra\mu P'$ and $b\notin\fn{\mu}$, then 
	$P \stra\mu\stra{\bout{a}{b}} P'$.
\end{lemma}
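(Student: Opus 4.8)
The plan is to prove Lemma~\ref{l:key:async} directly by induction on the structure of the derivation of the first transition, exploiting the fact that in \Api{} an output $\out{a}{b}$ is not a prefix guarding a continuation: the term $\out{a}{b}$ is itself a standalone process that simply disappears once it has fired. The key observation is that, since $\out{a}{b}.P$ requires $P = \nil$ in \Api, the process reached after an output action contains no residual of that output, so the output action is ``independent'' of whatever follows it. Concretely, if $P \stra{\out{a}{b}} P_1 \stra{\mu} P'$, then I want to locate inside $P$ the parallel component $\out{a}{b}$ responsible for the first step; $P$ decomposes (up to the structure imposed by the SOS rules, not up to $\equiv$) as a context in which this output sits in parallel with the rest, and $P_1$ is obtained from $P$ by replacing that component with $\nil$. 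The second transition $\mu$ must then be derivable already from $P$ with the $\out{a}{b}$ component still present (it plays no role, since it is inert), and after performing $\mu$ we can still fire the output.

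First I would set up the induction on the derivation of $P \stra{\out{a}{b}} P_1$. The base case is \textsc{Out}: $P = \out{a}{b}$ (with $P_1 = \nil$), but then $P_1 = \nil$ has no $\mu$-transition, so the premise is vacuous and there is nothing to prove; the input and $\tau$ base cases do not produce an output label. For the inductive step I would go through \textsc{Par}, \textsc{Res}, \textsc{Sum}, \textsc{Rep}, \textsc{Comm}, \textsc{Close}, \textsc{Open}. The interesting cases are \textsc{Par} and \textsc{Res}. For \textsc{Par}: $P = P_a | P_b$ with, say, $P_a \stra{\out{a}{b}} P_a'$ and $P_1 = P_a' | P_b$. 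Now $P_1 \stra{\mu} P'$ splits into the subcase where the move comes from $P_a'$ (apply the IH to $P_a \stra{\out{a}{b}} P_a' \stra{\mu} \dots$), the subcase where it comes from $P_b$ (then $P_b \stra{\mu} P_b'$ already, and $P = P_a | P_b \stra{\mu} P_a | P_b' \stra{\out{a}{b}} P_a' | P_b' = P'$, after checking the side conditions on bound names — here the second part of the lemma, with $b \notin \fn{\mu}$, is exactly what lets the bound output commute past $\mu$ in the \textsc{Par} side condition), and the subcase of a synchronisation \textsc{Comm}/\textsc{Close} between $P_a'$ and $P_b$, which again reduces via the IH because one of the synchronising transitions comes from $P_a'$. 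The \textsc{Res} and \textsc{Open} cases are handled by pushing the restriction through, again using $b \notin \fn{\mu}$ in the bound-output case to satisfy the \textsc{Res} side condition; \textsc{Sum} and \textsc{Rep} are routine since after the first transition the summation/replication has already been resolved.

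The main obstacle I anticipate is bookkeeping around bound names and $\alpha$-conversion in the bound-output case (the second statement): when $P \stra{\bout{a}{b}} P_1$ the name $b$ is extruded, and to re-fire it after $\mu$ one must ensure $b$ is still fresh for everything $\mu$ touches and that the \textsc{Open}/\textsc{Res} side conditions line up. The hypothesis $b \notin \fn{\mu}$ is the crucial ingredient here, and I would be careful to track where it is used — essentially in every clause where a restriction $\res{b}$ or a \textsc{Par} component must commute past the $\mu$-transition. A subtle point is the \textsc{Close} subcase where $\mu = \tau$ arises from a communication that itself involves a bound output of some name $b'$: one must keep $b$ and $b'$ distinct, which is achievable by $\alpha$-conversion and is compatible with $b \notin \fn{\mu}$ since $b' \notin \fn{\tau}$ trivially but $b'$ may still need to avoid $b$. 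Apart from this, the proof is a straightforward structural induction exploiting that outputs are asynchronous (inert after firing), so I would present the \textsc{Par} and \textsc{Res}/\textsc{Open} cases in full and dispatch the rest as routine.
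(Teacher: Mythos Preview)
The paper does not give a proof of this lemma: it is stated as a known fact about \Api{} (``outputs, being asynchronous, may always be postponed'') and used without further argument. Your induction on the derivation of the first transition is the standard way to establish it and is correct in outline.

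A few small corrections to your case analysis. First, in \Api{} outputs are explicitly forbidden under sums, so the \textsc{Sum} case for the first transition is vacuous, not merely routine. Second, \textsc{Comm} and \textsc{Close} produce a $\tau$ label and therefore cannot be the last rule in the derivation of $P\stra{\out{a}{b}}P_1$ or $P\stra{\bout{a}{b}}P_1$; they only show up, as you yourself observe later, as subcases of the \emph{second} transition within the \textsc{Par} case. Third, your justification ``the replication has already been resolved'' is not accurate for \textsc{Rep}: after $!G$ fires, the replicated process $!G$ is still present in the residual. The correct argument is simply to apply the induction hypothesis to the strictly smaller premise derivation $!G\,|\,G \stra{\out{a}{b}} P_1$ and then re-apply \textsc{Rep}. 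Finally, in the \textsc{Par} subcase where the second transition comes from the untouched component $P_b$, you need $\bn{\mu}\cap\fn{P_a}=\emptyset$ rather than the weaker $\bn{\mu}\cap\fn{P_a'}=\emptyset$ you inherit; this is handled by $\alpha$-converting the bound names of $\mu$, which is legitimate but worth saying explicitly. With these adjustments the induction goes through.
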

Thus, if $P\wtra{\out{a}{b}}\wtra{\einp{c}{b}}P'$, then
$P\wred\stra{\out{a}{b}}\stra{\einp{c}{b}}\wred P'$ so \Api{} satisfies the weak
aliased communication property.

\bigskip

For immediately available names, the reasoning is reversed. As inputs are immediately
available, we can show that they can be preponed:
\begin{lemma}\label{l:key:recep}
	If $P \stra\mu\stra{\einp{a}{b}} P'$ and $a\notin\bn{\mu}$, then
	$P \stra{\einp{a}{b}}\stra\mu P'$. 
\end{lemma}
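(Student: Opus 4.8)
The statement is to be read within the subcalculus of typable processes: it is false in the full $\pi$-calculus (witness $\tau.\inp{a}{b}.\nil$), and the typing discipline for immediately available names is exactly what rules such processes out. The proof goes by induction on the height of the derivation of the first transition $P\stra{\mu}P''$, with a case analysis on its last rule, using Barendregt's convention throughout so that, when $\mu$ is a bound output, its extruded name is distinct from $a$, from the object of the input action, and from all free names of the processes involved. The one fact I would isolate as an auxiliary claim is: \emph{if $\typ{\emptyset}{R}$ then $R$ has no transition $R\stra{\einp{a}{b}}R'$}, and moreover $\typ{\emptyset}{R\sub{c}{x}}$ whenever $\typ{\emptyset}{R}$. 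Both hold because, by the typing rules, reaching an input redex starting from an $\emptyset$-typed term only passes through restrictions (the contexts for $\tau.\,$, $\out{a}{b}.\,$ and sum components are empty), so every input such a term can perform has a restricted, hence non-observable, subject; and such a term uses no free name as an input subject, so substitution cannot create one.

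For the prefix rules \textsc{Inp}, \textsc{Out}, \textsc{Tau} and for \textsc{Sum}, the continuation $P''$ is $\emptyset$-typed (directly in \textsc{Out}/\textsc{Tau}; of the form $R\sub{c}{x}$ with $\typ{\emptyset}{R}$ in \textsc{Inp}; and in \textsc{Sum} of the form $G'$ with $\typ{\bn{\mu}}{G'}$ by the subject reduction lemma, which forces $\typ{\emptyset}{G'}$ unless $\mu$ is a bound output, a subcase excluded by $a\notin\bn{\mu}$). By the auxiliary claim there is then no transition $P''\stra{\einp{a}{b}}P'$, so these cases are vacuous. The remaining rules only require commuting the input past $\mu$: in \textsc{Res} and \textsc{Open} the input occurs inside the (typable) inner process and, since $a$ differs from the restricted/extruded name, is unaffected by the restriction, so the induction hypothesis moves it before $\mu$ and we reassemble, the side conditions of \textsc{Res}/\textsc{Open} being met thanks to the hygiene convention. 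In \textsc{Par} the input comes either from the residual of the acting component, and is moved before $\mu$ by the induction hypothesis applied to that component, or from the idle component $P_2$, in which case the two transitions act on disjoint parallel components and commute outright (the condition $\bn{\mu}\cap\fn{P_2}=\emptyset$ survives because the residual of an input transition adds only the received object, which the convention has kept away from $\bn{\mu}$). \textsc{Comm} and \textsc{Close} are handled identically: the input $\einp{a}{b}$ originates in one of the two synchronising components, and the induction hypothesis inside that component lets us prepone it before its contribution to the communication, after which \textsc{Comm}/\textsc{Close} is reapplied. Finally, for \textsc{Rep}, from $!G\mid G\stra{\mu}G''\stra{\einp{a}{b}}P'$ the induction hypothesis applies to the strictly shorter sub-derivation $!G\mid G\stra{\mu}G''$, giving $!G\mid G\stra{\einp{a}{b}}H\stra{\mu}P'$, and rule \textsc{Rep} then yields $!G\stra{\einp{a}{b}}H\stra{\mu}P'$.

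The main obstacle is the auxiliary claim about $\emptyset$-typed processes together with its stability under the substitution of rule \textsc{Inp} and under one reduction step (via the subject reduction lemma): this is the sole place where the typing is used, and it is what collapses the potentially problematic cases (a guard or a sum over the input) to vacuous ones. A secondary, bookkeeping difficulty is preserving the side conditions of \textsc{Par}, \textsc{Res}, \textsc{Open}, \textsc{Close} after the commutation; this is dealt with once and for all by $\alpha$-converting the bound names of $\mu$ and the object of the input away from one another and from the ambient free names, and observing that a free input action does not enlarge the relevant sets of names beyond what the convention already forbids. The choice of induction on derivation height rather than structural induction on $P$ is needed only to feed the \textsc{Rep} case.
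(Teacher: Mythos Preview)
Your proof is correct. The paper states Lemma~\ref{l:key:recep} without proof, so there is nothing to compare against; your argument is the natural one and goes through as written. Two small remarks: first, your auxiliary claim is most cleanly phrased in the slightly stronger form ``if $\typ{\Gamma}{R}$ and $R\stra{\einp{a}{b}}R'$ then $a\in\Gamma$'', which is what your justification actually proves and what the \textsc{Sum} case with a nonempty $\bn{\mu}$ would need; second, that very subcase cannot in fact arise, since summands are guarded processes and therefore never perform a bound output, so your handling of \textsc{Sum} can be simplified to ``the continuation is $\emptyset$-typed, hence vacuous''.
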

Again, this lemma ensures that if $P\wtra{\out{p}{b}}\wtra{\einp{q}{b}}P'$, then
$P\wred\stra{\out{p}{b}}\stra{\einp{q}{b}}\wred P'$ and so we can conclude.

We can notice a symmetry between Lemmas~\ref{l:key:async} and \ref{l:key:recep},
the former delays outputs while the latter anticipates inputs.

This shows that both calculi are also a congruence for the weak bisimilarity
and that the up-to context technique is sound.

\begin{remark}
	Note that, in the (weak) aliased communication 
	property, we quantify over all names $a,c$, being the subject of
	the output and input respectively. If we impose 
	that the property holds for only some names, for instance if only 
	a subset of names are asynchronous or immediately available, then we have 
	the soundness of up-to substitution restricted to those
	asynchronous names (resp. immediately available names), 
	and up-to input that only carry
	asynchronous names (resp. immediately available names).
\end{remark}

\bibliographystyle{plain}
\bibliography{bib-func.bib}

\appendix
\newpage
\section{Language with lookahead}

Theorem~\ref{th:comp:with:snd} is sufficient to derive result with operator enabling after
more than transitions.

Consider the language
$$P ::= {\tt op}(P) \OR a.P \OR \nil$$
with the following semantic
\begin{mathpar}
	\inferrule{ }{a.P \stra{a} P}
	\and
	\inferrule{P \stra{a} \stra{a} P'}{{\tt op}(P)\stra{a} P'}
\end{mathpar}

It is known that up-to-bisimilarity-and-context is unsound.

However, one could tweak this language to make it sound.

\subsection{Two prefixes}
A first way would be to consider two prefixes $a$ and $b$, with $a.P \stra{a}P$ and
$b.P \stra{b}P$ but only $\inferrule{P \stra{b} \stra{b} P'}{{\tt op_1}(P)\stra{a} P'}$.

One could define simply $b_a,b_b$ two bisimulations using actions $a$ and $b$ respectively.

Then with ${\tt op_1}(\R) = \set{({\tt op_1}(P),{\tt op_1}(Q)) \OR P \RR Q}$, we have
${\tt op_1} \circ b_b \subseteq b_b \circ {\tt op_1}$ (there cannot be any $b$ transition)
and ${\tt op_1} \circ (b_b^2 \cap b_a) \subseteq b_a \circ {\tt op_1}$.

\subsection{Second version}

First, we need to show a variation of Theorem~\ref{th:comp:with:snd}. Here, we add the
condition that $g \subseteq id$ along with a rather technical condition $(\star)$ 
(stated to remain as general 
as possible), which is always verified in practice (for instance, if $f$ is expansive).
It allows us to show the soundness when we add an extra $h$ in front of $g^m$, i.e 
in the case where $f$ is $h\circ g^m, h$-compatible. 

\begin{theorem}\label{th:comp:with:snd:bis}
	If $f,g,h$ are monotone, $g \subseteq id$ and $f$ is both $g$-compatible, $h\circ g^m,h$-compatible 
	(with $m \geq 1$) and verifies the following:
	\[\forall n\in\mathbb{N}, f^n\circ(\bigcup_{i\leq mn+1}f^i) \subseteq \bigcup_{i\in\mathbb{N}}f^i \tag{$\star$}\]
	then $f$ is $g\cap h$-sound via $f^\omega$.
\end{theorem}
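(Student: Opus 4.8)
The plan is to mimic the proof of Theorem~\ref{th:comp:with:snd}, keeping track of where the extra $h$ in front of $g^m$ forces adjustments, and to use the hypothesis $g\subseteq id$ to absorb spurious copies of $g$. Starting from $\R\subseteq((g\cap h)\circ f)(\R)$ we again extract $\R\subseteq(g\circ f)(\R)$ and $\R\subseteq(h\circ f)(\R)$. Iterating the first inclusion $m$ times and using $g$-compatibility of $f$ gives $\R\subseteq(g^m\circ f^m)(\R)\subseteq(g^m\circ f')(\R)$ with $f'=\bigcup_{i\le m}f^i$, exactly as before. The difference is that now I want to conclude something about $h\circ g^m$ rather than $g^m$ alone: since $\R\subseteq(h\circ f)(\R)$ and $h$ is monotone, applying $h$ to $\R\subseteq(g^m\circ f')(\R)$ and feeding $\R$ into the outer $h$ yields $\R\subseteq(h\circ g^m\circ f')(\R)$; combining with $\R\subseteq(h\circ f')(\R)$ (valid since $h$ monotone, $m\ge 1$) gives $\R\subseteq((h\circ g^m\cap h)\circ f')(\R)$.

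Next I would lift $(h\circ g^m,h)$-compatibility of $f$ to $f'$. By the analogue of Lemma~\ref{l:backto:comp} I need $f$ to be both $h\circ g^m$-compatible \emph{and} $(h\circ g^m,h)$-compatible; the latter is a hypothesis, and the former should follow from $g$-compatibility of $f$ together with monotonicity of $h$ (since $f\circ h\circ g^m \subseteq h\circ f\circ g^m \subseteq h\circ g^m\circ f^m$, and here is where $g\subseteq id$ lets me collapse $g^m\circ f^m$ back inside the companion-style closure, or alternatively I pass to $f'$ directly). So $f'$ is $(h\circ g^m\cap h)$-compatible, and then $f'^\omega$ — or rather $f^\omega$, after checking $f'^\omega=f^\omega$, which is where condition $(\star)$ enters: it guarantees that the ascending iterates of $f'$ stay within the iterates of $f$, so the two closure operators agree. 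This gives $f^\omega(\R)\subseteq(h\circ g^m\cap h)(f^\omega(\R))$, hence $f^\omega(\R)\subseteq h(f^\omega(\R))$.

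Finally, the $g$-direction: as in Theorem~\ref{th:comp:with:snd}, from $g$-compatibility of $f$ and $\R\subseteq(g\circ f)(\R)$ one gets $f^\omega(\R)\subseteq g(f^\omega(\R))$ (this uses only that $f$ is $g$-compatible and monotone, plus $g$ monotone). Intersecting the two conclusions yields $f^\omega(\R)\subseteq(g\cap h)(f^\omega(\R))$, which is $g\cap h$-soundness via $f^\omega$. I expect the main obstacle to be the bookkeeping around condition $(\star)$: one must argue carefully that the closure $f'^\omega$ built from $f'=\bigcup_{i\le m}f^i$ coincides with $f^\omega$, since the compatibility machinery naturally produces $f'^\omega$ while the statement asserts soundness via $f^\omega$; $(\star)$ is precisely the hypothesis ensuring $f^n\circ f'$ never escapes $\bigcup_i f^i$, so that each application of $f'$ inside $f'^\omega$ can be rewritten as a finite composite of $f$'s. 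The secondary subtlety is verifying that $f$ being $(h\circ g^m,h)$-compatible transfers to $f'$, for which the $g\subseteq id$ assumption — giving $g^m\subseteq g^k$ for $k\le m$ and hence $h\circ g^m\subseteq h\circ g^k$ — is the key ingredient to align the different powers of $g$ arising from $f,f^2,\dots,f^m$.
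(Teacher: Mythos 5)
There is a genuine gap, and it sits at the heart of your argument. You try to reduce everything to ordinary compatibility by claiming that $f$ is $h\circ g^m$-compatible, justified by the chain $f\circ h\circ g^m\subseteq h\circ f\circ g^m\subseteq h\circ g^m\circ f^m$. The first inclusion requires $f\circ h\subseteq h\circ f$, i.e.\ $h$-compatibility of $f$, which is not among the hypotheses and is essentially the conclusion you are trying to reach; neither $g$-compatibility nor monotonicity of $h$ gives it to you. Indeed the whole point of ``compatibility with'' is that the available hypothesis, $f\circ(h\circ g^m\cap h)=f\circ h\circ g^m\subseteq h\circ f$, \emph{consumes} the $g^m$: the right-hand side is $h\circ f$, not $h\circ g^m\circ f$, so the property does not reproduce itself and $f$ is not $h\circ g^m$-compatible in the ordinary sense (recall Lemma~\ref{l:damien}: $\texttt{sub}$ satisfies the hypotheses of this theorem yet lies outside the companion, so no reduction to genuine compatibility can work). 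Concretely, your plan proves $f(\R)\subseteq h(f^\omega(\R))$ but stalls at $f^2(\R)$: to push a second $f$ past $h$ you need a fresh supply of $g^m$, which your fixed-budget inclusion $\R\subseteq((h\circ g^m\cap h)\circ f')(\R)$ does not provide.

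The missing idea is to let the budget of $g$'s grow with the number of $f$'s to be commuted: the paper proves by induction on $n$ that $f^n\circ h\circ g^{mn}\subseteq h\circ f^n$, and, for each $n$, derives $\R\subseteq(h\circ g^{mn}\circ f^{mn+1})(\R)$ from $\R\subseteq(h\circ f)\circ(g\circ f)^{mn}(\R)$ before applying $f^n$; condition $(\star)$ is then used exactly to absorb $f^n\circ\bigl(\bigcup_{i\leq mn+1}f^i\bigr)$ into $f^\omega$ (monotonicity alone does not let $f^n$ distribute over the finite union), not to identify $f'^\omega$ with $f^\omega$ as you suggest. Your treatment of the $g$-direction ($f^\omega(\R)\subseteq g(f^\omega(\R))$ from $g$-compatibility) is correct and matches the paper, but the $h$-direction needs to be redone along these lines.
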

\begin{proof}
	Note that if $\R \subseteq ((g\cap h)\circ f)(\R)$ then $\R \subseteq (g\circ f)(\R)$
	and $\R \subseteq (h\circ f)(\R)$.
	
	We already know that $f$ is $g$-sound via $f^\omega$ so 
	$f^\omega(\R) \subseteq g(f^\omega(\R))$.

	As $g \subseteq id$, we have $h \circ g^m \cap h = h \circ g^m$. 
	Thus $f \circ h \circ g^m \subseteq h \circ f$.
	
	Now, let's show that the same inclusion holds for $h$.
	We prove by induction on $n$ that
	\[f^n \circ h \circ g^{mn} \subseteq h \circ f^n \tag{$\triangle$}\]
	
	For $n = 0$, trivial.
	For $n \geq 0$, 
	\begin{align*}
		f^{n+1} \circ (h \circ g^{m(n+1)})&=  f^n \circ f \circ h\circ g^m \circ g^{mn}\\
		&\subseteq f^n \circ h \circ f \circ g^{mn} &f\text{ is $g^m,h$-compatible}\\
		&\subseteq f^n \circ h \circ g^{mn} \circ f &f\text{ is $g$-compatible}\\
		&\subseteq h \circ f^n \circ f &\text{by induction}
	\end{align*}
	
	By monotonicity of $g$ and $f$, we can prove by a simple induction that
	$\R \subseteq (g\circ f)(\R)$ implies $\R \subseteq (g\circ f)^{mn}(\R)$.
	Then, we have $\R \subseteq (h\circ f)(\R) \subseteq (h \circ f)\circ (g\circ f)^{mn}(\R)$
	
	So $\R \subseteq (h \circ g^{mn} \circ f^{mn+1})(\R)$ by compatibility and
	if we note $f^\omega_{mn+1} = \bigcup_{i\leq mn+1}f^i$, as $g$ and $h$ are monotone
	we have	$\R \subseteq (h \circ g^{mn} \circ f^\omega_{mn+1})(\R)$.
	
	Thus:
	\begin{align*}
		\R &\subseteq (h \circ g^{mn}\circ f^\omega_{mn+1})(\R)\\
		f^n(\R) &\subseteq (f^n \circ h \circ g^{mn} \circ f^\omega_{mn+1})(\R)
		& f \text{ monotone}\\
		&\subseteq (h \circ f^n \circ f^\omega_{mn+1})(\R)
		&\text{ by }(\triangle)\\
		&\subseteq (h \circ f^\omega)(\R) & \text{by }(\star)
	\end{align*}
	
	Therefore, $f^\omega(\R) \subseteq h(f^\omega(\R))$
	
	In the end, we have $f^\omega(\R) \subseteq (g \cap h)(f^\omega(\R))$.
\end{proof}

This new theorem allows us to create a new operator ${\tt op_2}$ with
$\inferrule{P\stra{a}\stra{b} P'}{{\tt op_2}(P)\stra{a}P'}$ meaning it can now perform
action $a$ as its first action. In that case, ${\tt op_2} \circ (b_a \circ b_b \cap b_a)
\subseteq b_a \circ {\tt op_2}$.

We may also take $\inferrule{P\stra{\mu}\stra{b} P'}{{\tt op_2}(P)\stra{a}P'}$ 
with $\mu\in\set{a,b}$ and, noting $b$ for the whole bisimulation (i.e $b_a\cap b_b$),
show ${\tt op_2} \circ (b \circ b_b \cap b)
\subseteq b \circ {\tt op_2}$.

\subsection{Chaining further}
Compared to the previous examples where we split a bisimulation $b$ by splitting
the set of actions in two $b_\tau,b_{\alpha}$, one could also build incrementally smaller
bisimulation, for instance proving $b_{\alpha}^m, b$-compatibility instead of 
$b_{\alpha}^m,b_\tau$-compatibility.

This approach may require a bit more redundancy to prove compatibility results, 
but it does make statements easier to read. Here, we aim to decompose $b$ using
more than 2 functions, so we will use this incremental approach.

\begin{lemma}\label{l:chain:comp}
If $f$ is $g_1\cap g_2$-compatible and $g_2^n, g_3$-compatible, then
$f$ is $(g_1\cap g_2)^n \cap g_3$-compatible.
\end{lemma}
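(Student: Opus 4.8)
The plan is to chain the two hypotheses just like the second bullet of Lemma~\ref{l:backto:comp}, but iterating the $g_2$-component $n$ times. Concretely, I want to massage $f\circ((g_1\cap g_2)^n\cap g_3)$ by first noting that $(g_1\cap g_2)^n \subseteq g_2^n$, so the argument relation sits inside $g_2^n\cap g_3$, and also inside $(g_1\cap g_2)^n$ itself; this lets me push $f$ through using $g_1\cap g_2$-compatibility (applied $n$ times) on one copy and $g_2^n,g_3$-compatibility on the other copy, then intersect.

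First I would make precise that $f$ is monotone: from $f$ being $g_1\cap g_2$-compatible we do not a priori get monotonicity, so I would either add that hypothesis or observe that it is in force throughout this appendix (as elsewhere in the paper, e.g.\ Lemma~\ref{l:backto:comp} needs it). Assuming monotonicity, the first step is the inclusion-splitting: write $(g_1\cap g_2)^n\cap g_3 = \big((g_1\cap g_2)^n\big)\cap\big(g_2^n\cap g_3\big)$, which holds because $(g_1\cap g_2)^n\subseteq g_2^n$. Then by Lemma~\ref{l:mon:cap},
\[
f\circ\big((g_1\cap g_2)^n\cap g_3\big) \subseteq \big(f\circ(g_1\cap g_2)^n\big)\cap\big(f\circ(g_2^n\cap g_3)\big).
\]
For the left conjunct, I iterate $g_1\cap g_2$-compatibility: $f\circ(g_1\cap g_2)\subseteq(g_1\cap g_2)\circ f$, and repeating $n$ times (using monotonicity of $g_1\cap g_2$ to keep pushing $f$ rightward through the remaining copies) gives $f\circ(g_1\cap g_2)^n\subseteq(g_1\cap g_2)^n\circ f$. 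For the right conjunct, $g_2^n,g_3$-compatibility gives directly $f\circ(g_2^n\cap g_3)\subseteq g_3\circ f$. Intersecting the two yields $f\circ\big((g_1\cap g_2)^n\cap g_3\big)\subseteq\big((g_1\cap g_2)^n\circ f\big)\cap\big(g_3\circ f\big) = \big((g_1\cap g_2)^n\cap g_3\big)\circ f$, which is exactly $(g_1\cap g_2)^n\cap g_3$-compatibility.

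The main obstacle — really the only subtlety — is the iteration step for the left conjunct: showing $f\circ k^n\subseteq k^n\circ f$ from $f\circ k\subseteq k\circ f$ when $k=g_1\cap g_2$ is monotone. This is a routine induction: $f\circ k^{n+1} = (f\circ k)\circ k^n \subseteq (k\circ f)\circ k^n = k\circ(f\circ k^n)\subseteq k\circ(k^n\circ f) = k^{n+1}\circ f$, where the last inclusion uses the induction hypothesis together with monotonicity of $k$. I would state this as a one-line sub-induction (or cite that $g$-compatibility implies $g^n$-compatibility, which the paper already uses implicitly, e.g.\ in the proof of Theorem~\ref{th:comp:with:snd}) and then assemble the three displays above into the final chain.
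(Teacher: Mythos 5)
Your proof is correct and follows essentially the same route as the paper: rewrite $(g_1\cap g_2)^n\cap g_3$ as $(g_1\cap g_2)^n\cap g_2^n\cap g_3$, split with Lemma~\ref{l:mon:cap}, apply the two compatibility hypotheses to the two conjuncts, and recombine. The extra care you take about monotonicity and about the sub-induction showing that $g$-compatibility implies $g^n$-compatibility is welcome but only makes explicit what the paper leaves implicit.
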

\begin{proof}
\begin{align*}
	f \circ ((g_1 \cap g_2)^n \cap g_3) &= f \circ ((g_1 \cap g_2)^n \cap g_2^n \cap g_3)\\
	&\subseteq f \circ (g_1 \cap g_2)^n \cap f \circ (g_2^n \cap g_3) 
	& \text{by Lemma~\ref{l:mon:cap}}\\
	&\subseteq (g_1 \cap g_2)^n \circ f \cap g_3 \circ f & \text{ by compatibility}\\
	&= ((g_1\cap g_2)^n \cap g_3) \circ f
\end{align*}
\end{proof}

This Lemma allows us to chain Lemma~\ref{l:backto:comp} into one compatible function.
For simplicity, we will assume $f$ is expansive.

\begin{theorem}\label{th:comp:with:snd:ter}
	If $f, g_i$ are monotone, $f$ is expansive, $g_1 \supseteq \dots \supseteq g_n$,
	and $f$ is $g_1$-compatible and for some $(m_i)_{i\leq n}$ with $m_i \geq 1$, $g_i^{m_i},g_{i+1}$-compatible,
	then $f$ is $g_n$-sound via $f^\omega$.
\end{theorem}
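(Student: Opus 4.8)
The plan is to reduce to ordinary compatibility (Lemma~\ref{l:comp:snd}) by iterating Lemma~\ref{l:chain:comp} into a single function. First I would build a finite chain of monotone functions $L_1,\dots,L_n$ by $L_1 := g_1$ and $L_{k+1} := L_k^{m_k}\cap g_{k+1}$, and prove by induction on $k$ the invariants (i) $f$ is $L_k$-compatible and (ii) $L_k\subseteq g_k$. Invariant (ii) and monotonicity of $L_k$ are immediate ($L_1=g_1$; $L_{k+1}\subseteq g_{k+1}$ by construction; intersections and compositions of monotone functions are monotone). For (i), the base case is the hypothesis that $f$ is $g_1$-compatible. For the step, (ii) gives $L_k\cap g_k = L_k$, so $f$ is $L_k\cap g_k$-compatible; combined with the hypothesis that $f$ is $g_k^{m_k},g_{k+1}$-compatible, Lemma~\ref{l:chain:comp}---applied with the roles of its $g_1,g_2,g_3,n$ played by $L_k,g_k,g_{k+1},m_k$---gives that $f$ is $(L_k\cap g_k)^{m_k}\cap g_{k+1}=L_{k+1}$-compatible. (The degenerate case $n=1$ has no $m_i$ and is just Lemma~\ref{l:comp:snd} for $f$ and $g_1$.) Note that $L_n\subseteq g_n$, so $L_n$ is the function we will use to derive $g_n$-soundness.

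The engine of the soundness argument is the standard fact underlying Lemma~\ref{l:comp:snd}: $f^\omega$ is idempotent and is $g$-compatible whenever $f$ is $g$-compatible and $g$ is monotone. Consequently, whenever $\R\subseteq g(f^\omega(\R))$ one gets $f^\omega(\R)\subseteq f^\omega(g(f^\omega(\R)))\subseteq g(f^\omega(f^\omega(\R)))=g(f^\omega(\R))$; and then, iterating the monotone $g$ starting from the inclusion $f^\omega(\R)\subseteq g(f^\omega(\R))$ just obtained, one gets $\R\subseteq f^\omega(\R)\subseteq g^j(f^\omega(\R))$ for every $j\ge 0$ (using that $f^\omega$ is extensive, which holds since $f$ is expansive).

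Now assume $\R\subseteq(g_n\circ f)(\R)$. Since $g_1\supseteq\dots\supseteq g_n$, this yields $\R\subseteq(g_i\circ f)(\R)$, hence $\R\subseteq g_i(f^\omega(\R))$, for every $i\le n$. I would then show $\R\subseteq L_k(f^\omega(\R))$ for all $k$ by induction: for $k=1$ this is $\R\subseteq g_1(f^\omega(\R))$, just established; for the step, the induction hypothesis $\R\subseteq L_k(f^\omega(\R))$ together with the engine above applied to $g:=L_k$ (using invariant (i)) gives $\R\subseteq L_k^{m_k}(f^\omega(\R))$, and intersecting with $\R\subseteq g_{k+1}(f^\omega(\R))$ gives $\R\subseteq(L_k^{m_k}\cap g_{k+1})(f^\omega(\R))=L_{k+1}(f^\omega(\R))$. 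Taking $k=n$ and applying the engine once more with $g:=L_n$ gives $f^\omega(\R)\subseteq L_n(f^\omega(\R))\subseteq g_n(f^\omega(\R))$ by $L_n\subseteq g_n$. Since $f^\omega$ is extensive, this is exactly $g_n$-soundness of $f$ via $f^\omega$.

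The only thing to get right is the bookkeeping of the powers and where the chain ordering $g_1\supseteq\dots\supseteq g_n$ is used: once to turn the single hypothesis $\R\subseteq g_n(f(\R))$ into $\R\subseteq g_i(f(\R))$ for all $i$, and once (via invariant (ii)) to collapse $L_k\cap g_k$ to $L_k$ so that Lemma~\ref{l:chain:comp} applies at each stage. The one genuinely delicate point is that the ``iterate a monotone $g$'' step must be anchored at $f^\omega(\R)$---which is a post-fixed point of $g$---and not at $\R$ itself; everything else is the same bookkeeping already present in the proof of Theorem~\ref{th:comp:with:snd}.
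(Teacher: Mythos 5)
Your construction is the paper's construction: your $L_k$ is exactly the paper's $h_k$ ($h_1=g_1$, $h_{k+1}=g_{k+1}\cap h_k^{m_k}$), the compatibility induction via Lemma~\ref{l:chain:comp} is the same (the paper does not state your invariant (ii) explicitly because $g_k$ is literally a conjunct of $h_k$, but it is the same step), and the soundness half is the same double induction. So this is essentially the paper's proof, and it is correct up to one point of bookkeeping.

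That point is your ``engine''. You anchor the iteration at $f^\omega(\R)$ and justify $f^\omega(\R)\subseteq g(f^\omega(\R))$ by asserting that $f^\omega$ is idempotent. For a function that is merely monotone, $f^\omega=\bigcup_k f^k$ need not be idempotent: one only gets $f^\omega\subseteq f^\omega\circ f^\omega$ for free, and the reverse inclusion $f^k(\bigcup_j f^j(\R))\subseteq\bigcup_j f^j(\R)$ requires some continuity of $f$ with respect to the increasing union (it holds for the concrete context-closure functions used later, but not from the stated hypotheses). The paper sidesteps this by never leaving the realm of finite powers: it proves $\R\subseteq(h_i\circ f^{n_i})(\R)$ with $n_i=\prod_{j<i}m_j$, using expansiveness of $f$ (in the form $f\subseteq f^{n_im_i}$) to align the exponents of the two conjuncts, and only at the very end applies Lemma~\ref{l:comp:snd} to the single function $f^N$ together with $(f^N)^\omega=f^\omega$ (which does follow from expansiveness). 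Your proof is repaired by the same device: replace $f^\omega(\R)$ by $f^{n_k}(\R)$ throughout the second induction, or add a continuity hypothesis making $f^\omega$ idempotent. Everything else in your write-up, including where the chain ordering $g_1\supseteq\dots\supseteq g_n$ is used, matches the paper.
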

\begin{proof}
First, we define $h_i$ with $h_1 = g_1$ and $h_{i+1} = g_{i+1}\cap h_{i}^{m_i}$.
We show by induction on $i$ that $f$ is $h_i$-compatible.

When $i = 1$, this is true by assumption.

For $i = 2$, as $f$ is $h_1$-compatible, it is also $h_1^{m_1}$-compatible. Thus,
by Lemma~\ref{l:backto:comp}, $f$ is $g_2\cap h_1^{m_1}$-compatible.

For $i \geq 2$, as $f$ is $g_{i}\cap h_{i-1}^{m_{i-1}}$-compatible 
and $g_i^{m_i},g_{i+1}$-compatible, by Lemma~\ref{l:chain:comp},
$f$ is $g_{i+1}\cap (g_{i}\cap h_{i-1}^{m_{i-1}})^{m_i}$-compatible.

Next, we will show by induction that $\R \subseteq (h_i \circ f^{n_i})(\R)$ for some $n_i\geq 1$.
As $\R \subseteq (g_n \circ f)(\R)$, then $\R \subseteq (g_i \circ f)(\R)$ for all $i$,
so $\R \subseteq (h_1 \circ f)(\R)$.

Then if $\R \subseteq (h_i \circ f^{n_i})(\R)$, $\R \subseteq (h_i \circ f^{n_i})^{m_i}(\R)$,
and as $f$ is $h_i$-compatible, $\R \subseteq (h_i^{m_i} \circ f^{n_i*m_i})(\R)$.
Additionally, $\R \subseteq (g_{i+1} \circ f)(\R)$. As $f$ is expansive, 
$f \subseteq f^{n_i*m_i}$, so $\R \subseteq (g_{i+1} \circ f^{n_i*m_i})(\R)$.

Thus, $\R \subseteq ((g_{i+1}\cap h_i^{m_i}) \circ f^{n_i*m_i})(\R)$.

To sum up, $f$ is $h_n$-compatible and $\R \subseteq (h_n \circ f^{N})(\R)$ for
some $N \geq 1$. As $(f^N)^\omega = f^\omega$, we obtain that $f^\omega \subseteq g_n (f^\omega (\R))$.
\end{proof}

\subsection{Unverified approaches}

Having integers $n\geq 1$ as prefixes, i.e $n.P \stra{n} P$, and
\begin{mathpar}
	\inferrule{P\stra{n}\stra{n} P'}{{\tt op_3}(P)\stra{n+1}P'}
	\and
	\inferrule{P\stra{n}\stra{m} P'}{{\tt op_4}(P)\stra{n+m}P'}
\end{mathpar}

With $b_n$ the bisimulation obtained by looking at transitions $\stra{m}$ with $m\leq n$.

We should have ${\tt op_i} \circ b_1 \subseteq b_1 \circ {\tt op_i}$ for $i=3,4$,
and for all $n\geq 1$, ${\tt op_i} \circ (b_n^2 \cap b_{n+1}) \subseteq b_{n+1} \circ
{\tt op_i}$ for $i=3,4$.

Thus, for all $n\geq 1$, ${\tt op_3}$ and ${\tt op_4}$ are valid up-to techniques for $b_n$.
\engue{What about $b$?}

\begin{mathpar}
	\inferrule{P\stra{n}\stra{m} P'}{{\tt op_5}(P)\stra{n}P'}n>m
\end{mathpar}
We should have ${\tt op_5} \circ b_1 \subseteq b_1 \circ {\tt op_5}$,
and for all $n$, ${\tt op_5} \circ (b_{n+1}\circ b_n \cap b_{n+1}) \subseteq b_{n+1} \circ
{\tt op_5}$.

\engue{Maybe we can redo Theorem~\ref{th:comp:with:snd:ter} as with Theorem~\ref{th:comp:with:snd:bis} ?}

\end{document}